\newtheorem{lm}{Lemma}[section]
\newtheorem{fact}[lm]{Fact}
\newtheorem{thm}[lm]{Theorem}
\newtheorem{prob}{Problem}
\newcounter{senumi}[section]
\newcounter{senumip}[section]
\newcounter{temp}[section]
\def\thesenumi{\thesection.\arabic{senumip}}
\def\p@senumip\thesenumip{\thesenumi}
\newenvironment{senumerate}%
    {\begin{list}%
        {\hspace{-2em}(\thesenumi)}%
        {\usecounter{senumip}}
        \setcounter{senumip}{\value{temp}}
    }%
    {\setcounter{temp}{\value{senumip}}
     \end{list}}
\newcounter{penumi}[section]
\newcounter{ptemp}[section]
\newcounter{ppenumi}[section]
\newcounter{pptemp}[section]
\def\theppenumi{\theptemp.\arabic{ppenumi}}
    {\begin{list}%
        {(\theppenumi)}%
        {\usecounter{ppenumi}\setlength{\rightmargin}{\leftmargin}}
        \setcounter{ppenumi}{\value{pptemp}}
    }%
    {\setcounter{pptemp}{\value{ppenumi}}
     \end{list}}
\newcommand{\polsat}[1]{\textsc{PolSat}\left( {\m #1} \right)}
\newcommand{\npc}{\textsf{NP}-complete\xspace}
\newcommand{\conpc}{\textsf{co-NP}-complete\xspace}
\newcommand{\ptime}{\textsf{P}\xspace}
\newcommand{\rptime}{\textsf{RP}\xspace}
\newcommand{\prptime}{\textsf{(R)P}\xspace}
\newcommand{\m}[1]{{\uppercase {\bf{#1}}}}
\newcommand{\set}[1]{{\left\{ {#1} \right\} }}
\newcommand{\ci}{\subseteq}
\newcommand{\card}[1]{\left| #1 \right|}
\newcommand{\intv}[2]{I\left[#1,#2\right]}
\newcommand{\vpair}[2]{{{#1}\choose{#2}}}
\renewcommand{\leq}{\leqslant}
\renewcommand{\geq}{\geqslant}
\renewcommand{\mapsto}{\longmapsto}
\newcommand{\join}{\vee}
\newcommand{\meet}{\wedge}
\newcommand{\con}[1]{{\sf Con\:\m{#1}}}
\newcommand{\pol}[1]{{\rm Pol\:\m #1}}
\newcommand{\po}[1]{{\mathbf {#1}}}
\newcommand{\tn} [1]{{\bf {#1}}}
\newcommand{\typ}{{\rm typ}}
\newcommand{\typset}[1]{\typ\set{#1}}
\renewcommand{\o}[1]{\overline {#1}}
\newcounter{ttable}
\newcommand{\comm}[2]{\left[ #1 , #2 \right]}
\newcommand{\map}{\longrightarrow}
\newcommand{\congruent}[1]{\stackrel{#1}{\equiv}}
\newcommand{\h}[1]{\widehat{#1}}
\newcommand{\fj}{\varphi}
\newcommand{\mute}[1]{}
\newcommand{\gProblem}[2]{\ensuremath{\operatorname{\textup{\textsc{{#2}}}}
		\ifthenelse{\equal{#1}{}}{}{\!\left( {#1} \right)}}}
\renewcommand{\polsat}[1]{\gProblem{#1}{PolSat}}
\newcommand{\poleqv}[1]{\gProblem{#1}{PolEqv}}
\newcommand{\ceqv}[1]{\gProblem{#1}{CEqv}}
\newcommand{\csat}[1]{\gProblem{#1}{CSat}}
\newcommand{\progpolsat}[1]{\gProblem{#1}{ProgSat}}
\newcommand{\progcsat}[1]{\gProblem{#1}{ProgCSat}}
\newcommand{\cm}{congruence modular }
\newcommand{\pupi}{PUPI }
\newcommand{\sr}[1]{{\mathsf{sr}\left(#1\right)}}
\newcommand{\charr}{\mathsf{char}}
\newcommand{\charrset}[1]{\charr\set{#1}}
\newcommand{\setm}{-}
\newcommand{\cc}{c}
\newcommand{\ee}{e}
\newcommand{\aai}{a_i}
\newcommand{\cci}{c_i}
\newcommand{\ddi}{d_i}
\newcommand{\alphai}{\alpha_i}
\newcommand{\betai}{\beta_i}
\newcommand{\alphami}{\alpha^-_i}
\newcommand{\gammai}{\gamma_i}
\newcommand{\gammao}{\gamma_{1-i}}
\newcommand{\gammaz}{\gamma_0}
\newcommand{\gammaj}{\gamma_1}
\newcommand{\vi}{V_i}
\newcommand{\vz}{V_0}
\newcommand{\vj}{V_1}
\newcommand{\fji}{\fj_i}
\newcommand{\fjpi}{\fj^+_i}
\newcommand{\psii}{\psi_i}
\newcommand{\alpham}{\alpha^-}
\newcommand{\z}{\mathbb{Z}}
\newcommand{\N}{\mathbb{N}}
\DeclareMathOperator*{\amper}{\scalerel*{\&}{\sum}}
\newcommand{\matr}[2]{\mbox{End}{\left(\z_{#1}^{#2}\right)}}
\newcommand{\progg}[4]{\left(#1,#2,#3,#4\right)}
\newcommand{\prog}[2]{\left(#1\right)\!\left[#2\right]}
\newcommand{\progb}[3]{\left(#1\right)\!\left[#2,#3\right]}
\newcommand{\fcirc}[1]{#1^\circ}
\newcommand{\beq}{\mbox{\tt ==}}
\renewcommand{\b}{\textsf{b}}
\newcommand{\true}{1}
\newcommand{\false}{0}
\newcommand{\bool}{\set{\false,\true}}
\newcommand{\sacik}{\po {sat}}
\newcommand{\sumpk}[2]{\Sigma_{(#1,#2)}}
\newcommand{\aexp}{a}           
\newcommand{\zero}{e}
\newcommand{\ccc}{c}    
\newcommand{\s}{s}      
\newcommand{\csize}{\lambda}   
\newcommand{\ccand}{\mathsf{AND}}
\newcommand{\ccmod}{\mathsf{MOD}}
\newcommand{\ccor}{\mathsf{OR}}
\newcommand{\sdiv}[1]{\Delta\left(#1\right)}    
\newcommand{\ppdiv}{\delta}
\newcommand{\pdiv}[1]{\ppdiv\left(#1\right)}    
\newcommand{\ar}[1]{\mu\left({#1}\right)}       
\newcommand{\maxar}[1]{\mu\left({\m #1}\right)} 
\newcommand{\cdhh}{CDH\xspace}
\newcommand{\ethh}{ETH\xspace}
\newcommand{\rethh}{rETH\xspace}
\newcommand{\prethh}{(r)ETH\xspace}
\begin{document}

\title{Nonuniform Deterministic Finite Automata \\over finite algebraic structures}

\author{Paweł M. Idziak}
\address{Department of Theoretical Computer
Science,\\ Jagiellonian University,\\
Kraków, Poland}
\email{pawel.idziak@uj.edu.p}

\author{Piotr Kawałek}
\address{Institute of Discrete Mathematics and Geometry,\\ TU Wien, Austria \\[10pt] Department of Theoretical Computer
Science,\\ Jagiellonian University,\\
Kraków, Poland}
\email{piotr.kawalek@tuwien.ac.at}

\thanks{Piotr Kawałek: This research was funded in whole or in part by National Science Centre, Poland \#2021/41/N/ST6/03907. For the purpose of Open Access, the author has applied a CC-BY public copyright licence to
 any Author Accepted Manuscript (AAM) version arising from this submission.
Funded by the European Union (ERC, POCOCOP, 101071674). Views
 and opinions expressed are however those of the author(s) only and do not necessarily reflect those
 of the European Union or the European Research Council Executive Agency. Neither the European
 Union nor the granting authority can be held responsible for them.}

\author{Jacek Krzaczkowski}

\address{Department of Computer Science,\\
Maria Curie-Skłodowska University,\\
Lublin, Poland}
\email{krzacz@umcs.pl}
\thanks{Jacek Krzaczkowski: This research was funded in whole or in part by National Science Centre, Poland \#2022/45/B/ST6/02229. For the purpose of Open Access, the author has applied a CC-BY public copyright licence to any Author Accepted Manuscript (AAM) version arising from this submission.}

\begin{abstract}
Nonuniform deterministic finite automata (NUDFA) over monoids were invented by Barrington in \cite{Barrington85} to study boundaries of nonuniform constant-memory computation. Later, results on these automata helped to indentify interesting classes of groups for which  equation satisfiability problem (\polsat{}) is solvable in (probabilistic) polynomial-time \cite{GoldmannR02, IdziakKKW22-icalp}. Based on these results, we present a full characterization of groups, for which the identity checking problem (called \poleqv{}) has a probabilistic polynomial-time algorithm. We also go beyond groups, and propose how to generalise the notion of NUDFA to arbitrary finite algebraic structures. We study satisfiability of these automata in this more general setting. As a consequence, we present full description of finite algebras from congruence modular varieties for which testing circuit equivalence $\ceqv{}$ can be solved by a probabilistic polynomial-time procedure. In our proofs we use two computational complexity assumptions: randomized Expotential Time Hypothesis and Constant Degree Hypothesis.
\end{abstract}

\maketitle

\newpage

\section{Introduction}
There are many interactions between mathematics and (theoretical) computer science. 
Many branches of these sciences influence each other, sometimes in quite surprising ways.
Relatively recent example of such an influence is so-called algebraic approach to Constraint Satisfaction Problem (CSP) which led to complete classification of computational complexity of CSP \cite{Bulatov-dichotomy,Zhuk-dichotomy}.
It is really impressive how in this case (universal) algebra, combinatorics, logic, computational complexity and algorithmic work together to give new results in each of these fields.

Another example of synergy between different fields of mathematics and theoretical computer science can be observed on the borderline of circuit complexity, automata theory and (universal) algebra.
The most significant example here is the role of monoids played in automata theory and formal languages.

Usually deterministic finite automaton (DFA) is determined by 
an alphabet $\Sigma$
acting over a set $Q$ of states
by a function $\delta : \Sigma \times Q \ni (\sigma,q) \mapsto \sigma\cdot q \in Q$.
This action can be extended (in an obvious way) to the action of the free monoid $\Sigma^*$. 
To decide if a word $w\in\Sigma^*$ is accepted by a particular DFA
we need to endow it with a starting state $q_0$ and a set $F\ci Q$ of accepting states.  
Then $w$ gets accepted if $w\cdot q_0\in F$.
For our purposes we prefer, first to treat the set $Q^Q$ of functions 
as the monoid with $f\cdot g = g\circ f$, 
and then to treat the action $\delta$ as a function $a : \Sigma \map Q^Q$
given by $a(\sigma)(q)=\sigma\cdot q$.
Now, the word $\sigma^1\ldots\sigma^n$ gets accepted 
if $a(\sigma^1)\cdot\ldots\cdot a(\sigma^n)\in S$, 
where $S$ consists of transitions determined by the words $w\in\Sigma^n$ 
satisfying $w\cdot q_0\in F$.

Before restating Barrington's definition of Non-uniform Deterministic Finite Automata\break (NUDFA) over monoids \cite{Barrington86} we note that $a(\sigma^1)\cdot\ldots\cdot a(\sigma^n)$
is nothing else but\break $\po t_n(a(\sigma^1),\ldots,a(\sigma^n))$, 
where $\po t_n(x_1,\ldots,x_n)=x_1\cdot\ldots\cdot x_n$. 
In NUDFA over the monoid $\m M$ to accept the word from $\Sigma^n$ 
we are going to relax the term $x_1\cdot\ldots\cdot x_n$
to an arbitrary semigroup term $\po t(x_1,\ldots,x_k)$ 
and replace one action $a:\Sigma\map Q^Q$ by a bunch of functions $a^x:\Sigma\map M$,
one for each variable of $\po t$. 
Now a $\po t$-program (with inputs from $\Sigma^n$ represented by $n$-variable word $b_1\ldots b_n$) 
consists of:
\begin{itemize}
  \item a set of $k$-instructions, one for each variable $x$ of $\po t$, 
        of the form $\iota(x)=(b^x,a^x)$, 
        where $b^x$ is one of the variables $b_i$,
  \item and a set $S\ci M$ of accepting values.
\end{itemize}
Finally, a NUDFA over $\m M$  is a sequence (possibly even nonrecursive) of programs
$(\po t_n, n, \iota_n, S_n)_{n\in\N}$ 
with $\po t_n=(x_1,\ldots,x_{k_n})$ being some terms of $\m M$, 
$S_n \ci M$ and $\iota_n$ being the instructions for the variables of $\po t_n$. 
A word $b^1\ldots b^n\in\Sigma^n$ gets accepted by such a NUDFA if\break
$\po t_n(a^{x_1}(b^{x_1}),\ldots,a^{x_{k_n}}(b^{x_{k_n}}))\in S_n$. 
Originally Barrington considered mainly the Boolean case where $\Sigma=\set{0,1}$ 
to study computational boundaries of non-uniform constant-memory computation. 
Such automata can be used to compute the Boolean functions of the form $\set{0,1}^n\map\set{0,1}$. 
They also give an interesting algebraic insight into the internal structure of the class $NC^1$ \cite{BarringtonST90}. Note that in the Boolean case $\Sigma = \{0,1\}$ the function $a^x$ is given by the pair of values $a^x(0), a^x(1)$. In such the case,  for simplicity we write $\iota(x) = (\b^x,a^x(0),a^x(1))$.

A natural question that arises here is 
whether the language accepted by a particular NUDFA over $\m M$ is nonempty.
This problem reduces to:
\begin{itemize}
  \item[]$\progpolsat{\m M}$: \quad
        Decide if a given program over $\m M$ accepts at least one word.
\end{itemize}
The problem $\progpolsat{}$ proved itself to be extremely useful in studying groups (and monoids)
for which determining if an equation has a solution (\polsat{}) is in $\ptime$.
In fact the proof of Goldmann and Russell in \cite{GoldmannR02}
that each nilpotent group $\m G$ has tractable $\polsat{\m G}$
modifies a polynomial time algorithm for $\progpolsat{\m G}$.
A study of connections between \polsat{} and \progpolsat{} for finite monoids is given in \cite{BarringtonMMTT00}.
Recently a complete classification of finite groups $\m G$ with $\progpolsat{\m G} \in \rptime$, together with its consequences for \polsat{} has been provided by Idziak, Kawałek and Krzaczkowski in \cite{IdziakKKW22-icalp}.

Now the results of \cite{IdziakKKW22-icalp} allows us to complete the long term extensive investigations \cite{BurrisL04, HorvathS11, FoldvariH19, Weiss20, IdziakKKW22TOCS}
on equivalence problem \poleqv{} of polynomials (i.e. terms with some variables already evaluated) over finite groups.

The lower bound in our characterization relies
on the randomized version of the Exponential Time Hypothesis (\rethh),
while the upper bounds explore the so-called Constant Degree Hypothesis (\cdhh).
This hypothesis, introduced in \cite{BarringtonST90}, can be rephrased to state
that for a fixed integer $d$, a prime $p$ and an integer $m$ which is not just a power of $p$,
any 3-level circuits of the form $\ccand_d\circ\ccmod_m\circ\ccmod_p$
require exponential size to compute $\ccand_n$ of arbitrary large arity $n$ (in which $\ccand_d$ gates are in the input layer, $\ccmod_m$ gates are in the middle layer, and there is one $\ccmod_p$ gate in the output layer). 
The best known lower bound, due to Chattopadhyay et al. \cite{chat-lowerbounds}, 
for the size of $\ccand_d\circ\ccmod_m\circ\ccmod_p$ computing $\ccand_n$
is only superlinear.
Much earlier \cdhh has been considered in many different contexts.
Already in \cite{BarringtonST90} the case $d=1$, i.e. no $\ccand_d$ layer, has been confirmed. 
Also restrictions put either on the number of $\ccand_d$ used locally, 
or on the local structure of $\ccand_d\circ\ccmod_m$ fragments, 
allowed Grolmusz and Tardos \cite{GrolmuszT00, Grolmusz01} to confirm \cdhh. 
Very recently Kawałek and Wei\ss{}  \cite{KW-symmetric-gates} confirmed \cdhh for symmetric circuits. 

In this paper, under these two complexity hypothesis (i.e. \rethh and \cdhh), the characterization of finite groups with tractable \progpolsat{} from \cite{IdziakKKW22-icalp} is applied  
to get unexpectedly different characterization for tractable \poleqv{}. 

\begin{thm}
\label{eqv-groups}
Let $\m G$ be a finite group.
Assuming \rethh and \cdhh
the problem \poleqv{\m G} is in \rptime if and only if $\m G$ is solvable and has a  nilpotent normal subgroup $\m H$ with the quotient $\m G/\m H$ being also nilpotent.
\end{thm}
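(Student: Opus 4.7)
The plan is to prove both directions of the characterisation by observing that the stated condition amounts to $\m G$ having Fitting length at most two (the subgroup $\m H$ being any nilpotent normal subgroup whose quotient is also nilpotent).

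For the upper bound, given such a decomposition and a polynomial $p(x_1,\ldots,x_n)$ over $\m G$, I would proceed in two stages. First, reduce $p$ modulo $\m H$ to obtain a polynomial $\bar p$ over the nilpotent quotient $\m G/\m H$; the Burris--Lawrence algorithm \cite{BurrisL04} then decides $\bar p\equiv \bar e$ in deterministic polynomial time, and if this fails the answer is already \textsc{no}. Otherwise $p$ maps $\m G^n$ into $\m H$, and the task reduces to deciding whether this $\m H$-valued function is identically $e$. For this I would invoke the tractability of $\progpolsat{\m G}$ from \cite{IdziakKKW22-icalp}: the negation ``$p(\bar x)\neq e$ for some $\bar x$'' is exactly an accepting instance for a program over $\m G$ with accepting set $\m H\setminus\set{e}$, so the randomised algorithm of \cite{IdziakKKW22-icalp} settles it in polynomial time and its complement yields the desired \rptime procedure for $\poleqv{\m G}$.

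For the lower bound, assume $\m G$ is outside the claimed class. If $\m G$ is not solvable, the classical Barrington \cite{Barrington86} simulation of bounded-depth formulas by constant-width programs makes $\poleqv{\m G}$ \conpc, hence not in $\rptime$ under \rethh. If $\m G$ is solvable but has Fitting length at least three, then inside the iterated Fitting chain one finds commutator patterns of nesting depth three that cross multiple prime layers. Following the hardness direction of \cite{IdziakKKW22-icalp}, such patterns can be used to encode circuits of the form $\ccand_d\circ\ccmod_m\circ\ccmod_p$ as polynomial identities over $\m G$, and \cdhh forces such encodings to have exponential size, which in the equational setting translates to conditional hardness of $\poleqv{\m G}$ incompatible with $\rptime$ under $\rethh$.

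The main obstacle is expected to be the second stage of the upper bound: turning the $\progpolsat{\m G}$ algorithm of \cite{IdziakKKW22-icalp} into an algorithm for $\poleqv{\m G}$. One must check that restricting the accepting set to $\m H\setminus\set{e}$ still fits the input format required by the $\progpolsat$ procedure, and that the randomised guarantees are strong enough to certify $p\equiv e$ with high confidence rather than merely non-zero probability. Settling this step is the technical heart of the proof, and explains why exactly Fitting length two marks the tractability frontier in both problems.
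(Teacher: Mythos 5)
Your upper bound has a fundamental gap: you invoke tractability of $\progpolsat{\m G}$ from \cite{IdziakKKW22-icalp}, but \cref{thm:progsat} requires that $\m G$ itself possess a normal \emph{$p$-subgroup} $\m G_p$ with $\m G/\m G_p$ nilpotent. This is strictly stronger than the hypothesis of \cref{eqv-groups}, which only asks for a nilpotent normal $\m H$ (possibly involving several primes) with nilpotent quotient. For instance, take $\m H = \z_3\times\z_5$ and let $\z_2$ act by inversion on both factors; the resulting $\m G$ has nilpotent $\m H$ with nilpotent quotient $\z_2$, yet $\m G/\z_3 \cong \m D_5$, $\m G/\z_5 \cong \m S_3$ and $\m G$ itself are all non-nilpotent, and the Sylow $2$-subgroup is not normal. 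So there is no normal $p$-subgroup giving a nilpotent quotient, and $\progpolsat{\m G}$ is not known to be tractable; indeed by \cref{thm:progsat} it is presumably hard. The paper sidesteps exactly this problem: it decomposes $\m H$ into its Sylow factors, shows each complementary piece $\m H'_j$ (elements of $\m H$ of order coprime to $p_j$) is normal in $\m G$, and observes that each quotient $\m G/\m H'_j$ \emph{does} satisfy the hypothesis of \cref{thm:progsat} (its normal $p_j$-subgroup is $\m H/\m H'_j$, with quotient $\m G/\m H$). Since $\bigcap_j H'_j=\{1\}$, an identity holds in $\m G$ iff it holds in every $\m G/\m H'_j$, and each $\poleqv{\m G/\m H'_j}$ reduces to $|G/H'_j|-1$ instances of $\polsat{\m G/\m H'_j}$. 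Your two-stage idea of first testing modulo $\m H$ with Burris--Lawrence and then handling the $\m H$-valued remainder does not repair this, because the second stage still needs the nonexistent $\progpolsat{\m G}$ algorithm for the whole group.

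On the lower bound, the content is roughly right (it is a citation to \cite{IdziakKKW22TOCS}), but you have the two hypotheses playing the wrong roles: \cdhh is invoked in the paper only to justify the \emph{upper} bounds (it guarantees that accepted sets of small programs are large enough for random sampling), while the hardness side rests solely on \rethh via subexponential encodings of 3-SAT. Saying that \cdhh ``forces encodings to be exponential, hence hardness under \rethh'' reverses the logic; the hardness argument does not use \cdhh at all.
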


A surprising part of these investigations is that such characterization can not only be done,
but that it can be stated, in terms of algebraic structure of the groups.
In fact this reveals another connection between (universal) algebra
and circuit complexity theory. 

The goal of this paper is to leave the group realm and generalize \cref{eqv-groups} 
to a much broader setting of algebras. As we will see shortly this generalization is two-fold.
First we generalize the concept of NUDFas to cover automata working over arbitrary finite algebraic structure $\m A$.
This requires to define a program over $\m A$ and can be done by simply replacing the monoid $\m M$ by the algebra $\m A$ and assume that this time $\po t$ is a term of the algebra $\m A$.

\begin{wrapfigure}{l}{0.39\textwidth}
\begin{center}
\includegraphics[width=0.93\linewidth]{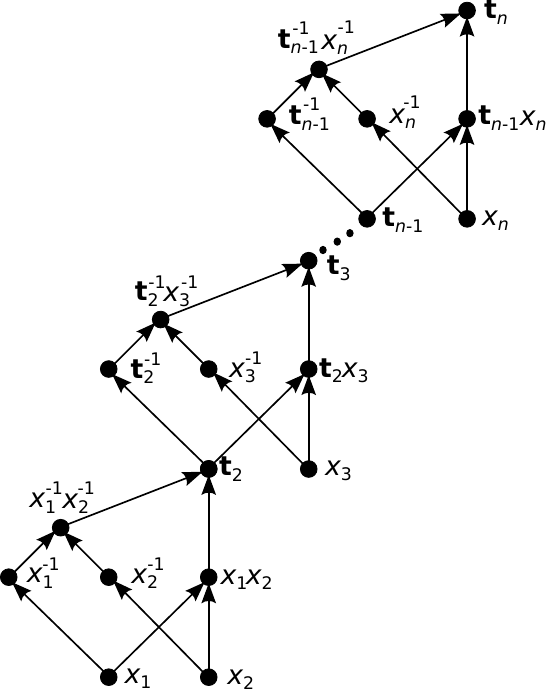}
{\small Compressing the size of $\po t_n$.}
\\{\small \copyright Idziak, Krzaczkowski \cite{IdziakK22}}
\label{fig:comm-circuit}
\end{center}

\end{wrapfigure}

Second, in general algebraic context it is not clear which operations are to be chosen to be the basic ones.
And this choice may be extremely important from computational point of view.
Recall after \cite{IdziakK22}, that adding the binary commutator operation 
$[x,y]=x^{-1}y^{-1}xy$ to the language of a group may exponentially shorten the size of the input.
Indeed the term $\po t_n(x_1,\ldots,x_n) =[\ldots [[x_1,x_2],x_3] \ldots, x_n]$ has linear size if the commutator operation is allowed, while after writing this term in the pure group language (of multiplication and the inverse) we see that the size $\card{\po t_n}$ of $\po t_n$ is 
$2\card{\po t_{n-1}}+2$, as 
$\po t_n(x_1,\ldots,x_n)=
\po t_{n-1}(x_1,\ldots,x_{n-1})^{-1}\cdot x_n^{-1}\cdot\po t_{n-1}(x_1,\ldots,x_{n-1})\cdot x_n$, 
so that $\card{\po t_n}$ is exponential on $n$.
Actually for the alternating group $\m A_4$ it is shown in \cite{HorvathS12} that 
$\polsat{\m A_4}$ is in $\ptime$, while after endowing $\m A_4$ with the commutator operation 
(and therefore shortening the size of inputs) the problem becomes \npc.
A solution to this phenomena has been proposed by Idziak and Krzaczkowski in \cite{IdziakK22} by presenting a term by the algebraic circuit that computes it. 
For example $\po t_n(x_1,\ldots,x_n)$ can be computed by the circuit of size $6n-5$ as shown by Figure \ref{fig:comm-circuit}.

\medskip\noindent
Representing polynomials of an algebra with algebraic circuits leads to a modified version of \progpolsat{} which we explore in this paper.
\begin{itemize}
  \item[]$\progcsat{\m A}$: \quad
        Decide if a given program $(\po t, n, \iota, S)$  over $\m A$ accepts at least one word, where $\po t$ is
        given by a circuit over $\m A$. 
\end{itemize}

Measuring the size of the input, i.e. the expression of the form $\po p=\po q$ by the lengths of the polynomials $\po p$ and $\po q$ or by the sizes of their algebraic circuits used to compute $\po p$ and $\po q$ give rise to either \polsat{} or \csat{} in the satisfiability setting 
or to \poleqv{} or \ceqv{} in the equivalence setting.
Note here that \cite{IdziakK22} argues that the complexity of $\csat{\m A}$ and $\ceqv{\m A}$ is independent of which term operations of the algebra $\m A$ are chosen to be the basic ones.
This independence gives a hope for a characterization of algebras $\m A$ with tractable 
$\csat{\m A}$ or $\ceqv{\m A}$ in terms of algebraic structure of $\m A$. 
Actually already a series of papers 
\cite{IdziakKK20,Weiss20, IdziakKK22STACS,IdziakKKW22-icalp,IdziakK22,Kompatscher21} 
enforces a bunch of such necessary algebraic conditions for an algebra to have \csat{} or \ceqv{} tractable.
Not surprisingly solvability and nilpotency are among these conditions.

To define these two notions of solvability and nilpotency outside the group realm we need a notion of a commutator. However we need to work with a commutator $\comm{\alpha}{\beta}$ of two congruences $\alpha,\beta$ (that in the group setting correspond to normal subgroups) instead of a commutator of elements of an algebra. 
For more details on the definition and the properties of commutator, we refer to the book \cite{fm} and Section \ref{sec-notions-1}.
Here we only note that this concepts of commutator of congruences works smoothly only in some restricted setting of the so called congruence modular varieties, i.e. equationally definable classes of algebras with modular congruence lattices.
Fortunately this setting includes groups, rings, quasigroups, loops, Boolean algebras, Heyting algebras, lattices and almost all algebras related to logic.
In groups, rings or Boolean/Heyting algebras the congruences are determined by normal subgroups, ideals or filters respectively.
Obviously this new concept of commutator of normal subgroups coincides with the old one. 
The commutator of two ideals $I,J$ of a commutative ring 
is simply their algebraic product $I\cdot J$, 
while the commutator of filters in a Boolean/Heyting algebra is their intersection. 
Now we can say that a congruence $\alpha$ is abelian, nilpotent or solvable if  $\comm{\alpha}{\alpha}=0_{\m A}$, 
$\comm{\ldots\comm{\comm{\alpha}{\alpha}}{\alpha}}{\ldots\alpha}=0_{\m A}$
or
$\comm{\comm{\comm{\alpha}{\alpha}}{\comm{\alpha}{\alpha}}}
{\ldots\comm{\comm{\alpha}{\alpha}}{\comm{\alpha}{\alpha}}}=0_{\m A}$ respectively (for some number of nested commutators). 
Here, $0_{\m A}$ is the identity relation/congrueence of $\m A$.
The algebra $\m A$ itself is said to be abelian, nilpotent or solvable if the total congruence $1_{\m A}$ collapsing everything is abelian, nilpotent or solvable, respectively.

Note that, for nilpotent groups, boolean programs (of NUDFAs) compute $\ccand$ functions only of bounded arity, 
i.e. for each nilpotent group $\m G$ there is a constant $k$ such that $\ccand_k$ is computable by no program over $\m G$ \cite{BarringtonST90}. 
This nonexpressibility phenomena does not transfer to nilpotent algebras in general congruence modular context. The most natural example here is the algebra
$(\z_6; +,\%2)$, i.e. the group $(\z_6; +)$ endowed with the unary operation $\%2$ computing the parity.
In this algebra all the circuits of the form $\ccmod_2\circ\ccmod_3$ can be modelled so that $\ccand_n$ can be expressed for all $n$ (however by exponential size of the circuits).
This action of the prime $2$ acting over prime $3$ cannot occur in nilpotent groups.
Indeed, due to the Sylow theorem, each finite nilpotent group is a product of $p$-groups.
This decomposition prevents interaction between different primes, as they occur on different stalks/coordinates. And the lack of such interactions is crucial in bounding the arity of expressible $\ccand_n$'s. 
Also in our considerations the finite nilpotent algebras that decompose into a product of algebras of prime power order occurs naturally. 
They are known as supernilpotent  algebras  \cite{bulatov-kom, aichmud-2010}.
We will return to this concept of supernilpotent algebras and its relativization to supernilpotent congruences in Section \ref{sec-notions-1}. 

Now we are ready to state the other two main results of the paper.

\begin{thm}
\label{thm:cm-progcsat}
Let $\m A$ be a finite algebra from a congruence modular variety. 
Assuming \rethh and \cdhh
the problem $\progcsat{\m A}$ is in \rptime if and only if $\m A$ is nilpotent 
and has a supernilpotent congruence $\sigma$ with supernilpotent quotient $\m A/\sigma$ 
and such that all cosets of $\sigma$ have sizes that are powers of the same prime number $p$.
\end{thm}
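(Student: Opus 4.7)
The proof splits into the lower bound (necessity) and the upper bound (\rptime algorithm). For the lower bound, the plan is first to dispose of the non-nilpotent case: if $\m A$ is not nilpotent then by the already established results of Kompatscher \cite{Kompatscher21} and \cite{IdziakKK22STACS} the problem $\csat{\m A}$ is already hard, and since every polynomial equation reduces to a program with trivial term $\po t(x)=x$, hardness transfers to $\progcsat{\m A}$. So I may assume $\m A$ nilpotent. Using the commutator calculus available in congruence modular varieties (see \cite{fm}) together with the characterization of supernilpotent algebras as direct products of prime-power algebras \cite{bulatov-kom,aichmud-2010}, I would analyze when no congruence $\sigma$ of $\m A$ can satisfy all three requirements at once: that $\sigma$ be supernilpotent, that $\m A/\sigma$ be supernilpotent, and that the $\sigma$-cosets all be of $p$-power size for one common prime. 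Failure of this condition must, by the prime decomposition of abelian sections, expose either a three-prime ``stacking'' or an unavoidable two-prime interaction in the best candidate quotient. In either situation I would simulate, inside a program over $\m A$, a program of the shape excluded for groups in \cite{IdziakKKW22-icalp}, and replay the CDH (or \rethh) lower bound from that paper.

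For the upper bound, I would design a randomized algorithm in the Goldmann--Russell style \cite{GoldmannR02}, now extended to the CM setting. Given a program $(\po t,n,\iota,S)$ over $\m A$ with $\po t$ presented as an algebraic circuit, I first push everything through the quotient map $\m A \map \m A/\sigma$ to obtain an induced program over the supernilpotent algebra $\m A/\sigma$. By \cite{bulatov-kom,aichmud-2010} this quotient decomposes as a product of algebras of prime-power order, and on each factor $\progcsat{}$ is tractable by the corresponding case of the group/algebra machinery of \cite{IdziakKKW22-icalp}. This yields, with sufficient success probability, the $\sigma$-class of the output value for some accepting input. Then, for each such target $\sigma$-class of $\m A$ meeting the accepting set $S$, I translate the original program into a program acting inside that fixed coset. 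Because all cosets of $\sigma$ have size equal to a power of the \emph{same} prime $p$, the Mal'cev term of the CM variety turns this restricted problem into a $\progcsat{}$ instance over a supernilpotent $p$-structure, to which \cite{IdziakKKW22-icalp} again applies. Composing the two layers and amplifying the success probability gives the required \rptime procedure.

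The hard part, as always, is the lower bound: without the ambient group structure, the CDH-based reduction must be carried out from algebraic circuits rather than from words over a monoid, and one has to locate, purely from the commutator-theoretic failure of the two-step supernilpotent-with-one-prime condition, a sub-section of $\m A$ over which a bounded-depth $\ccand_d \circ \ccmod_m \circ \ccmod_p$-style pattern provably forces exponential size. Connecting the abstract commutator obstruction with the concrete circuit obstruction is the technical core; once this is done the reduction to \cite{IdziakKKW22-icalp} is routine. The upper bound, by contrast, reduces fairly cleanly to existing tractability results together with careful bookkeeping on the sizes of the induced circuits at each level of the $\sigma$-decomposition.
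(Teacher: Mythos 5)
Your lower-bound sketch is in the right spirit and roughly matches the paper's strategy: eliminate non-solvable and non-nilpotent cases via TCT types $\tn 3$, $\tn 4$ and \cite{IdziakKK22STACS}, then exploit the commutator and supernilpotent-rank analysis. The paper splits the remaining analysis into $\sr{\m A}\geq 3$ (handled via \cite{IdziakKK20,Kompatscher21}) and $\sr{\m A}=2$ with no admissible $\sigma$ (handled via \cref{lm:hard}). The latter rests on the Chinese-Remainder device of \cref{lm:pseudo-and}: encoding 3-SAT as simultaneous divisibility of the number of violated clauses by two coprime prime powers of magnitude roughly $\sqrt{\ell}$, realized inside $\m A$ by $\beta$-interpolation on two minimal sets of different characteristics meeting only at a single base point (\cref{lm:beta-int}). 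Your proposal gestures at ``replaying'' the group lower bound but does not identify where the subexponential-size reduction actually comes from, and one cannot simply reduce to the group case. Note also that the lower bound uses only \rethh; your hedge ``CDH (or \rethh)'' is misplaced, since CDH enters only on the algorithmic side.

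The substantive gap is in your upper bound. You propose to first solve the quotient problem over $\m A/\sigma$, extract ``the $\sigma$-class of the output value for some accepting input,'' and then restrict the original program to a program acting inside that coset. This does not work as a satisfiability procedure: the value $\prog{\po p}{\iota}(\o b)$, and hence its $\sigma$-class, depends on the unknown input $\o b$ being searched for, so one cannot fix the target coset in advance, and the set of inputs whose quotient output lands in a given class is not itself a tractable object to restrict to. The paper's \rptime algorithm is not a layered search at all. Rather, \cref{thm:2supernil-circuit-early} shows that the Boolean function computed by a size-$\ell$ program over $\m A$ can also be computed by an $\ccand_d\circ\ccmod_m\circ\ccmod_p$-circuit of size $O(\ell^c)$ with $d,m,p,c$ depending only on $\m A$. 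CDH then implies that programs over $\m A$ computing $\ccand_n$ require size $2^{\Omega(n)}$, and, via \cite[Proposition 9]{IdziakKKW22-icalp}, that the set of words accepted by any size-$\ell$ program is either empty or of density at least $\ell^{-c}$. The algorithm is simply to sample $\ell^{O(1)}$ uniformly random Boolean words and accept if any one of them is accepted. Your proposal neither constructs the circuit translation (the technical core of Section \ref{section:easy}, where the decomposition along a central congruence appears --- but at the level of a circuit-size bound, not as an algorithmic layer-peeling) nor invokes the density argument that makes random sampling correct; without both, it does not yield a working \rptime procedure.
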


\begin{thm}
\label{thm:cm-ceqv}
Let $\m A$ be a finite algebra from a congruence modular variety. 
Assuming \rethh and \cdhh
the problem $\ceqv{\m A}$ is in \rptime if and only if $\m A$ is nilpotent 
and has a supernilpotent congruence $\sigma$ with supernilpotent quotient $\m A/\sigma$. 
\end{thm}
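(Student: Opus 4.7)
The plan is to follow the blueprint of \cref{thm:cm-progcsat} while exploiting the greater flexibility of identity testing over satisfiability. The workhorse in both directions is the fact that $\ceqv{\m B}$ is in \ptime for every supernilpotent $\m B$ from a congruence modular variety (see, e.g., \cite{Kompatscher21}).

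For the upper bound, assume $\m A$ is nilpotent with a supernilpotent congruence $\sigma$ and supernilpotent quotient $\m A/\sigma$. Given circuits $\po p$ and $\po q$, I would test equivalence in two stages. First, check $\po p \equiv \po q$ in $\m A/\sigma$ by running the supernilpotent $\ceqv$ algorithm on the projected circuits; if this fails, $\po p \not\equiv \po q$ already. Otherwise $\po p$ and $\po q$ agree modulo $\sigma$, and using a difference term from the CM commutator calculus one obtains a polynomial-size circuit $\po r$ whose values lie in a fixed $\sigma$-block, with $\po r$ constant iff $\po p \equiv \po q$. Checking that $\po r$ is constant then reduces to a $\ceqv$-instance on the supernilpotent block-algebra structure associated with $\sigma$, again in \rptime.

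For the lower bound, suppose $\m A$ fails the condition. If $\m A$ is not nilpotent, $\ceqv{\m A}$ is already intractable under \rethh by earlier work on non-nilpotent CM algebras (e.g.\ \cite{IdziakKK20,IdziakK22}). If $\m A$ is nilpotent but admits no supernilpotent $\sigma$ with supernilpotent quotient, its commutator central series must involve an irreducible three-layer interaction between congruences of at least two distinct prime characteristics. Adapting the \cdhh-based constructions behind \cref{thm:cm-progcsat} and \cite{IdziakKKW22-icalp,IdziakK22}, such a configuration can be exploited to encode arbitrarily large $\ccand_n$ by polynomial-size depth-3 circuits of the forbidden form $\ccand_d\circ\ccmod_m\circ\ccmod_p$, contradicting \cdhh unless $\ceqv{\m A}$ requires super-polynomial time.

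The main obstacle is the nilpotent case of the lower bound. Because the $\ceqv$ condition drops the ``same prime $p$'' clause appearing in \cref{thm:cm-progcsat}, some mixed-prime nilpotent algebras are tractable for $\ceqv$ while already intractable for $\progcsat$. Hence the lower bound cannot be read off from \cref{thm:cm-progcsat}; the heart of the argument is identifying precisely which commutator configurations obstruct identity testing (as opposed to merely satisfiability), and then designing a reduction that compresses a non-equivalence witness into an exponentially shorter $\ccand_n$ circuit of the forbidden depth-3 form.
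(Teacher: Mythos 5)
Your plan for the upper bound is a genuinely different route from the paper's, but it has a gap precisely where the mixed-prime issue lives. The paper proves the ``if'' direction by observing that an identity holds in $\m A$ iff it holds in every quotient $\m A/\alpha$ with $\alpha$ meet-irreducible; in such a quotient $0_{\m A/\alpha}$ has a unique cover, forcing every supernilpotent congruence of $\m A/\alpha$ to have cosets of one fixed prime power, so that \cref{thm:cm-progcsat} applies and then \cref{fact:red-to-progcsat} gives $\ceqv{\m A/\alpha}\in\rptime$. Your two-stage scheme never reduces to these single-prime quotients. After verifying $\po p\equiv\po q \pmod\sigma$, the residual $\po r(\o x)=\po d(\po p(\o x),\po q(\o x),\zero)$ is indeed valued in the $\sigma$-block of $\zero$, but its arguments still range over all of $A$; deciding whether $\po r$ is constant is \emph{not} a $\ceqv$-instance over the induced block algebra $\m A|_{\zero/\sigma}$. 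In fact, when $\sigma$ has cosets of several different prime-power sizes, $\m A$ satisfies the hypothesis of \cref{thm:cm-ceqv} but not that of \cref{thm:cm-progcsat}, and the residual map can express exactly the kind of mixed-modulus behaviour that makes $\progcsat{\m A}$ hard. Your proposal gives no mechanism to tame that, while the paper's passage to meet-irreducible quotients does.

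Your assessment of the lower bound has the difficulty inverted. The ``only if'' direction of \cref{thm:cm-ceqv} asks for strictly \emph{less} than the ``only if'' direction of \cref{thm:cm-progcsat}: it only needs nilpotency and $\sr{\m A}\leq 2$ (which is equivalent to the existence of the required $\sigma$), with no ``same prime'' clause, and the paper simply eliminates types {\tn 3}, {\tn 4} unconditionally and then cites \cite{IdziakKK22STACS} for nilpotency and $\sr{\m A}\leq 2$ under \prethh. So nothing new is required there, and your worry about ``identifying precisely which commutator configurations obstruct identity testing'' does not arise. You also invoke \cdhh in the lower bound, but \cdhh plays no role there; it is used only in the upper bound (via \cref{thm:2supernil-circuit-early}) to argue that programs computing $\ccand_n$ must be large, which is what makes the random-sampling algorithm for \progcsat{} correct. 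The sentence claiming a \cdhh-contradiction ``unless $\ceqv{\m A}$ requires super-polynomial time'' conflates the two hypotheses and the two directions of the equivalence.
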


Results from \cite{IdziakKKW22-icalp} that we use in the proof of \cref{eqv-groups}
heavily rely on a method of representing terms/polynomials of finite solvable group $\m G$
by bounded-depth circuits that use only modular gates.
Besides the obvious requirement that the circuit representing a group-polynomial $\po p$ has to compute the very same function as $\po p$ does, we also want to control the size of the circuit to be polynomial in terms of the size (length) of $\po p$.

A very similar approach can be found in \cite{Kompatscher19CC}, where M.\! Kompatscher considers generalizations of finite nilpotent groups, i.e.\! nilpotent algebras from the congruence modular varieties. He provides a method to rewrite circuits over such algebras to constant-depth circuits which, again, use only modulo-counting gates. These modular circuits, which appear in both of the mentioned cases, are known as CC-crcuits. However, since \cite{Kompatscher19CC} does not use the notion of a program/NUDFA, the author formulates his results for circuits over $\m A$ representing only some specific functions. For those functions, it is natural how to interpret Boolean values $0/1$ in the non-Boolean algebra $\m A$. In our paper, the notion of a program/NUDFa provides us with a formal framework which helps to relate the expressive power of algebraic structures to some standard circuit complexity classes. Here, we present a very precise characterization of functions computable by programs over algebras corresponding to polynomial-time cases of \progcsat{}.

\begin{thm}
\label{thm:2supernil-circuit-early}
Let $\m A$ be a finite nilpotent algebra from a congruence modular variety with  supernilpotent congruence $\sigma$ of $\m A$
such that cosets of $\alpha$ are of prime power size $p^{k}$ and  $\m A/\sigma$ is supernilpotent.
Then the function computable by a boolean program of size $\ell$ over the algebra $\m A$
can be also computed by an
$\ccand_d\circ\ccmod_{m}\circ\ccmod_p$-circuits of size $O(\ell^c)$
with $d,m,c$ being natural numbers depending only on $\m A$, and $m$ being relatively prime to $p$.
\end{thm}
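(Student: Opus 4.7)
The plan is to construct the circuit in three layers matching the target shape $\ccand_d\circ\ccmod_m\circ\ccmod_p$: the middle layer will encode the supernilpotent quotient $\m A/\sigma$, the bottom layer will encode the $p$-primary behaviour inside $\sigma$-cosets, and the top layer will aggregate the $O(1)$ many tests needed for acceptance. Starting from a boolean program $P=(\po t,n,\iota,S)$ of size $\ell$ computing $f(y)=[\po t(a^{x_1}(y_{b^{x_1}}),\ldots,a^{x_k}(y_{b^{x_k}}))\in S]$, I would decompose $f=\bigvee_{s\in S}(f^{\sigma}_s\wedge f^{\mathrm{in}}_s)$, where $f^{\sigma}_s(y)$ tests whether the output of the program lies in the $\sigma$-coset of $s$, and $f^{\mathrm{in}}_s(y)$ tests whether its within-coset residue equals that of $s$. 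Since $|S|\leq|\m A|$ is a constant, the outer disjunction has only $O(1)$ terms.

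For $f^{\sigma}_s$ the program $P$ projects to a boolean program over $\m A/\sigma$ of the same size, and since $\m A/\sigma$ is supernilpotent it decomposes as a direct product of algebras of prime-power order \cite{bulatov-kom,aichmud-2010}. I would appeal to a supernilpotent-case lemma, which I expect to be established earlier in the paper by lifting Kompatscher's polynomial-to-CC-circuit translation \cite{Kompatscher19CC} from polynomials to programs: boolean programs over supernilpotent CM algebras compute functions representable by $\ccmod_{m'}$-circuits of polynomial size, where $m'$ is supported on the primes dividing $|\m A/\sigma|$. Splitting $m'$ into its $p$-part (absorbed into the bottom $\ccmod_p$-layer) and its $p'$-part $m$ produces a $\ccmod_m\circ\ccmod_p$-subcircuit of size $\ell^{O(1)}$ computing $f^{\sigma}_s$, with $\gcd(m,p)=1$. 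For $f^{\mathrm{in}}_s$, every $\sigma$-coset has size $p^k$ and, by supernilpotency of $\sigma$, the Mal'cev difference operation available in the CM variety turns each coset into an abelian $p$-group on which the polynomial operations of $\m A$ act. A Barrington--Straubing--Th\'erien style expansion then expresses the within-coset residue as a $\z_{p^k}$-valued function of polynomially many inputs, computable by a $\ccmod_p$-circuit of size $\ell^{O(1)}$.

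Finally I would aggregate: each conjunction $f^{\sigma}_s\wedge f^{\mathrm{in}}_s$ folds into a single $\ccmod_m\circ\ccmod_p$-subcircuit by choosing appropriate acceptance residues, and the disjunction over $s\in S$ is realised by the top $\ccand_d$ gate via De Morgan, with the required negations absorbed into the acceptance sets of the middle $\ccmod_m$-gates; the constant $d$ is bounded in terms of $|\m A|$. \textbf{Main obstacle.} The hardest step is the within-coset expansion. Naively unfolding the nested commutators arising in a nilpotent algebra leads to exponential blow-up, and controlling it requires the \emph{supernilpotency} of $\sigma$ rather than mere nilpotency: the vanishing of sufficiently deep iterated commutators $[\sigma,\ldots,\sigma]$ bounds the effective degree of the induced polynomial expression and keeps the translation into $\ccmod_p$-gates polynomial in $\ell$. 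A secondary subtlety is making sure the $p$-part of $\m A/\sigma$ (if any) can be cleanly relocated from the middle to the bottom layer, which is what guarantees the coprimality $\gcd(m,p)=1$ asserted in the statement.
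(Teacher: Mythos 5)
Your decomposition $f=\bigvee_{s\in S}\bigl(f^{\sigma}_s\wedge f^{\mathrm{in}}_s\bigr)$ and the appeal to a supernilpotent-case lemma for the quotient $\m A/\sigma$ are broadly in the spirit of the paper's proof (see \cref{lm:supernil-circuit}), and you correctly locate the difficulty in the within-coset computation. The genuine gap is in $f^{\mathrm{in}}_s$. You assert that the within-coset residue is ``a $\z_{p^k}$-valued function of polynomially many inputs, computable by a $\ccmod_p$-circuit,'' but this is false. In the representation $\m D\cong M\times D'$ used in the proof (with $D'=D/\beta$ and $M$ a $\beta$-coset), a basic operation satisfies $f^{\m D}(\o d)=\bigl(\sum_i\alpha^f_i d_i^M+\h f(d_1^{D'},\ldots),\,f^{\m D'}(\ldots)\bigr)$, so the $M$-component of a composite polynomial depends on the $D'$-projections of \emph{every subterm} through the maps $\h f$. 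The within-coset residue therefore cannot be computed by a $\ccmod_p$-circuit acting on the boolean inputs alone: it requires as inner data the quotient values of all subterms, and each of those in turn demands a full $\ccand_d\circ\ccmod_m\circ\ccmod_p$-circuit. The paper resolves this by descending one abelian atom $\beta$ at a time along a chain $\sigma_p=\gamma_h\succ\cdots\succ\gamma_0=0$; at each step it builds a five-layer $\ccand_d\circ\ccmod_m\circ\ccmod_p\circ\ccand_{d'}\circ\sumpk{p}{\nu}$-circuit (the three leftmost layers coming from the inductive hypothesis applied to the subterm programs over $\m D/\beta$) and collapses it back to three layers via \cref{lm:5to3}. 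Handling the entire $\sigma$-coset in one shot has no analogous mechanism for feeding the quotient information into the $\ccmod_p$-layer, so your plan does not go through as written.

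Two further inaccuracies. First, the polynomial size bound is controlled by the explicit accounting in claim~(\ref{section:easy}.\ref{mod-beta}) together with the bounded height of $\con{A}$, not by supernilpotency of $\sigma$ as such; supernilpotency enters only to guarantee the chain structure below $\sigma_p$ and the clean quotient above it. Second, the coprimality $\gcd(m,p)=1$ is not achieved by ``relocating the $p$-part'' of the modulus. The paper instead works with $\sigma_p$, the largest supernilpotent congruence with $p$-power cosets, for which it establishes $\kappa\leq\sigma_p$ and $p\notin\charrset{\sigma_p,1}$; then $m=\pdiv{A/\sigma_p}$ is coprime to $p$ by construction, and no rebalancing between layers is needed.
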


This theorem not only is an interesting result on its own, but also is crucial in proving \cref{thm:cm-progcsat} and \ref{thm:cm-ceqv}.

\section{Algebraic preliminaries}\label{sec-notions-1}
An algebra is a set called universe together with a finite set of operations acting on it called basic operations of the algebra. We usually use boldface letter to denote the algebra and the very same latter, but with a regular font, to denote its universe. $\pol{A}$ is polynomial clone of $\m A$, that is the set of all polynomial operations of $\m A$. An algebra $\m A$ is polynomially equivalent to an algebra $\m B$ if it is isomorphic
to an algebra which has the same set of polynomial operations as $\m B$.  An induced algebra $\m A|_S$ is a set $S$  with all polynomial operations of $\m A$ closed on $S$ (or in other word polynomial operations which for arguments from $S$ have value in $S$).  Idempotent  function is a function $f$ such that $f(f(x))=f(x)$.

In proves of intractability of \progcsat{} and \ceqv{} for algebras from congruence modular varieties the crucial role is played by Tame Congruence Theory (see \cite{hm} for details). This is a deep algebraic tool describing local behavior of finite algebras. TCT shows that locally  finite algebras behave in one of following five ways:
\begin{enumerate}
\item[{\tn 1}.]  a finite set with a group action on it,
\item[{\tn 2}.]  a finite vector space over a finite field,
\item[{\tn 3}.]  a two-element Boolean algebra,
\item[{\tn 4}.]  a two-element lattice,
\item[{\tn 5}.]  a two-element semilattice.
\end{enumerate}
By $\typset{\m A}$, let us denote a subset of $\{\tn 1,.., \tn 5\}$ which describes the local behaviours we can find in an algebra $\m A$. Note that in a case of algebra $\m A$ from a congruence modular variety only three types can appear in $\typset{\m A}$, that is types {\tn 2}, {\tn 3} and {\tn 4}. In case of types {\tn 3} and {\tn 4} there has to be two-element set $U$ (let's call element of $U$ as $0$ nad $1$) such that
\begin{itemize}
    \item there is a polynomial $\po e$ of $\m A$ fulfilling $\po e(A)=U$,
    \item there are polynomials of $\m A$ which behaves on $U$ like $\meet$ and $\join$,
    \item in case of type {\tn 3} there is also unary polynomial of $\m A$ which is a negation on $U$.
\end{itemize}

If we can find type {\tn 3} or {\tn 4} in $\typset{\m A}$, the complexity of both \progcsat{} and \ceqv{} is relatively easy to determine, as we shall see in forthcoming chapters. For this reason the most of the volume of the paper is devoted to algebras with $\typset{\m A} = \{\tn 2\}$. In the congruence modular variety those are precisely the solvable algebras \cite{fm, hm}. In fact, some of the earlier papers already dealt with algebras that are solvable but not nilpotent, so we will be mostly concerned with the notion of nilpotency and its properties.

Every solvable (so in particular - nilpotent) algebra in the congruence modular variety is Malcsev, i.e.\! it possesses a polynomial operation $\po d$ satisfying the following identity:
$\po d(y,x,x)=\po d(x,x,y)=y$. A standard example of Malcev algebras are groups with a Malcev term of the form $x \cdot y^{-1} \cdot z$. Unlike for groups, nilpotent algebras do not necessarly decompose into a direct product  of algebras of prime power order. However, the technique contained in this paper splits an algebra into slices on which such nice decomposition can be observed. 

To define this slicing properly, we need to consider congruences. Recall that congruence $\sigma$ of an algebra $\m A$ is an equivalence relation which is preserved by the operations of $\m A$. Such relations are naturally associatted with surjective homomorphisms from $\m A$ to $\m A/\sigma$ mapping $x$ to $[x]_{\sigma}$ (equivalence class of $x$ in $\sigma$), so congruencess are essentially generalization of normal subgroups of a group. Similarly as normal subgroups, congruences of an algebra form a lattice. From now on, we write $\con{\m A}$ for the set of all congruences of $\m A$. Every element of a finite lattice can be written as a meet (join) of meet-irreducible (join-irreducible) elements, i.e.\! elements which cannot be writen as a meet (join) of any other two elements of the lattice. These special elements, generating $\con A$, will play a significant role in our analysis.

For $\alpha, \beta \in \con{\m A}$, by $\intv{\alpha}{\beta}$ we mean a set of congruencess $\gamma$ such that $\alpha \leq \gamma \leq \beta$. In case when $\intv{\alpha}{\beta} = \{\alpha, \beta\}$, i.e.\! there are no congruencess between $\alpha$ and $\beta$, we call $\beta$ a cover of $\alpha$,  we call $\alpha$ a subcover of $\beta$, and we call $\alpha, \beta$ a covering pair. To highlight such a situation we write $\alpha \prec \beta$ for short. Whenever $\alpha$ is meet-irreducible (join-irreducible) congruence, then there is a unique congruence $\alpha^{+}$ ($\alpha^{-}$) such that $\alpha \prec \alpha^{+}$ ($\alpha^{-} \prec \alpha$). 
For a nilpotent algebra $\m A$  from CM wheneverver its congruencess $\alpha, \beta$ satisfy $\alpha \prec \beta$, the cosets (congruence classes) of $\beta/\alpha$ in $\m A/\alpha$ have equal sizes, being a power of some prime (see \cref{lm:simple-atom} and \cref{lm:simple-module-atom}). We later denote this prime by $\charr(\alpha, \beta)$ and call it a characteristic of a congruence cover $\alpha \prec \beta$. Moreover for arbitrary $\alpha < \beta$, we write $\charrset{\alpha, \beta}$ for the set of all possible prime characteristics of covering pairs, which are fully contained in $\intv{\alpha}{\beta}$.
We say that a pair of congruences $\alpha < \beta$ of a nilpotent algebra $\m A$ forms a Prime Uniform Product Interval (PUPI), if there are congruencess $\alpha < \alpha_1, \ldots, \alpha_k \leq \beta$ such that
\begin{itemize}
    \item $\bigvee \alpha_i = \beta$
    \item $\alpha_i \wedge (\bigvee_{j\neq i}\alpha_j) = \alpha$, for $i,j \in \{1..k\}$, 
    \item For every $i$  we have $|\charrset{\alpha, \alpha_i}| = 1$.
\end{itemize}

We call a congruence $\beta$ supernilpotent whenever it is nilpotent and the interval $\intv{0_{\m A}}{\beta}$ is a PUPI, and we call an algebra $\m A$ supernilpotent, whenever $1_{\m A}$ is supernilpotent. 
Each supernilpotent algebra is isomorphic to a direct product of nilpotent algebras of prime power size. In this sense supernilpotence generalizes nilpotence for groups.
Note that this definition is equivalent to a standard definition of supernilpotent algebras in congruence modular varieties \cite{MayrS21}.

 We say that a nilpotent algebra $\m A$ has supernilpotent rank $k$ whenever $k$ is the smallest number for which we can find sequence of congruences $0_{\m A} = \alpha_0 < \alpha_1 < \ldots < \alpha_k = 1_{\m A}$ such that interval $\intv{\alpha_i}{\alpha_{i+1}}$ is a PUPI for each $0 \leq i < k$.  Note that for a finite nilpotent algebra $\m A$ we can always find such a finite sequence, since each covering pair forms a PUPI. This notion of supernilpotent rank of an algebra, later denoted by $\sr{\m A}$, proved to be extremely usuful in some very recent results on the computation complexity of circuit satisfiability problem \cite{IdziakKK20, Kompatscher21}. In fact, results for $\progcsat{}/\ceqv{}$ we present in this paper, are essentially about nilpotent algebras with $\sr{A} = 2$.   
\section{Polynomial equivalence}
Our characterization of polynomial time cases of \poleqv{} is achieved through a reduction to some special instances of  \polsat{}. It was noticed already in \cite{GoldmannR02} that \polsat{\m G} for finite group $\m G$ reduces in polynomial time to \progpolsat{\m G}. Very recently, the full characterization,  of groups for which \progpolsat{} can be solved in randomized polynomial time was shown under the assumptions of \rethh and \cdhh in
\cite{IdziakKKW22-icalp}. 

 \begin{thm}
\label{thm:progsat}
Let $\m G$ be a finite group.
Assuming \rethh and \cdhh
the problem
\progpolsat{\m G} is in  \rptime if and only if
$\m G/\m G_p$ is nilpotent for some
normal $p$-subgroup $\m G_p$ of $\m G$
(with $p$ being prime).
\end{thm}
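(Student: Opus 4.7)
The plan is to attack the two directions of the theorem by largely different machinery.

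For the tractability direction ($\Leftarrow$), assume $\m G$ has a normal $p$-subgroup $\m G_p$ with $\m G/\m G_p$ nilpotent. I would build the randomised algorithm along the normal series $1 \leq \m G_p \leq \m G$. Given a program $(\po t, n, \iota, S)$, the first step is to project the instruction values to the quotient $\m G/\m G_p$: the resulting program over a nilpotent group can be analysed by the Goldmann--Russell procedure \cite{GoldmannR02}, which in effect converts such a program into a polynomial-size CC-circuit and decides satisfiability by exploiting the abelian top commutator quotient. This isolates a polynomial-size list of candidate cosets $g\m G_p$ that an accepting input could hit. For each such coset the residual question is whether some assignment produces a value in $(g\m G_p) \cap S$, and because $\m G_p$ is a $p$-group this can be recast, via a Malcev-style coordinatisation, as nontriviality of an $\mathbb{F}_p$-valued polynomial, which Schwartz--Zippel handles.

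For the hardness direction ($\Rightarrow$), assume the structural condition fails. If $\m G$ is non-solvable, Barrington's theorem already yields polynomial-length programs over $\m G$ that compute every Boolean function, so \progpolsat{\m G} is \npc{} and under \rethh{} is not in \rptime. Otherwise $\m G$ is solvable, and the failure of the hypothesis rephrases as: for every prime $p$, the quotient $\m G/O_p(\m G)$ is non-nilpotent, where $O_p(\m G)$ is the largest normal $p$-subgroup. The plan is then to locate inside $\m G$ a subquotient of the form $\m H \rtimes \m K$ with $\m H$ a non-trivial elementary abelian $p$-group, $\m K$ a non-trivial $q$-group acting non-trivially on $\m H$, and $p \neq q$. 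Such a subquotient lets one simulate three-layer circuits of the form $\ccand_d \circ \ccmod_q \circ \ccmod_p$ by polynomial-size programs over $\m G$. A hypothetical \rptime{} procedure for \progpolsat{\m G} would then decide satisfiability of such circuits quickly; combined with \cdhh (which forbids compressing $\ccand_n$ into these circuits below exponential size) and \rethh (which rules out sub-exponential-time satisfiability for the resulting instances), this yields a contradiction.

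The main obstacle is the hardness direction. One must extract the witnessing subquotient $\m H \rtimes \m K$ in full generality from the hypothesis that no normal $p$-subgroup gives a nilpotent quotient, and then realise the program family precisely enough that the \cdhh{} lower bound on $\ccand_d \circ \ccmod_q \circ \ccmod_p$ can be applied without slack. The tractability direction also hides a technical subtlety: the Malcev-style polynomial representation inside $\m G_p$ must be produced uniformly across the (a priori many) reachable cosets of $\m G_p$, and fusing it with the CC-circuit analysis of the nilpotent quotient without a super-polynomial blow-up requires care.
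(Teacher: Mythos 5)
The theorem you are proving is cited in the paper from \cite{IdziakKKW22-icalp}; the text labelled ``Proof of Theorem \ref{thm:progsat}'' in Section~\ref{section:progcsat} is in fact a proof of the CM analogue, \cref{thm:cm-progcsat}. Still, the paper is explicit about how the two hypotheses are deployed, and the CM proof shows the intended template, so your sketch can be compared against that.

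The most serious problem is that you have the roles of the two hypotheses inverted. The paper states plainly that ``the lower bound in our characterization relies on the randomized version of the Exponential Time Hypothesis (\rethh), while the upper bounds explore the so-called Constant Degree Hypothesis (\cdhh).'' You invoke \cdhh only in the hardness direction, and your tractability sketch is unconditional. That cannot work: \cdhh enters the upper bound precisely because the algorithm is a density argument. Under the structural hypothesis, the paper compiles a size-$\ell$ program over $\m G$ into an $\ccand_d\circ\ccmod_m\circ\ccmod_p$-circuit of size $\ell^{O(1)}$ (\cref{thm:2supernil-circuit-early}). \cdhh then says such circuits cannot compute $\ccand_n$ below exponential size, and combined with \cite[Proposition~9]{IdziakKKW22-icalp} this forces the accepted set, when nonempty, to have density at least $\ell^{-O(1)}$, so uniform sampling solves the problem in \rptime. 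Your Goldmann--Russell-plus-Schwartz--Zippel sketch misses this entirely: the residual condition after projecting to $\m G/\m G_p$ is \emph{not} the nonvanishing of a low-degree $\mathbb{F}_p$-polynomial, because the nilpotent quotient contributes modular constraints for primes dividing $|\m G/\m G_p|$ that cannot be folded into a short $\mathbb{F}_p$-polynomial; the quotient part and the $p$-group part are not independent, and there is no obvious degree bound that would let Schwartz--Zippel go through.

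Your hardness sketch also contains a genuine gap. Locating a subquotient $\m H\rtimes\m K$ with $\m H$ an elementary abelian $p$-group and $\m K$ a $q$-group acting nontrivially, $p\ne q$, is far too weak: already $\m G=\mathbb{S}_3$ has such a subquotient, yet it has the normal $3$-subgroup $\mathbb{Z}_3$ with abelian quotient $\mathbb{Z}_2$, so it \emph{satisfies} the theorem's condition and must be on the tractable side. The correct obstruction is more delicate: either there is a chain of three alternating characteristics (supernilpotent/Fitting rank at least $3$), or the Fitting subgroup involves at least two primes neither of which can be factored out without the quotient staying non-nilpotent. The reduction then uses \cref{lm:pseudo-and} \emph{twice}, for two distinct primes $q_0,q_1$, building polynomials $w^\Phi_{q_0}$ and $w^\Phi_{q_1}$ of degree $O(\sqrt{\ell})$ that vanish exactly when the number of unsatisfied clauses is divisible by $q_i^{\nu_i}$; combining them via the Malcev operation and the disjointness of the corresponding minimal sets (as in \cref{lm:hard}, items (\ref{section:hard}.\ref{fj11}) and (\ref{section:hard}.\ref{gfq-pol})) yields a $2^{O(\sqrt{\ell}\log n)}$-time reduction from 3-CNF-SAT, contradicting \rethh. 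The hypothesis \cdhh plays no role here, contrary to what you wrote.
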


\cref{thm:progsat} together with a result from \cite{{IdziakKKW22TOCS}} provides us with enough information to prove the \cref{eqv-groups}.

\begin{proof}
We start with recalling that \cite[Theorem 1]{IdziakKKW22TOCS} tells us that under \ethh,
$\poleqv{\m G} \in \ptime$ forces $\m G$ to have a normal nilpotent subgroup $\m H$
with nilpotent quotient $\m G/\m H$.
The very same proof actually gives the same structure of $\m G$ under
\rethh and $\poleqv{\m G} \in \rptime$.

For the converse we start by observing that
the nilpotent normal subgroup $\m H$ of $\m G$ is a product of its Sylow subgroups,
say $\m H_1,\ldots,\m H_s$ with $\m H_j$ being a $p_j$-group.
Each such subgroup $\m H_j$ is isomorphic to the quotient $\m H/\m H'_j$
for the normal subgroup $\m H'_j$ of $\m H$ consisting of all elements in $H$
with the order not divisible by $p_j$.
In particular $H'_1\cap\ldots\cap H'_s=\set{1}$.
Moreover for an inner automorphism $h$ of $\m G$ we have not only $h(H)=H$
but also $h(H'_j)=H'_j$, as $h$ has to preserve the order of elements.
This means that $\m H'_j$ is normal also in $\m G$.

Now we know that the quotient $\m G/\m H'_j$ has a nilpotent normal subgroup $\m H/\m H'_j$
with a nilpotent quotient $(\m G/\m H'_j)/(\m H/\m H'_j)=\m G/\m H$.
Thus \cdhh and Theorem \ref{thm:progsat} give us that\break $\progpolsat{\m G/\m H'_j}$ and consequently $\polsat{\m G/\m H'_j}$  are in \rptime.
Consequently\break $\poleqv{\m G/\m H'_j}\in \rptime$ as deciding whether $\po t=\po s$ holds in
$\m G/\m H'_j$ reduces to check if none of the $\card{G/H'_j}-1$ equations of the form $\po t\po s^{-1}=a$,
with $a\in G/H'_j\setm\set{1}$ has a solution.

Finally, $H'_1\cap\ldots\cap H'_s=\set{1}$ tells us that an equation holds in $\m G$
iff it holds in all the quotients $\m G/\m H'_j$, so that
$\poleqv{\m G}\in\rptime$.
\end{proof}

\section{Program satisfiability}
\label{section:progcsat}
The goal of this section is to prove \cref{thm:cm-progcsat}. First we observe in \cref{fact:red-to-progcsat} that if a nilpotent algebra has a Malcev term then \csat{} and \ceqv{} for this algebra reduces to \progcsat{}. Thus, intractability of \csat{} or \ceqv{} implies intractability of \progcsat{} and conversely if \progcsat{} is in \rptime, so are \csat{} and \ceqv{}.

\begin{fact}
\label{fact:red-to-progcsat}
For a finite nilpotent Malcev algebra $\m A$ the problems $\csat{\m A}$ and $\ceqv{\m A}$
are Turing reducible to $\progcsat{\m A}$.
\end{fact}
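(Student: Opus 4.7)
The strategy is to simulate each $A$-valued input variable of the given circuit by a short block of Boolean inputs, gluing the bits together with the Mal'cev term $\po d$ of $\m A$. Fix any $a_0\in A$ and set $u\oplus v:=\po d(u,a_0,v)$, so that $a_0$ is a two-sided unit of $\oplus$. Enumerating $A=\{a_0,a_1,\dots,a_{m-1}\}$, for every variable $x_i$ of the input circuit I introduce $m-1$ fresh Boolean inputs $b_{i,1},\dots,b_{i,m-1}$ together with program variables $z_{i,j}$ controlled by $\iota(z_{i,j})=(b_{i,j},a_0,a_j)$, and I replace each $x_i$ by the short circuit $\tilde x_i:=z_{i,1}\oplus z_{i,2}\oplus\cdots\oplus z_{i,m-1}$. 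Because $a_0$ is a two-sided unit of $\oplus$, turning all bits off yields $\tilde x_i=a_0$ and turning exactly one bit $b_{i,j_0}$ on yields $\tilde x_i=a_{j_0}$, so $(\tilde x_1,\dots,\tilde x_n)$ realises every tuple of $A^n$ as $\vec b$ ranges over $\bool^{n(m-1)}$; the resulting circuit blow-up is linear in the input.

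This produces single-query many-one reductions both $\csat{\m A}\mreduces\progcsat{\m A}$ and $\ceqv{\m A}\mreduces\progcsat{\m A}$. For $\csat{\m A}$: given a circuit $\po p$ together with a target value $c\in A$, output the program whose term is $\po p(\tilde x_1,\dots,\tilde x_n)$, equipped with the instructions above and $S:=\{c\}$; by the simulation this program accepts some word iff $\po p$ attains $c$ on some tuple of $A^n$. For $\ceqv{\m A}$: given circuits $\po p,\po q$, output the program with the same instructions but term $\po d(\po p(\tilde{\vec x}),\po q(\tilde{\vec x}),a_0)$ and $S:=A\setminus\{a_0\}$; the Mal'cev identity $\po d(y,y,a_0)=a_0$ forces this program to reject whenever $\po p\equiv\po q$.

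The main obstacle is the converse direction in the $\ceqv{}$ case: I must show that $\po d(u,v,a_0)=a_0$ already forces $u=v$, equivalently, that the unary polynomial $y\mapsto\po d(y,v,a_0)$ is a permutation of $A$ for every $v\in A$. This is the only place where nilpotency enters, and it will follow from the structure theory of finite nilpotent algebras in congruence modular varieties recalled in Section~\ref{sec-notions-1}: on such an algebra the loop $(A,\oplus,a_0)$ induced by $\po d$ is cancellative (indeed, on each block of a suitable central series $\oplus$ coincides with the addition of a module over a ring). Once this cancellation is at hand, the remaining checks---polynomial size of the produced programs and correctness of the accepting-set equivalences in both directions---are routine bookkeeping.
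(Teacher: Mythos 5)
Your proof is correct and takes essentially the same route as the paper: your iterated $\oplus$-chain $\tilde x_i=z_{i,1}\oplus\cdots\oplus z_{i,m-1}$ is precisely the paper's Mal'cev polynomial $\po f$, and the cancellation $\po d(u,v,a_0)=a_0\Rightarrow u=v$ that you defer to structure theory is exactly the fact \cite[Lemma~7.3]{fm} that the paper invokes at the same spot. The only detail you elide is that a $\csat{}$ instance is a pair of circuits $\po p=\po q$ rather than a circuit together with a target constant, but the same Mal'cev normalization you already apply in the $\ceqv{}$ case (pass to $\po d(\po p,\po q,a_0)=a_0$) closes this in one extra step, so no genuine gap remains.
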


\begin{proof}
To prove the fact we start with fixing $a_0\in A$
and enumerating $A\setm\set{a_0}=\set{a_1,\ldots,a_k}$ to form $A_j=\set{a_0,a_j}$.
Using Malcev term $\po d$ we define the $k$-ary polynomial $\po f$ by putting
\[
\po f(x_1,\ldots,x_k)=\po d(\ldots\po d(\po d(x_1,a_0,x_2),a_0,x_3)\ldots,a_0,x_k).
\]
Obviously $\po f(a_0,\ldots,a_0)=a_0$.
To see that the other $a_j$'s are in $\po f(A_1,\ldots,A_k)$
simply evaluate $x_j$ by $a_j\in A_j$ and the rest of the $x_i$'s by $a_0$.

Observe that if $\m A$ is a nilpotent algebra with a Malcev term $\po d(x,y,z)$
and $\zero,a,b\in A$ then, by \cite[Lemma 7.3]{fm} we have
$a=b$ iff $\po d(a,b,\zero)=\zero$.
This immediately shows that in nilpotent Malcev algebras only equations of the form
$\po t(\o x)=\zero$ (i.e. equations in which one side is a constant polynomial)
need to be considered in satisfiability or equivalence problems.

Thus we start with an instance of \csat{} [or \ceqv{}] $\po t(\o x)=\zero$ and define $kn$-ary polynomial
\[
\po t'(x_1^1,\ldots,x_1^k,\dots, x_n^1,\ldots,x_n^k)
=\po t(\po f(x_1^1,\ldots,x_1^k),\ldots,\po f(x_n^1,\ldots,x_n^k)).
\]
Due to $A = \po f(A_1,\ldots,A_k)$ we easily get that $\po t=e$
has a solution in $\m A$ [holds identically in $\m A$] iff
$\po t'=\po e$
has a solution with $x_i^j$ restricted to be taken from $A_j$
[or holds for all $2^{kn}$ evaluations of the $x_i^j$'s in $A_j$, respectively].

We define the reduction which for a given instance of \csat{\m A} [$\ceqv{\m A}$] in the form $\po t(\o x)=e$ returns  $nk$-ary boolean $\po t'$-program
(over the variables $\b_1^1,\ldots,\b_1^k,\ldots,\b_n^1,\ldots,\b_n^k$) with the instructions
$\iota(x_i^j)=(\b_i^j,a_0,a_j)$. One can easily see that these reductions work, after setting the accepting values
to be $S=\set{\zero}$ in case of $\csat{\m A}$,
and $S=A\setm\set{\zero}$ for $\ceqv{\m A}$.
\end{proof}
In the proof of Theorem \ref{thm:cm-progcsat} we will use Fact \ref{fact:red-to-progcsat} to show that, under our assumptions, nilpotent Malcev algebra with tractable \progcsat{} has supernilpotent rank equal at most $2$. In section \ref{section:hard} we use advanced tools of Tame Congruence Theory and Commutator Theory to prove the following lemma which shows that not for every algebra with supernilpotent rank equal $2$ \progcsat{} is tractable.

\begin{lm}
\label{lm:hard}
Let $\m A$ be a finite nilpotent algebra from a \cm variety with $\sr{\m A}=2$ and tractable $\progcsat{\m A}$.

Then $\m A$ has a supernilpotent congruence $\alpha$ with cosets of prime power order such that quotient algebra $\m A/\alpha$ is also supernilpotent, or \rethh fails.
\end{lm}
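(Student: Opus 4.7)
The plan is to proceed by contradiction. Assume that $\progcsat{\m A}\in\rptime$, that \rethh holds, and yet no congruence $\alpha$ with the stated properties exists; I will then build a polynomial-time reduction of $3$-SAT to $\progcsat{\m A}$, which combined with $\progcsat{\m A}\in\rptime$ puts $3$-SAT in $\rptime$ and hence contradicts \rethh.

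First, since $\sr{\m A}=2$, I would fix a congruence $\alpha_1$ with $0_{\m A}<\alpha_1<1_{\m A}$ such that both intervals $\intv{0_{\m A}}{\alpha_1}$ and $\intv{\alpha_1}{1_{\m A}}$ are PUPIs. Thus $\alpha_1$ is itself supernilpotent and $\m A/\alpha_1$ is supernilpotent. If $|\charrset{0_{\m A}}{\alpha_1}|=1$ we are done by taking $\alpha=\alpha_1$. Otherwise, I would use the PUPI decomposition of $\intv{0_{\m A}}{\alpha_1}$ to split $\alpha_1$ by primes: for each $p\in\charrset{0_{\m A}}{\alpha_1}$ let $\beta_p$ be the join of those PUPI-components of $\alpha_1$ whose characteristic is $p$. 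Each $\beta_p$ is then a supernilpotent congruence with cosets of size a power of $p$, and if for at least one prime $p$ the quotient $\m A/\beta_p$ happens to be supernilpotent, this $\beta_p$ is the desired congruence $\alpha$.

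The remaining --- and hardest --- situation is when for every $p\in\charrset{0_{\m A}}{\alpha_1}$ the quotient $\m A/\beta_p$ fails to be supernilpotent. Although $\intv{\beta_p}{\alpha_1}$ and $\intv{\alpha_1}{1_{\m A}}$ are individually PUPIs (inherited from the PUPI structure of $\intv{0_{\m A}}{\alpha_1}$ and of $\intv{\alpha_1}{1_{\m A}}$), they do not merge into a single PUPI above $\beta_p$. Using commutator theory in the congruence modular setting together with tame congruence theory, I would localize this obstruction to a concrete section of $\m A$ exhibiting a genuine three-level interaction of at least three prime powers --- two of them coming from different bottom primes in $\alpha_1$ and one from the top prime in $\intv{\alpha_1}{1_{\m A}}$ --- together with polynomial operations witnessing non-trivial nested commutators that cross all three levels.

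The final step is to convert this three-level section into a uniform family of polynomial-size Boolean programs over $\m A$ that accept exactly the satisfying assignments of a given $3$-CNF formula: each literal is translated into a short term via Malcev and commutator gadgets, literals in a clause are combined through the interaction between the lower-level primes, and the clauses are conjoined through the top-level prime action. Representing these polynomials as algebraic \emph{circuits} rather than as terms is crucial, because the familiar exponential commutator blow-up seen in iterated commutators $[[\ldots[x_1,x_2]\ldots],x_n]$ is absorbed by the circuit representation, keeping the program size polynomial in the formula size. This produces the promised polynomial-time reduction $3\text{-SAT}\mreduces\progcsat{\m A}$ and completes the contradiction.

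I expect the main obstacle to be this last gadget construction: turning the failure of supernilpotence of every $\m A/\beta_p$ into an explicit polynomial-time encoding of $3$-SAT while keeping the circuit depth and program size tightly controlled. In the absence of type-$\tn 3$ or type-$\tn 4$ operations (the type set here is $\{\tn 2\}$), the commutator and centralizer identities needed to guarantee that the constructed polynomials really separate the two Boolean values on each input slot are quite delicate, and the careful management of nested commutators across the three prime levels is precisely what forces the detour through \progcsat{} and algebraic circuits rather than through \progpolsat{}.
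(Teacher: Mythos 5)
Your high-level plan — find two different primes interacting across the structure of $\m A$ and use them to encode $3$-SAT — correctly identifies the shape of the argument, but the central technical step is wrong in a way that cannot be repaired by more care.

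You propose to build a \emph{polynomial-time} reduction from $3$-SAT to $\progcsat{\m A}$. Such a reduction does not exist in general under the hypotheses of the lemma, and the paper never claims one: the whole point of bringing in \rethh (rather than just $\mathsf{P}\neq\mathsf{NP}$ or $\mathsf{NP}\not\subseteq\mathsf{RP}$) is that the reduction is only \emph{subexponential}. Concretely, the paper invokes the Barrington--Beigel--Rudich ``pseudo-$\ccand$'' construction (\cref{lm:pseudo-and}): for a prime $q$ and a $3$-CNF $\Phi$ with $\ell$ clauses one gets a polynomial $w^\Phi_q$ over $GF(q)$ of degree $O(q^\nu)$ that vanishes exactly when the number of unsatisfied clauses is divisible by $q^\nu$. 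Choosing $q_i^{\nu_i}>\sqrt{\ell}$ for two distinct primes $q_0,q_1$ extracted from the congruence lattice, the Chinese Remainder Theorem makes simultaneous vanishing of $w^\Phi_{q_0}$ and $w^\Phi_{q_1}$ equivalent to satisfiability of $\Phi$. The degree is $\Theta(\sqrt{\ell})$, so the number of monomials — and hence the size of the resulting circuit/program — is $n^{O(\sqrt{\ell})}=2^{O(\sqrt{\ell}\log n)}$. A polynomial-time encoding is exactly what the circuit upper bounds in \cref{lm:supernil-circuit}/\cref{thm:2supernil-circuit} forbid: these algebras cannot express large $\ccand$ by polynomial-size programs, only by subexponential ones. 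Your claim that ``the exponential commutator blow-up is absorbed by the circuit representation, keeping the program size polynomial'' is the key point at which the sketch breaks down.

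A second gap: you never explain how to actually realize the two $GF(q_i)$-polynomials inside $\m A$ and combine them so that the joint system of two equations collapses to a single accepting value. The paper's mechanism is the $\beta$-interpolation lemma (\cref{lm:beta-int}), which produces $\m A$-polynomials $\po p^\Phi_i$ that take values in two disjoint minimal sets $V_0,V_1$ with $V_0\cap V_1=\set{\ee}$, so that the Malcev term forces $\po d(\po p^\Phi_0(\o x^0),\po p^\Phi_1(\o x^1),\ee)=\ee$ to hold iff both sides equal $\ee$. Your description (``localize this obstruction to a concrete section exhibiting a three-level interaction'') is too vague to substitute for this. Similarly, the paper's structural reduction to the two primes proceeds by taking a \emph{minimal counterexample} and passing to quotients, which forces $\gamma_0,\gamma_1$ to be the only atoms of $\con{\m A}$ and lets one project everything down to covering pairs $\alphami\prec\alphai$ of differing characteristics; your $\beta_p$ case split is plausible but leaves the hard case ``$\m A/\beta_p$ not supernilpotent for every $p$'' entirely to a gesture. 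Without the subexponential Barrington--Beigel--Rudich machinery and the minimal-set/interpolation apparatus, the ``hardest situation'' you single out cannot be closed.
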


The important ingredient of the proof of above lemma is an idea of Barrington et al \cite{BarringtonBR94} heavily explored in \cite{idziakKK22LICS} and \cite{IdziakKKW22-icalp} resulting in the following lemma.

\begin{lm}
\label{lm:pseudo-and}
Let $p$ be a prime number and $\nu \geq 1$ be an integer.
Then for each 3-CNF formula $\Phi(\o x)$ with $n$ variables
there is a polynomial $w^\Phi_{p}(\o x)$ over  $GF(p)$
of degree at most $O(p^\nu)$
such that for all $\o b \in \set{0,1}^n$ we have
\[
w^\Phi_{p}(\o b) =
\left\{
\begin{array}{ll}
0, &\mbox{if the number of unsatisfied (by $\o b$) clauses in $\Phi$}\\
    &\mbox{is divisible by $p^\nu$}\\
1, &\mbox{otherwise.}
\end{array}
\right.
\]
Moreover, computing $w^\Phi_{p}$ from $\Phi$ can be done in $2^{O(p^\nu(\log n+\log p))}$ steps.
\end{lm}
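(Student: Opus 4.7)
The plan is to follow the Barrington--Beigel--Rudich template: encode the number of clauses unsatisfied by an assignment as a low-degree integer-valued polynomial, and then detect divisibility of this count by $p^\nu$ through the first $\nu$ of its base-$p$ digits, accessed via Lucas' theorem as binomial coefficients modulo $p$.

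First I would turn each clause into a degree-$3$ indicator. For a clause $\ell_1 \vee \ell_2 \vee \ell_3$ in $\Phi$ the polynomial $u_j(\o x) = (1-\ell_1)(1-\ell_2)(1-\ell_3)$, with each literal interpreted as either $x_k$ or $1 - x_k$, takes value $1$ on assignments falsifying the clause and $0$ on the rest. Setting $S(\o x) := \sum_j u_j(\o x)$, the integer $S(\o b)$ equals exactly the number of unsatisfied clauses. The task thus becomes building a polynomial over $GF(p)$ whose value on $\o b$ distinguishes $S(\o b) \equiv 0 \pmod{p^\nu}$ from $S(\o b) \not\equiv 0 \pmod{p^\nu}$.

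For that, I would invoke Lucas' theorem: if $S = \sum_i s_i p^i$ is the base-$p$ expansion of $S$, then $\binom{S}{p^i} \equiv s_i \pmod{p}$, so $p^\nu \mid S$ iff $\binom{S}{p^i} \equiv 0 \pmod p$ for every $i = 0,\dots,\nu-1$. Turning non-vanishing into an indicator via Fermat's little theorem (namely $a^{p-1} \in \{0,1\}$ with $a^{p-1} = 0$ iff $a = 0$ in $GF(p)$), I would define
\[
w^\Phi_p(\o x) \;=\; 1 - \prod_{i=0}^{\nu-1} \Bigl(1 - \binom{S(\o x)}{p^i}^{p-1}\Bigr),
\]
which evaluates to $0$ precisely when every $\binom{S(\o x)}{p^i}$ vanishes modulo $p$, and to $1$ otherwise.

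The delicate step, which I expect to be the main obstacle, is justifying that $\binom{S(\o x)}{p^i}$ is actually representable as a polynomial of degree $3 p^i$ over $GF(p)$: the naive formula $y(y-1)\cdots(y-p^i+1)/(p^i)!$ has a denominator divisible by $p$, so one cannot reduce coefficients directly. The standard workaround, as in \cite{BarringtonBR94}, is to exploit that $\binom{y}{p^i}$ is an integer-valued polynomial and admits a representation in the integer-valued basis whose reduction modulo $p$ yields a well-defined element of $GF(p)[y]$ of degree $p^i$ (alternatively, one computes $(y)_{p^i}$ modulo a suitable power of $p$ using Kummer's valuation of $(p^i)!$, then divides). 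Composing with $S$, which has degree $3$ in $\o x$, and multiplying across the factors, one bounds $\deg w^\Phi_p$ by $3(p-1)\sum_{i=0}^{\nu-1} p^i = 3(p^\nu - 1) = O(p^\nu)$. Finally, flattening $w^\Phi_p$ to a multilinear representative modulo the relations $x_k^2 = x_k$ produces at most $n^{O(p^\nu)}$ monomials with coefficients in $GF(p)$, each handled in $O(\log p)$ time, which gives the announced construction bound of $2^{O(p^\nu(\log n + \log p))}$ steps.
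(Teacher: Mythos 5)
Your proposal follows the same Barrington--Beigel--Rudich template that the paper uses: turn clauses into degree-$3$ indicators, count unsatisfied clauses, and detect divisibility by $p^\nu$ with a low-degree polynomial over $GF(p)$. The difference is that the paper cites the degree-$(p^\nu-1)$ BBR polynomial $w(c_1,\ldots,c_\ell)$ as a black box (via the reformulation in \cite[Fact 3.4]{idziakKK22LICS}) and then spends its effort on extracting $w$'s multilinear coefficients by a size-ordered interpolation, whereas you attempt to re-derive the BBR polynomial from scratch via Lucas' theorem. Your final formula $w^\Phi_p = 1 - \prod_{i=0}^{\nu-1}\bigl(1 - \binom{S(\o x)}{p^i}^{p-1}\bigr)$ is correct, and the degree accounting $3(p-1)\sum_{i<\nu}p^i = 3(p^\nu-1)$ matches the paper's bound.

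The gap is in the step you yourself flag as delicate. The claim that $\binom{y}{p^i}$ ``admits a representation in the integer-valued basis whose reduction modulo $p$ yields a well-defined element of $GF(p)[y]$ of degree $p^i$'' is false: by Lucas, $\binom{y}{p^i}\bmod p$ equals the $i$-th base-$p$ digit of $y$, a function that does not factor through $y\bmod p$ (e.g.\ $\binom{0}{2}\not\equiv\binom{2}{2}\pmod 2$), so no univariate polynomial over $GF(p)$ can compute it, and the Kummer-valuation ``alternative'' runs into the same obstruction. What actually makes the BBR argument work is a multivariate identity, not a univariate reduction: on boolean arguments $u_1,\ldots,u_\ell\in\set{0,1}$ one has $\binom{\sum_j u_j}{k} = e_k(u_1,\ldots,u_\ell)$, the $k$-th elementary symmetric polynomial, which has $0/1$ coefficients and therefore is a genuine degree-$k$ polynomial over $GF(p)$ in the $u_j$'s; composing with the degree-$3$ clause indicators gives degree $3k$ in $\o x$. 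Once you replace the erroneous univariate reduction with this observation, your formula, degree bound, and (after a word on expanding into the $\le n^{O(p^\nu)}$ multilinear monomials, as the paper does by interpolation) the time bound all go through.
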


The power of \cref{lm:pseudo-and} can be observed when we use it simultaneously for two different primes, say $p_1$ and $p_2$. Then if for a given 3-CNF formula $\Phi$ with $m$ clauses we will choose  positive integers $\nu_1$, $\nu_2$ such that $p_i^{\nu_i-1}\leq \sqrt{m}< p_i^{\nu_i}$, we will get, by Chinese Remainder Theorem, that $\Phi$ is satisfied by $\o b\in\set{0,1}$ iff $w^\Phi_{p_1}(\o b) =w^\Phi_{p_2}(\o b) = 0$. Moreover, the lengths of $w^\Phi_{p_1}(\o b)$ and $w^\Phi_{p_2}(\o b)$ are subexponential in the size of $\Phi$. The core of the proof of \cref{lm:hard} is showing (with haevilly use of Tame Congruence Theory and Commutator Theory) that if nilpotent Malcev algebra $\m A$ with supernilpotentn rank $2$ has supernilpotent congruence $\alpha$ which cosets are not of prime power size and such that $\m A/\alpha$ is supernilpotent then we can simulate by programs over $\m A$ systems of equations in the form:
\[
w^\Phi_{p_1}(\o b) = 0,
\]
\[
w^\Phi_{p_2}(\o b) = 0.
\]

Now we are ready to prove \cref{thm:progsat}
\begin{proof}[Proof of Theorem \ref{thm:progsat}:]
We start with the following observation:
\begin{senumerate}
\item
\label{fact:polcsat-lat}
$\progcsat{(\set{0,1};\meet,\join)}$ is \npc.
\end{senumerate}
To see that we start with an $n$-ary CNF-formula $\Phi(b_1,\ldots,b_n)$,
treat it as a function $\bool^n \map \bool$, and convert to $n$-ary program over the lattice
$(\set{0,1};\meet,\join)$.
First, by introducing the variables $x_1,\ldots,x_n$ and $x'_1,\ldots,x'_n$
we produce a new $2n$-ary formula $\Phi'(x_1,\ldots,x_n,x'_1,\ldots,x'_n)$
by simply replacing each positive literal $b_i$ by $x_i$ and negative literal $\neg b_i$ by $x'_i$.
This leads to a function $\Phi':\set{0,1}^{2n} \map \set{0,1}$
and allows us to transform the formula $\Phi$
to the program $\progg{\Phi'}{n}{\iota}{\set{1}}$
by putting $\iota(x_i)=(b_i,0,1)$ and $\iota(x'_i)=(b_i,1,0)$.

\medskip
\noindent
Using (\ref{section:progcsat}.\ref{fact:polcsat-lat}) we can exclude TCT types $\tn 3$ and $\tn 4$
from the typeset $\typset{\m A}$ so that:
\begin{senumerate}
\item
\label{solvable}
Either $\progcsat{\m A}$ is \npc or $\m A$ is solvable.
\end{senumerate}

\noindent
Actually we can force $\m A$ to be nilpotent.
Indeed, Lemma 2.2 of \cite{IdziakKK22STACS} supplies us with an element $\ee\in A$ and
a partition of $A$ into two nonempty disjoint subsets $A=A_0 \cup A_1$
which allows to associate (in linear time $O(m)$)
with a $3$-CNF-formula $\Phi$ (with $m$ clauses and $n$ variables)
a $3m$-ary circuit(polynomial) $\sacik_\Phi$ of $\m A$
such that for $b_1^1,b_2^1,b_3^1,\ldots,b_1^m,b_2^m,b_3^m\in\set{0,1}$
and $x_1^1,x_2^1,x_3^1,\ldots,x_1^m,x_2^m,x_3^m\in A$ with $x^j_i\in A_{b^j_i}$
we have
\[
    \Phi(b_1^1,b_2^1,b_3^1,\ldots,b_1^m,b_2^m,b_3^m)=1
    \quad\mbox{iff}\quad
    \sacik_\Phi(x_1^1,x_2^1,x_3^1,\ldots,x_1^m,x_2^m,x_3^m) = e.
\]
Thus fixing $a_0\in A_0$ and $a_1\in A_1$
we end up with a program $\progg{\sacik_\Phi}{3m}{\iota}{\set{e}}$ over $\m A$,
where $\iota(x_i^j)=(b_i^j,a_0,a_1)$.
This reduction from 3-CNF-SAT to $\progcsat{\m A}$ shows that:
\begin{senumerate}
\item
\label{nilpotent}
Either $\progcsat{\m A}$ is \npc or $\m A$ is nilpotent.
\end{senumerate}

\noindent
To enforce that tractability of $\progcsat{\m A}$ enforces $\m A$ to have supernilpotent rank $2$
we refer to \cite{IdziakKK20}, where $\sr{\m A}\geq 3$ gives a chain $p_1\neq p_2\neq p_3$
of primes occurring as characteristics in consecutive \pupi in $\con{\m A}$.
Moreover \cite{IdziakKK20} shows how to use one alternation of characteristics
to represent $n$-ary $\ccand$ by a polynomial/circuit of size $2^{O(n)}$,
and further how to use two alternations of characteristics
to compose such created $\sqrt{n}$-ary $\ccand$ functions
to represent $n$-ary $\ccand$ by a polynomial/circuit of size $2^{O(\sqrt{n})}$.
From this one expects that under \prethh $\csat{\m A}$ is not in \prptime if $\sr{\m a}\geq 3$.
In fact \cite{IdziakKK20} provides examples of such algebras,
while \cite{Kompatscher21} contains a nice proof of this expectation. This, together with \cref{fact:red-to-progcsat}, gives us that:

\begin{senumerate}
\item
\label{2-supernil}
If $\progcsat{\m A}\in \prptime$ then $\sr{\m A}\leq 2$, or \prethh fails.
\end{senumerate}
Now, the immediate consequence of (\ref{section:progcsat}.\ref{2-supernil}) and \cref{lm:hard} is that:

\begin{senumerate}
\item
\label{2-supernil-and-one-prime}
If $\progcsat{\m A}\in \prptime$ then there exists supernilpotent congruence $\alpha$ of $\m A$ with costes of prime power size and such that $\m A/\alpha$ is supernilpotent, or \prethh fails.
\end{senumerate}

Finally, let $\m A$ be a nilpotent algebra of supernilpotent rank $2$ having supernilpotent congruence $\alpha$ of $\m A$ with costes of prime power size and such that $\m A/\alpha$ is supernilpotent. In such the case  programs over $\m A$ computing $n$-ary $\ccand$ functions have size at least $2^{\Omega(n)}$, or \prethh and \cdhh fails.
This is an immediate consequence of \cref{thm:2supernil-circuit-early},
for if not, the $\ccand_n$ function computed by a program of size $\ell(n)$
can be also computed by a $\ccand_d\circ\ccmod_m\circ\ccmod_p$-circuit of size $O(\ell(n)^c)$
for the constants $c,d,m,p$ depending only of the algebra $\m A$.
But then CDH tells us that $O(\ell(n)^c)$, and therefore $\ell(n)$ itself,
has to dominate $2^{\Omega(n)}$.
Now, to prove that $\progcsat{\m A}$ is in \rptime recall that the second part of \cite[Proposition 9]{IdziakKKW22-icalp} tells
that a set of binary words accepted by a program $\progg{\po p}{n}{\iota}{S}$ of size $\ell$ is either empty
or has the size at least $2^n/\ell^{c}$, for some constant $c$. 
Thus checking at least $\ell^{c}$ random boolean words of length $n$ finds a word accepted by the program (if there is one at all)
with probability at least $1/2$.
This obviously puts $\progcsat{\m A}$ into \rptime.
\end{proof}

\section{Circuit equivalence}
In this section we will prove \cref{thm:cm-ceqv}. 
To do it we will show that solving \ceqv{} for an algebra $\m A$ can be reduced to solving the very same problem for quotients of $\m A$ by a meet-irreducible congruences. Obviously, every quotient algebra by a meet-irreducible congruence has the smallest congruence bigger than identity relation. This observation plays a crucial role in the proof o \cref{thm:cm-ceqv}.

\begin{proof}[Proof of \cref{thm:cm-ceqv}]
First suppose that $\ceqv{\m A}$ is tractable
to eliminate types $\tn 3$ and $\tn 4$ from the $\typset{\m A}$.
It should be obvious how to do it for type $\tn 3$.
For the lattice type $\tn 4$ we first note that the existence of a solution
(in the two element lattice)
to the systems of two equations of the form $\po m(\o x)=1 \ \& \ \po j(\o x)=0$ is \npc,
where $\po m(\o x)$ is in CNF and $\po j(\o x)$ is in DNF, both with only positive literals.
But obviously this system has no solutions iff
$\po m(\o x)\join\po j(\o x)=\po j(\o x)$ holds identically in $(\set{0,1};\meet,\join)$.
To put this into the minimal set $U$ of type $\tn 4$ simply replace each variable $x$ by $\po e_U(x)$
(where $\po e_U$ is an idempotent polynomial of $\m A$ with the range $U$)
and the operations $\meet,\join$ by the corresponding polynomials of $\m A$
turning $\m A|_U$ into the two element lattice.
This shows that $\typset{\m A}\ci \set{\tn 2}$ (i.e. $\m A$ is solvable) or $\ceqv{\m A}$ is \conpc.

Now the discussion made in \cite[Section 4]{IdziakKK22STACS} between Problems 2 and 3
shows that (under \prethh) in fact $\m A$ has to be nilpotent and that $\sr{\m A}\leq 2$.
This shows the `only if' direction of our theorem.

\medskip
To prove the converse note that an identity holds in an algebra $\m A$
iff it holds in all quotients of $\m A$ by meet-irreducible congruences
(as the intersection of all meet-irreducible congruences is the identity relation $\m 0_{\m A}$). Let $\alpha$ be a meet-irreducible congruence of $\m A$. To complete the proof it suffices to show that (under \cdhh) $\ceqv{\m A/\alpha}\in\rptime$. 

Observe that if $\m A$ is nilpotent and $\sr{\m A}\leq 2$
then each quotient of $\m A$ has these two properties as well. Moreover, identity relation in $\m A/\alpha$ (i.e. $0_{\m A/\alpha }$)  has the unique cover. Since congruence classes for covering pairs have equal sizes \cite[Colloraly 7.5]{fm}, it is clear from definition of supernilpotency that every supernilpotent congruences of $\m A/\alpha$ has cosets of prime power sizes (as for every $\beta_1,\beta_2\in\con{A/\alpha}$, if $\beta_1,\beta_2>0_{\m A/\alpha}$, then  $\beta_1\meet \beta_2> 0_{\m A/\alpha}$). Therefore by \cref{thm:cm-progcsat}, $\progcsat{\m A/\alpha}\in\rptime$. Finally \cref{fact:red-to-progcsat} gives us $\ceqv{\m A}\in\rptime$, as required.
\end{proof}

\section{Notation}

In this section we introduce detailed notation, needed in further sections of the paper.
\subsection*{Algebra} 
We use the standard universal algebraic notation the reader can find e.g. in \cite{BurrisSankappanavar}. Our results heavily rely on Tame Congruence Theory and Modular Commutator Theory which have a detailed description in \cite{hm} and \cite{fm} respectively. One can find the not too long summary of needed notions and facts  in \cite[Section 2]{IdziakK22}. Here we just recall, for readers convenience, the most important notation.

 For an algebra $\m A$ and $a,b\in A$ congruence $\Theta(a,b)$ is the smallest congruence containing the pair $(a,b)$. Note that every join-irreducible congruence is generated by a single pair of elements of the algebra. In particular, covers of the identity relation, called atoms, are generated by a single pair. 

 Minimal sets are the central notion of the Tame Congruence Theory. Formally, for an algebra $\m A$ and $\alpha,\beta\in \con {\m A}$ such that $\alpha < \beta$ we define the $(\alpha,\beta)$-minimal set $U\subseteq A$ as minimal, with respect to the inclusion, among sets of the form $\po f(A)$ for all unary polynomials  $\po f$ of  $\m A$ such that $\po f(\beta)\not\subseteq\alpha$. 
 In this paper we consider $(\alpha,\beta)$-minimal sets for $\alpha\prec\beta$ only (i.e. for so-called prime quotients).  For every minimal set $U$ of $\m A$ there exists unary idempotent polynomial $\po e_U$ of $\m A$ such that $\po e_U(A)=U$. A trace of the $(\alpha,\beta)$-mimal set $U$ is a set $u/\alpha\cap U$ for some $u\in U$ such that $u/\alpha\cap U\not=u/\beta\cap U$. For $(\alpha, \beta)$-minimal set $U$ (or its trace $N$) of algebra $\m A$ we usually consider an induced algebra $\m A/\alpha|_{U/\alpha}$ ($\m A/\alpha|_{N/\alpha}$). The five ways of local behaviour of finite algebras mentioned in \cref{sec-notions-1} are exactly the types of algebras (or rather their polynomial clones) induced on minimal sets traces. Note that all traces of a minimal set are polynomially equivalent. In this paper we usually work with minimal sets of type $\tn 2$, that is in which algebras induced on traces are polynomially equivalent to one-dimensional vector spaces. Traces of a minimal set of type $\tn 2$ are of prime power order, this prime is called a characteristic of the minimal set. Note that all minimal sets taken with respect to the same prime quotients induce polynomially equivalent algebras. It allows us to define the type of prime quotient as a type of minimal sets taken with respect to that quotient. In case of prime quotients of type $\tn 2$ a prime characteristic of the quotient defined in \cref{sec-notions-1} is equal to the characteristic of minimal sets taken with respect to it.

Algebras belonging to congruence modular variety are a wide class of algebras for which Tame Congruence Theory and Commutator Theory work particularly well. In this context term ,,modular'' means that all algebras from the variety have modular congruence lattice i.e. for $\alpha,\beta, \gamma\in \con{A}$ $\alpha\leq\beta$ implies $\alpha\join(\gamma\meet\beta)=(\alpha\join\gamma)\meet\beta$.  Equivalently, lattice is modular if it has no elements $\alpha,\beta,\gamma$ for which $\alpha<\beta$, $\alpha\join\gamma=\beta\join\gamma$ and $\alpha\meet \gamma=\beta\meet \gamma$. Such a sublattice $\{\alpha, \beta, \gamma, \alpha\join\gamma, \alpha\meet\gamma\}$, if found in $\con{A}$, is called a \textit{pentagon}.   If $\intv{\alpha}{\beta}$ and $\intv{\gamma}{\delta}$ are intervals such that $\beta\meet\gamma = \alpha$ and
$\beta\join \gamma = \delta$, then $\intv{\alpha}{\beta}$ is said to transpose up to $\intv{\gamma}{\delta}$, written $\intv{\alpha}{\beta}\nearrow\intv{\gamma}{\delta}$ and $\intv{\gamma}{\delta}$ is said to transpose down to $\intv{\alpha}{\beta}$, written $\intv{\gamma}{\delta}\searrow\intv{\alpha}{\beta}$ and the two intervals are called transposes of one another. Two intervals are said
to be projective if one can be obtained from the other by a finite sequence of transposes. A fundamental fact in lattice theory is that a lattice is modular if and only if its projective intervals are isomorphic. In the case of congruence lattice of an algebra this isomorphism is even stronger since projective prime  quotients have exactly the same minimal sets and in a consequence the same types and, in case of type $\tn 2$, the same characteristics.

Malcev algebras are examples of algebras from congruence modular varieties. Such algebras form so called congruence permutable varieties. Congruences of algebras from this class commute i.e.\! for arbitrary congruences $\alpha$ and $\beta$ we have that  $\alpha\circ\beta=\beta\circ\alpha$.   

Supernilpotency plays a crucial role in our investigations. For a fixed algebra $\m A$ we write $\sigma$ for a biggest supernilpotent congruence and $\kappa$ for the smallest congruence such that $\m A/\kappa$ is supernilpotent. Moreover, we write $\sigma_p$ for the biggest supernilpotent congruence with cosets of size being a power of $p$. The existance of such biggest/smallest congruences follows from term definition of supernilpotence and its properties \cite[Corollary 6.6]{aichmud-2010}. 

\subsection*{Arithmetic operations and functions}
Sometimes we use addition in two different abelian groups  inside one formula. To avoid ambiguity, different symbols for different types of addition are used: $+, \sum$  for one type of addition (for instance in $\z_p$) and $\oplus, \bigoplus$ for the other (in $\z_m$). Moreover, for a natural number $m$ with a prime decomposition $p_1^{\alpha} \cdot p_2^{\alpha_2} \cdot \ldots \cdot p_s^{\alpha_s}$ we write $\sdiv m=\set{p_1,\ldots,p_\s}$ for the set of prime divisors of $m$ and $\pdiv m = p_1 \cdot \ldots \cdot p_s$ for the largest square-free divisor of $m$. We also use $\sdiv A$ for $\sdiv {|A|}$ and $\pdiv A$ for $\pdiv {|A|}$. Likewise, $\ar A$ denotes the maximal arity among basic operations of $\m A$ and $\ar f$ is the arity of the function $f$. For a set $X=\set{d_1,\ldots,d_s}$
and a function $f: X^k \map Y$ we associate to $f$ its
binary expansion $\fcirc{f} : \bool^{\card{X}\cdot k} \map Y$,
i.e., arbitrary function satisfying
\[
f(x_1,\ldots,x_k) =
\fcirc{f}(x_1 \beq d_1,\ldots,x_1 \beq d_s,\ldots,x_k \beq d_1,\ldots,x_k \beq d_s).
\]

\subsection*{Programs}
Recall that a program $(\po p, n, \iota, S)$ computes a Boolean function $\{0,1\}^n \rightarrow \{0,1\}$. From now on we write $\progb{\po p}{\iota}{S}(\o b)$ for evaluation of this function on a tuple $\o b \in \{0,1\}^n$. Moreover, if $S = \{c\}$ is one-element set, we simply write $\progb{\po p}{\iota}{\ccc}(\o b)$ instead of $\progb{\po p}{\iota}{\{\ccc\}}(\o b)$. With each program $(\po p, n, \iota, S)$ over $\m A$ we can naturally associate an inner function $\prog{\po p}{\iota}: \{0,1\}^n \rightarrow A$ satisfying $\prog{\po p}{\iota}(\o b) \in S$ iff $\progb{\po p}{\iota}{S}(\o b) = 1$. This function is computed by the expression 
$\po p(a^{x_1}(b^{x_1}),\ldots,a^{x_k}(b^{x_k}))$, where all $a^{x_i}, b^{x_i}$ are provided by the instructions $\iota(x) = (b^x, a^x)$. For a congruence $\sigma \in \con{\m A}$ and a program $(\po p, n, \iota, S)$ we can define the quotient program $(\po p/\sigma, n, \iota/\sigma, S/\sigma)$ of an algebra $\m A/\sigma$, by simply reinterpreting $\po p$ to be a circuit over $\m A/\sigma$ (algebras $\m A$ and $\m A/\sigma$ have the same signature), and taking $\iota(x) = (b^x, a^{x}/\sigma)$, and $S/\sigma = \{[s]_{\sigma}: s\in S\}$.
\subsection*{Circuits} In next chapters we use different types of gates to build bounded-depth circuits computing Boolean functions. For instance we write $\ccand_{d}$ to denote a gate which takes at most $d$ inputs and computes their conjunction and $\ccor_{d}$ for a gate computing at most $d$-ary disjunction. A $\ccmod_m$-type boolean gate is any unbouded fan-in gate, which sums the inputs modulo $m$ and returns $1$ iff the sum belongs to some accepting set $S\subseteq \z_m$. We allow different accepting sets for different gates. Other kinds of gates appearing in the paper are $\sumpk{p}{\nu}$ gates which also rely on modulo counting, but in a more complex way. Each such a gate takes $n$ inputs $b_1, \ldots, b_n$, and computes their affine combination $\alpha_1 b_1 + \ldots + \alpha_n b_n + d$  in $\z_p^v$. Here we interpret each $b_i$ as $v$-dimensional vector $(b_i, \ldots, b_i)$. We allow each $\alpha_i$ to be arbitrary endomorphism of the abelian group $\z_p^v$, which can be also viewed as arbitrary $v \times v$ matrix with coefficients from $\z_p$. Hence $\alpha_i b_i$ can be viewed as applying linear map $\alpha_i$ to $(b_i, \ldots, b_i)$.  In this way $\sumpk{p}{\nu}$ gate computes a function of type $\{0,1\}^n \rightarrow \z_{p}^{\nu}$. Additionally, for $c\in \z_p$ let $\sumpk{p}{\nu}^c$ denote a boolean variant of $\sumpk{p}{\nu}$ gate which returns value $1$ when the affine combination $\alpha_1 b_1 + \ldots + \alpha_n b_n + d$ evaluates to $c$, and returns $0$ otherwise. Note that for a field $\mathbb{F}$ with underlying group $\z_p^{v}$, every scalar $g \in F$ defines a linear map $x \mapsto g \cdot x$. Thus such $g$ (or rather its associated endomorphism) can be used as a coefficient $\alpha_i$ inside $\sumpk{p}{\nu}$ gate.  To simplify some of the later calculations, we will always assume that vector $(1,\ldots, 1)$ is the unit of such a field $\mathbb{F}$.  Note that for a field $\mathbb{F} = (F, \cdot, +, 0, 1)$, for each $e \in F$ we can define a new field $\mathbb{F}_e = (F, \circ, +, 0, e)$, by defining $x \circ y$ to be $x \cdot e^{-1} \cdot y $.

It is indeed a valid assumption, since for every  non-zero element $g\in \z_p^{v}$ which is not a unit of the field $\mathbb{F}$ over $\z_p^{v}$, we can redefine multiplication to be $x \cdot g^{-1} \cdot y$ in terms of old multiplication, and fix the new inverse of an element $x$ to be $x^{-1}\cdot g^{2}$. One can easly check that such a rewriting of $\mathbb{F}$ defines a new field in which $g$ plays role of the unit (and we did not alter the underlying group).

Having all these gates, we build a bounded-depth circuits by listing types of gates which are allowed on each layer, starting with the input layer on the left, finishing with the output layer on the right. For instance $\ccand_d\circ\ccmod_m\circ\ccmod_p$ denotes a $3$-level circuit with inputs wired to some set of $\ccand_d$ gates, then this  $\ccand_d$ gates have wires to $\ccmod_m$ gates and on the output level there is one $\ccmod_p$ gate. We allow multiple wires between two gates.

\section{Useful algebraic facts}
\label{section:collection}

Before dealing with the remaining parts of the proofs, we present a number of lemmas describing certain useful aspects of the local behaviour of algebras.

\begin{lm}
\label{lm:simple-atom}
Let $\m A$ be a finite algebra and $\beta$ be an atom in $\con A$.
Then for each element $\zero\in A$ the induced algebra $\m A|_{\zero/\beta}$
on the coset $\zero/\beta$
is either trivial or simple.
\end{lm}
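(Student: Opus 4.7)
If $|U|=1$ with $U:=\zero/\beta$, the induced algebra is trivial, so suppose $|U|>1$ and let $\gamma$ be a congruence of $\m A|_U$ different from both $0_U$ and $1_U$; the plan is to derive a contradiction with $\beta$ being an atom of $\con\m A$. The strategy is to \emph{lift} $\gamma$ to a congruence of $\m A$ that sits strictly between $0_{\m A}$ and $\beta$. Concretely, define
\[
\tilde\gamma \;=\; \bigl\{(x,y)\in\beta : \text{for every unary polynomial } \po p \text{ of } \m A,\ \po p(x),\po p(y)\in U \Rightarrow (\po p(x),\po p(y))\in\gamma\bigr\}.
\]
I will check (i) $\tilde\gamma$ is a congruence of $\m A$, (ii) $\tilde\gamma \leq \beta$ (immediate from the definition), and (iii) $\tilde\gamma \cap (U\times U)=\gamma$. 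Together with $\gamma \notin \{0_U,1_U\}$ these force $0_{\m A}<\tilde\gamma<\beta$, contradicting the atom hypothesis.

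\textbf{The central observation} that underlies everything is this: any unary polynomial $\po p$ of $\m A$ preserves $\beta$, and $U$ is an entire $\beta$-class, so the moment $\po p$ sends a single element of $U$ into $U$, it automatically satisfies $\po p(U)\subseteq U$ and hence restricts to a polynomial of $\m A|_U$. This turns the defining clause of $\tilde\gamma$ into a usable hypothesis: whenever both $\po p(x)$ and $\po p(y)$ land in $U$ and at least one of $x,y$ lies in $U$, one may treat $\po p|_U$ as a polynomial of the induced algebra.

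\textbf{Verifications.} Reflexivity and symmetry of $\tilde\gamma$ are routine. For transitivity, given $\po p$ with $\po p(x),\po p(z)\in U$, applying the central observation to $y\,\beta\,x$ forces $\po p(y)\in U$ too, and the two $\tilde\gamma$-conditions compose inside $\gamma$. For compatibility with an $n$-ary basic operation $\po f$, I interpolate coordinates through $z^j=(y_1,\ldots,y_j,x_{j+1},\ldots,x_n)$: given a unary $\po p$ with $\po p(\po f(\bar x)),\po p(\po f(\bar y))\in U$, a $\beta$-chase shows every $\po p(\po f(z^j))$ lies in $U$, and the unary polynomials $\po g_j(w)=\po p\bigl(\po f(y_1,\ldots,y_j,w,x_{j+2},\ldots,x_n)\bigr)$ send both $x_{j+1}$ and $y_{j+1}$ into $U$. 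Then $(x_{j+1},y_{j+1})\in\tilde\gamma$ yields $(\po p(\po f(z^j)),\po p(\po f(z^{j+1})))\in\gamma$, and transitivity of $\gamma$ chains these into $(\po p(\po f(\bar x)),\po p(\po f(\bar y)))\in\gamma$, as needed. For (iii), $\tilde\gamma \cap (U\times U)\subseteq\gamma$ follows by taking $\po p$ to be the identity, while the reverse inclusion is exactly the central observation applied to $(u,v)\in\gamma$.

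\textbf{The main hurdle} is building a relation on all of $\m A$ out of $\gamma$ that survives compatibility with operations which may leave $U$. Defining $\tilde\gamma$ by a universal quantification gated on ``both images landing in $U$'' is the right move, and the central observation is what lets the coordinate-by-coordinate interpolation actually stay inside $U$. Once $\tilde\gamma$ is in hand, the atom property is used only at the very last step, as a clean formality that rules out a strictly intermediate congruence.
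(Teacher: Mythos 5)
Your proof is correct, and it takes a genuinely different route from the paper's. The paper argues directly: for distinct $a,b\in U:=\zero/\beta$, atomicity gives $\Theta_{\m A}(a,b)=\beta$, so any $(c,d)\in\beta|_U$ is connected by a Mal'cev chain of unary-polynomial projections of $(a,b)$; since each such polynomial hits $U$, it preserves $U$, so the chain lives entirely in $\m A|_U$ and yields $(c,d)\in\Theta_{\m A|_U}(a,b)$. You instead work contrapositively: you lift a hypothetical intermediate congruence $\gamma$ of $\m A|_U$ to an explicit relation $\tilde\gamma$ on $\m A$ via a universal condition over all unary polynomials, verify by hand that it is a congruence and that $\tilde\gamma\cap(U\times U)=\gamma$, and then derive $0_{\m A}<\tilde\gamma<\beta$, contradicting atomicity. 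Both arguments hinge on the same ``central observation'' (a unary polynomial of $\m A$ mapping one element of a $\beta$-class into that class maps the whole class into itself), but they exploit it in opposite directions. The paper's version is shorter because it imports the Mal'cev congruence-generation characterization wholesale; yours re-derives the needed consequence from scratch via the coordinate-interpolation argument, which makes it longer but self-contained and arguably more transparent about where the atom hypothesis enters (only at the very last step, versus at the outset in the paper's choice $\Theta(a,b)=\beta$). Your verification of compatibility of $\tilde\gamma$ with the basic operations via the intermediate tuples $z^j$ and the auxiliary unary polynomials $\po g_j$ is precisely the content that Mal'cev chains package; everything checks out.
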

\begin{proof}
We show that collapsing in $\m A|_{\zero/\beta}$ two elements $a\neq b$ we collapse any pair
$(c,d)\in (\zero/\beta)\times(\zero/\beta)$.
Obviously we have $\Theta_{\m A}(a,b)=\beta$ so that $c$ and $d$ can be connected by the so-called Malcev chain, i.e. the projections of the pair $(a,b)$ by unary polynomials of $\m A$.
More precisely it means that there are unary polynomials $\po p_0,\ldots,\po p_s$ of $\m A$ with
$c\in\set{\po p_0(a),\po p_0(b)}$,
$\set{\po p_j(a),\po p_j(b)}\cap\set{\po p_{j+1}(a),\po p_{j+1}(b)}\neq\emptyset$ and
$d\in\set{\po p_s(a),\po p_s(b)}$.
This gives that each of the $\po p_j$'s maps $\zero/\beta$ into $\zero/\beta$,
i.e. $\po p_j$ is a basic operation of the induced algebra $\m A|_{\zero/\beta}$.
But then the very same chain is a Malcev chain inside $\m A|_{\zero/\beta}$
showing that $(c,d)\in \Theta_{\m A|_{\zero/\beta}}(a,b)$, as claimed.
\end{proof}

The next lemma is a specialized version of some much more general facts known in Universal Algebra (see for instance \cite[Corollary 5.8]{fm}).

\begin{lm}
\label{lm:simple-module-atom}
Let $\m A$ be a finite nilpotent algebra from a congruence modular variety
with a Malcev term $\po d(x,y,z)$
and $\beta$ be an abelian atom in $\con A$.
Then for each element $\zero\in A$ the induced algebra $\m A|_{\zero/\beta}$ on the coset $\zero/\beta$ is polynomially equivalent to a simple module in which the underlying group structure is determined by the binary operation $x+y=\po d(x,\zero,y)$
and the corresponding group $(\zero/\beta; +,\zero)$
is isomorphic to some power of the group $(\z_p;+,0)$.
\end{lm}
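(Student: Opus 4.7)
The plan is to combine \cref{lm:simple-atom} with the standard commutator-theoretic correspondence between abelian congruences and module structures in congruence modular Malcev varieties. The first step is to invoke \cref{lm:simple-atom} to reduce to the case in which $\m A|_{\zero/\beta}$ is simple as an algebra (the trivial case giving the conclusion with $k=0$). This is the only use of the fact that $\beta$ is an atom; every subsequent step uses only that $\beta$ is abelian.

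Next I exploit the hypothesis $\comm{\beta}{\beta}=0_{\m A}$. In a congruence modular variety this is equivalent to the term condition for $\beta$, and combined with the Malcev identities $\po d(y,x,x)=\po d(x,x,y)=y$ it yields, via the standard "abelian group from a Malcev term" construction recorded in \cite[Chapter 5]{fm}, that the operation $x+y=\po d(x,\zero,y)$ endows $\zero/\beta$ with an abelian group structure having neutral element $\zero$ and inverse $-x=\po d(\zero,x,\zero)$. Associativity, commutativity and the inverse laws all follow by repeatedly applying the term condition to suitable instantiations of $\po d$; I would appeal to \cite{fm} rather than grind through these routine manipulations.

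The same term condition forces every basic, and therefore every polynomial, operation of $\m A$ restricted to $\zero/\beta$ to act as an affine map with respect to $+$. Thus $\m A|_{\zero/\beta}$ is polynomially equivalent to a module $\m M$ over the ring $R$ generated, inside the endomorphism ring of $(\zero/\beta;+,\zero)$, by the linear parts of the unary polynomials of $\m A$ that fix $\zero$. Congruences of $\m A|_{\zero/\beta}$ correspond to $R$-submodules of $\m M$, so the simplicity already obtained from \cref{lm:simple-atom} transfers to give that $\m M$ is a simple $R$-module.

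Finally, every finite simple module is elementary abelian as a group: for any prime $p$ dividing $\card{M}$, the $p$-torsion part $\{x\in M:px=0\}$ is a non-trivial $R$-submodule and, by simplicity, coincides with $\m M$, so $\m M$ has exponent $p$ and hence $(\zero/\beta;+,\zero)\cong(\z_p;+,0)^k$ for some $k$. The main obstacle I expect is the careful verification that $\po d(x,\zero,y)$ really yields an abelian group; this derivation from the term condition, though standard in commutator theory, is the most delicate technical point and is the reason for citing \cite[Chapter 5]{fm} instead of reconstructing the argument here.
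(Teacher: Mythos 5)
Your proposal is correct and follows essentially the same route as the paper: reduce to simplicity via \cref{lm:simple-atom}, invoke the commutator-theoretic results from \cite[Chapter~5]{fm} (the paper specifically uses Corollaries~5.8 and~5.9) to get the abelian-group and module structure, and then deduce elementary abelianness from module simplicity. The only small variation is the final step: you kill the $p$-torsion submodule using simplicity, whereas the paper argues that the multiplication-by-$m$ endomorphism (for $m$ the minimal non-trivial element order) would be non-zero and non-surjective, contradicting simplicity — the two arguments are equivalent in substance.
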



\begin{proof}
By \cite[Corollary 5.8]{fm} we know that the coset $\zero/\beta$ with the operation
$x+y=\po d(x,\zero,y)$ is a group.
Moreover, abelianity of the congruence $\beta$ gives that the induced algebra $\m A|_{\zero/\beta}$ is abelian, and therefore, by \cite[Corollary 5.9]{fm} is affine,
i.e. polynomially equivalent to a finite module $M$.
But \cref{lm:simple-atom} ensures us that this finite module is simple.
This in particular gives that all non-zero (i.e. different from $\zero$) elements of $M$ have the same order, as otherwise for $a,b\in M=\zero/\beta$ of orders $m,n$ respectively,
if $m<n$ then the endomorphism $M\ni x \mapsto mx\in M$ is non-zero (as $mb\neq \zero$)
and is not surjective (as $ma=\zero$), contradicting simplicity of $M$.
This shows that the underlying group of the module $M$ is a finite elementary abelian group,
i.e. a finite power of the group $(\z_p;+,0)$ for some prime $p$.
\end{proof}

Now we present yet another simple observation.

\begin{lm}
\label{lm:minset}
Let $\m A$ be a finite nilpotent Malcev algebra
and $\delta\prec\theta$ two of its congruences.
Then every element $e\in A$ belongs to some $(\delta,\theta)$-minimal set.
\end{lm}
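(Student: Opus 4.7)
The plan is to take an existing $(\delta,\theta)$-minimal set and ``Malcev-translate'' it so that it passes through $e$. Concretely, I would fix any $(\delta,\theta)$-minimal set $V$ together with its idempotent unary polynomial $\po e_V$ (so $\po e_V(A)=V$ and $\po e_V$ restricts to the identity on $V$), pick some $v\in V$, and consider the unary polynomial
\[
\po g(x) := \po d(\po e_V(x),v,e).
\]
A direct use of the Malcev identity $\po d(x,x,y)=y$ gives $\po g(v)=\po d(v,v,e)=e$, so $e\in\po g(A)$ for free.

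The real work is to show that $\po g(A)$ is itself a $(\delta,\theta)$-minimal set. For this I would introduce the auxiliary polynomial $\po t(y):=\po d(y,v,e)$ and invoke the standard fact for finite nilpotent Malcev algebras (cf.\ Chapter 7 of \cite{fm}) that, declaring $v$ to be an identity and setting $y+z:=\po d(y,v,z)$, one turns $A$ into a nilpotent loop. In any loop every left and right translation is a bijection, so $\po t$ is a polynomial bijection of the finite set $A$; its inverse coincides with a sufficiently high iterate of $\po t$ and is therefore also a polynomial. Consequently $\po t$ not only preserves but also reflects every congruence of $\m A$.

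Putting this together, $V$ being a $(\delta,\theta)$-minimal set gives $\po e_V(\theta)\not\subseteq\delta$, and reflection of $\delta$ by $\po t$ yields
\[
\po g(\theta)=\po t(\po e_V(\theta))\not\subseteq\delta,
\]
so $\po g(A)$ contains some $(\delta,\theta)$-minimal set $W$. On the other hand, bijectivity of $\po t$ gives $|\po g(A)|=|\po t(V)|=|V|$, and by the classical fact from Tame Congruence Theory that all $(\delta,\theta)$-minimal sets share a common cardinality, $|W|=|V|=|\po g(A)|$. This forces $W=\po g(A)$, so $\po g(A)$ itself is a $(\delta,\theta)$-minimal set containing $e$.

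The main obstacle is the bijectivity of $\po t$: in a general Malcev algebra the term $\po d$ need not induce a loop on $A$, and Malcev translations can fail to be injective. Nilpotency is precisely what forces the polynomial definability of the loop structure, and with it the bijectivity of every such translation; once this is in hand, the rest of the argument reduces to bookkeeping with cardinalities and congruence reflection.
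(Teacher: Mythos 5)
Your proof is correct, but it follows a genuinely different route than the paper's. The paper takes a pair $(c,d)\in\theta|_V\setminus\delta$, invokes \cite[Corollary~7.5]{fm} (in a nilpotent algebra all cosets of a fixed congruence have equal size) to locate a target pair $(e,a)\in\Theta(c,d)\setminus\delta$, collapses the resulting Malcev chain into a single unary polynomial $\po p$ with $\po p\vpair{c}{d}=\vpair{e}{a}$, and then cites \cite[Lemma~2.8]{hm} to conclude directly that $\po p(V)$ is a $(\delta,\theta)$-minimal set through $e$. You instead write down the translating polynomial explicitly as $\po g=\po d(\po e_V(\cdot),v,e)=\po t\circ\po e_V$, and the nilpotency enters through a different fact — that $\po t=\po d(\cdot,v,e)$ is a bijection with polynomial inverse (essentially \cite[Lemma~7.3]{fm}, which the paper itself invokes elsewhere) — from which you get congruence reflection and hence $\po g(\theta)\not\subseteq\delta$, finishing with a cardinality argument (all $(\delta,\theta)$-minimal sets have the same size) rather than the direct TCT citation. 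Both proofs take $V$ and translate it by a unary polynomial to a minimal set through $e$; yours is more constructive, avoids having to locate the pair $(e,a)$, and trades the citation of \cite[Lemma~2.8]{hm} for the bijectivity-plus-cardinality bookkeeping (though you could just as well have applied \cite[Theorem~2.8(4)]{hm} to $\po t|_V$ and skipped the cardinality step). One cosmetic point: the phrase ``turns $A$ into a nilpotent loop'' is more than you need — all that is used is that the single right translation $\po t$ is a permutation — so it is safer to cite the permutation-in-the-first-variable statement directly.
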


\begin{proof}
Start with any $(\delta,\theta)$-minimal set $V$ and pick $(c,d)\in\theta|_V\setm\delta$.
By \cite[Corollary 7.5]{fm} in nilpotent algebras all cosets of the same congruence have the same size. Thus there is $a \in A$ with $(e,a)\in \Theta(c,d)\setm\delta$.
Since $\m A$ is Malcev it has the unary polynomial $\po p(x)$ such that
$\po p\vpair{c}{d} = \vpair{e}{a}$.
In particular the polynamial $\po p$ does not collapse $\theta$ to $\delta$
so that, by \cite[Lemma 2.8]{hm}, $\po p(V)$ is a minimal set and it contains our element $e$.
\end{proof}

Now we demonstrate that easiness of \progcsat{} can be transferred to quotients. 
\begin{fact}

\label{fact:quotient}
For an algebra $\m A$ and its congruence $\theta$
there is a polynomial time reduction from \progcsat{\m A/\theta} to \progcsat{\m A}.
\end{fact}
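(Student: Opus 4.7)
The plan is to directly lift a program over $\m A/\theta$ to a program over $\m A$ that computes exactly the same Boolean function, so that acceptance (hence satisfiability) is preserved verbatim. Given an instance of $\progcsat{\m A/\theta}$, say $(\po p, n, \iota, S)$ with $\iota(x) = (b^x, c^x_0, c^x_1)$ and $c^x_i \in A/\theta$, I will first pick, for every variable $x$ of $\po p$ and every $i\in\set{0,1}$, an arbitrary representative $a^x_i \in A$ with $a^x_i/\theta = c^x_i$. Since $\m A$ and $\m A/\theta$ share the same signature, the same circuit $\po p$ is already a valid circuit over $\m A$; no rewriting is needed. I then set $\iota'(x) = (b^x, a^x_0, a^x_1)$ and $S' = \set{a \in A : a/\theta \in S}$ (which is just the union of the cosets belonging to $S$), producing the lifted program $(\po p, n, \iota', S')$ over $\m A$.

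The core observation is that the canonical homomorphism $\pi \colon \m A \to \m A/\theta$ commutes with arbitrary circuit evaluations: for every $\o b \in \bool^n$, the inner value $v = \prog{\po p}{\iota'}(\o b) \in A$ of the lifted program satisfies $\pi(v) = \prog{\po p}{\iota}(\o b) \in A/\theta$, because each input $a^{x_j}(b^{x_j})$ was chosen to project onto $c^{x_j}(b^{x_j})$, and every basic operation of $\m A$ projects to the corresponding basic operation of $\m A/\theta$. By the definition of $S'$ we then have $v \in S'$ iff $\pi(v) \in S$, so the Boolean functions $\progb{\po p}{\iota'}{S'}$ and $\progb{\po p}{\iota}{S}$ coincide. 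In particular, one accepts some $\o b$ iff the other does.

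Finally, the size of the lifted program differs from the input only in the explicit listing of $S'$, whose cardinality is at most $|A|$; since the algebra $\m A$ is fixed, this is a constant, so the reduction runs in polynomial (indeed, linear) time. There is no real obstacle in this proof; the only thing to verify carefully is the commutation of $\pi$ with circuit evaluation, which is immediate by structural induction on the circuit $\po p$.
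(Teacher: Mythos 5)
Your proof is correct and matches the paper's own argument essentially verbatim: pick arbitrary $\theta$-representatives for the instruction constants, keep the same circuit $\po p$ (same signature), take $S'$ to be the union of the cosets in $S$, and observe that the canonical surjection $\m A \to \m A/\theta$ carries the inner evaluation of the lifted program onto that of the original, so acceptance is preserved. The only difference is that you spell out the ``homomorphism commutes with circuit evaluation'' step a bit more explicitly, which the paper leaves implicit under ``since $\theta$ is a congruence relation.''
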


\begin{proof}
Let  $(\po p, n, \iota, S)$ be a program over $\m A/\theta$. Then for $\iota(x) = (b^x, [a_0^x]_{\theta}, [a_1^x]_{\theta})$ we put $\iota'(x) = (b^x, a_0^x, a_1^x)$ and $S' = \bigcup S$ (where $a_0^x$, $a_1^x$ are arbitrary representants of the respective cosets in $\m A/\theta$). Then since $\theta$ is a congruence relation, for any $\o b \in \{0,1\}^n$ we have $\prog{\po p}{\iota}(\o b) \in S \iff \prog{\po p}{\iota'}(\o b) \in S'$. So the program $(\po p, n, \iota', S')$ over $\m A$ computes exactly the same Boolean function as $(\po p, n, \iota, S)$ over $\m A/\theta$, while both programs clearly have the same size. Obviosuly $(\po p, n, \iota', S')$ can be computed from $(\po p, n, \iota, S)$ in polynomial (linear) time.

\end{proof}

The following simple fact provides us with a way to reason about characteristics of intervals below a join irreducible congruence in the congruence lattice of an algebra.

\begin{fact}\label{fact:ji-PUPI}
Let $\m A$ be a solvable Malcev algebra and $\alpha,\beta\in \con{A}$ such that $\intv{\alpha}{\beta}$ is a \pupi.

Then,  for every $\gamma\in\con{A}$ join irreducible in $\intv{\alpha}{\beta}$ it follows that $|\charrset{\alpha,\gamma}|=1$.
\end{fact}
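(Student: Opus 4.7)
The plan is to first prove an auxiliary lemma saying that $\charrset{\alpha,\beta}\subseteq\{p_1,\ldots,p_k\}$ for any PUPI $\intv{\alpha}{\beta}$ with factors $\alpha_1,\ldots,\alpha_k$ of characteristics $p_1,\ldots,p_k$, and then to use this lemma together with join-irreducibility of $\gamma$ to isolate a single characteristic inside $\intv{\alpha}{\gamma}$. I will prove the auxiliary lemma by induction on $k$, using the following dichotomy produced by modularity of $\con{\m A}$ for every cover $\delta\prec\epsilon$ in $\intv{\alpha}{\beta}$ and every factor $\alpha_i$: either $\delta\wedge\alpha_i<\epsilon\wedge\alpha_i$, in which case the transposition $\intv{\delta\wedge\alpha_i}{\epsilon\wedge\alpha_i}\nearrow\intv{\delta}{\epsilon}$ identifies the two prime intervals projectively and forces the characteristic to equal $p_i$; or $\delta\wedge\alpha_i=\epsilon\wedge\alpha_i$, in which case the symmetric transposition $\intv{\delta}{\epsilon}\nearrow\intv{\delta\vee\alpha_i}{\epsilon\vee\alpha_i}$ places the prime interval inside $\intv{\alpha_i}{\beta}$, and after transport through the modular isomorphism $\intv{\alpha_i}{\beta}\cong\intv{\alpha}{\bar\alpha_i}$ (with $\bar\alpha_i:=\bigvee_{j\neq i}\alpha_j$) the characteristic coincides with one in the smaller PUPI $\intv{\alpha}{\bar\alpha_i}$, which by the inductive hypothesis lies in $\{p_j:j\neq i\}$.

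For the main statement, the easy case is $\gamma\leq\alpha_i$ for some $i$: the lemma gives $\charrset{\alpha,\gamma}\subseteq\charrset{\alpha,\alpha_i}=\{p_i\}$. Otherwise, let $\gamma^-$ be the unique subcover of $\gamma$ in $\intv{\alpha}{\beta}$. For each $i$, $\gamma\wedge\alpha_i<\gamma$ together with join-irreducibility of $\gamma$ forces $\gamma\wedge\alpha_i\leq\gamma^-$, whence $\gamma^-\wedge\alpha_i=\gamma\wedge\alpha_i$. Thus the cover $\gamma^-\prec\gamma$ falls into the second branch of the dichotomy for every $i$, and its characteristic lies in $\bigcap_i\{p_j:j\neq i\}$, which is exactly the set of primes appearing at least twice among $p_1,\ldots,p_k$. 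Fix such a repeated prime $p:=\charr(\gamma^-,\gamma)$, set $J:=\{i:p_i=p\}$ (with $|J|\geq 2$), $\alpha_J:=\bigvee_{i\in J}\alpha_i$ and $\bar\alpha_J:=\bigvee_{i\notin J}\alpha_i$; a standard induction using PUPI independence in the modular lattice $\con{\m A}$ gives $\alpha_J\wedge\bar\alpha_J=\alpha$ and $\alpha_J\vee\bar\alpha_J=\beta$.

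The final step will be to establish $\gamma\leq\alpha_J$: otherwise $\gamma\wedge\alpha_J<\gamma$ forces $\gamma^-\wedge\alpha_J=\gamma\wedge\alpha_J$ exactly as before, so running the dichotomy with $\alpha_J$ in place of $\alpha_i$ would make $\charr(\gamma^-,\gamma)$ come from the PUPI $\intv{\alpha}{\bar\alpha_J}$, whose characteristics by the auxiliary lemma lie in $\{p_i:i\notin J\}$---disjoint from $\{p\}$---a contradiction. Once $\gamma\leq\alpha_J$ is in hand, the auxiliary lemma applied to $\intv{\alpha}{\alpha_J}$ (all of whose factor characteristics equal $p$) yields $\charrset{\alpha,\gamma}\subseteq\{p\}$, and non-emptiness from $\alpha<\gamma$ forces $|\charrset{\alpha,\gamma}|=1$. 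The main technical obstacle I anticipate is the Boolean-sublattice step: confirming that the PUPI independence $\alpha_i\wedge\bigvee_{j\neq i}\alpha_j=\alpha$ extends to $\alpha_J\wedge\bar\alpha_J=\alpha$ for every partition $\{J,J^c\}$, which is classical for independent families in a modular lattice but needs to be invoked with care, alongside the standard fact that projective prime intervals in the congruence lattice of a solvable Malcev algebra share their characteristic.
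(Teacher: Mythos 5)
Your proof is correct, but it takes a genuinely different and more elementary route than the one in the paper. The paper's argument is essentially three lines: since $\intv{\alpha}{\beta}$ is a PUPI, $\beta/\alpha$ is supernilpotent in $\m A/\alpha$ (via the equivalence of the PUPI condition with supernilpotence, \cite{MayrS21}); supernilpotence passes down to subcongruences, so $\gamma/\alpha$ is also supernilpotent and hence $\intv{\alpha}{\gamma}$ is \emph{itself} a PUPI; and join-irreducibility of $\gamma$ then forces $\gamma$ to coincide with one of the PUPI factors of $\intv{\alpha}{\gamma}$, yielding $|\charrset{\alpha,\gamma}|=1$ at once. You avoid invoking the equivalence with supernilpotence and its heredity altogether, arguing purely inside the modular congruence lattice with transpositions and the fact that projective prime quotients share characteristics. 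The pieces you flag as potential obstacles are indeed standard: in a modular lattice the independence $\alpha_i\wedge\bigvee_{j\neq i}\alpha_j=\alpha$ does upgrade to $\bigvee_{i\in J}\alpha_i\wedge\bigvee_{i\notin J}\alpha_i=\alpha$ for every partition, and your dichotomy for a cover $\delta\prec\epsilon$ versus a factor $\alpha_i$ is valid (in the first branch $\delta\vee(\epsilon\wedge\alpha_i)=\epsilon$ follows from $\delta\prec\epsilon$; in the second branch modularity rules out $\delta\vee\alpha_i=\epsilon\vee\alpha_i$). The trade-off is clear: your route is self-contained and makes the lattice mechanics explicit, at the cost of a substantially longer argument with more bookkeeping (the auxiliary lemma, the repeated-prime case, the Boolean sublattice step); the paper's route is much shorter but leans on a heavier structural fact — the PUPI/supernilpotence equivalence and its downward heredity — that it is entitled to cite. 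Both proofs are valid.
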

\begin{proof}
From the fact that $\intv{\alpha}{\beta}$ is a \pupi we immediately have that $\beta/\alpha$ and hence $\gamma/\alpha$ are supernilpotent in $\m A/\alpha$. Therefore, $\intv{\alpha}{\gamma}$ is a \pupi. Hence, $\gamma=\bigvee_{i\in I}\alpha_i$ where $\{\alpha_i\}_{i\in I}$ is a set of maximal congruences from  $\intv{\alpha}{\gamma}$ with $|\charrset{\alpha,\alpha_i}|=1$. But $\gamma$ is join-irreducible in  $\intv{\alpha}{\beta}$, and hence in  $\intv{\alpha}{\gamma}$, and  $\gamma=\bigvee_{i\in I}\alpha_i$ implies that $\gamma=\alpha_i$ for some $i$. Thus $|\charrset{\alpha,\gamma}|=1$.
\end{proof}

\section{ProgramCSat -- hardness}
\label{section:hard}

The next Lemma is modeled after  Lemma 3.1 from \cite{ikk:mfcs}. It provides us with a normal form for all the $s$-ary functions of type $\z_m^s \map \z_p$.

\begin{lm}
\label{lm-zpqe}
Let $m$ be a square-free positive integer and $p\nmid m$ be a prime.
Then every function $f: \z_m^s \map \z_p$ can be expressed by
\[
f(x_1,\ldots,x_s)=\sum_{(\o\beta,u)\in\m \z_m^s\times\z_m}
\mu_{\o\beta,u}\cdot \b\left(\bigoplus_{i=1}^s \beta_i x_i\oplus u\right),
\]
where
\begin{itemize}
    \item $\b:\z_m\map\z_p$ is given by $\b(0)=1$ and $\b(x)=0$ for $x \neq 0$,
    \item $\mu_{\o\beta,u}\in \z_p$,
    \item $\sum$ is the addition from $\mathbb{Z}_p$,
    \item $\bigoplus$ is the addition from $\mathbb{Z}_m$.
\end{itemize}
Moreover, coefficients $\mu_{\o\beta,u}\in \z_p$ are computable in $2^{O(s)}$ steps.
\end{lm}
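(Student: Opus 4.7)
The plan is a Fourier/character-theoretic construction carried out in an algebraic extension of $\z_p$, followed by a linear-algebra descent back to $\z_p$. Since $\gcd(m,p)=1$, the polynomial $X^m-1$ is separable over $\z_p$, so a primitive $m$-th root of unity $\zeta$ exists in a finite extension $\mathbb{F}=\z_p(\zeta)$, and $m$ is invertible in $\z_p$. Over $\mathbb{F}$ the orthogonality identity
\[
\b(y)=\frac{1}{m}\sum_{c\in \z_m}\zeta^{cy}
\]
holds, and it inverts pointwise to
\[
\zeta^y=\sum_{u\in \z_m}\zeta^{-u}\,\b(y\oplus u),
\]
since on the right only $u=-y$ contributes.

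First I would expand an arbitrary $f:\z_m^s\to \z_p\subseteq \mathbb{F}$ in the $\mathbb{F}$-basis of characters $\set{\zeta^{\bigoplus_i\gamma_ix_i}}_{\o\gamma\in \z_m^s}$, obtaining Fourier coefficients $\h f(\o\gamma)\in\mathbb{F}$. Substituting the inverse identity with $y=\bigoplus_i\gamma_ix_i$ then yields
\[
f(\o x)=\sum_{\o\gamma,u}\h f(\o\gamma)\,\zeta^{-u}\,\b\Bigl(\bigoplus_{i=1}^s\gamma_ix_i\oplus u\Bigr),
\]
which already matches the claimed shape, with coefficients $\mu_{\o\gamma,u}=\h f(\o\gamma)\,\zeta^{-u}$ living a priori in $\mathbb{F}$ rather than in $\z_p$.

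The main obstacle is this descent from $\mathbb{F}$-coefficients back to $\z_p$-coefficients; I avoid a naive Galois-averaging argument because $p$ may well divide $[\mathbb{F}:\z_p]$. Instead I would read the target formula as a $\z_p$-linear system in the unknowns $\mu_{\o\beta,u}$: the coefficient matrix $A$, of size $m^s\times m^{s+1}$, has entries $\b(\bigoplus_i\beta_ia_i\oplus u)\in\set{0,1}\subseteq \z_p$, and the right-hand side $(f(\o a))_{\o a\in \z_m^s}$ lies in $\z_p^{m^s}$. Because $\mathbb{F}$ is a free $\z_p$-module, extension of scalars preserves the ranks of both $A$ and the augmented matrix $[A\mid f]$; hence the consistency of the system over $\mathbb{F}$, just established, already forces consistency over $\z_p$, and any $\z_p$-solution supplies the required coefficients. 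For the ``moreover'' clause, with $m$ and $p$ treated as fixed constants Gaussian elimination over $\z_p$ solves this $m^s\times m^{s+1}$ system in $\mathrm{poly}(m^s)=2^{O(s)}$ time, which also dominates the cost of assembling $A$.
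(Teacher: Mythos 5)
Your proof is correct but takes a genuinely different route from the paper's. The paper argues bottom-up and fully explicitly: it first treats $m$ equal to a prime $q$, recursively building indicator functions $\b_k^q:\z_q^k\to\z_p$ of the zero vector via an explicit formula involving $q^{-1}\in\z_p$; it then extends to square-free $m=q_1\cdots q_r$ by embedding each $\z_{q_j}$ as $\tfrac{m}{q_j}\z_m\leq\z_m$ and multiplying the resulting indicator functions; and finally it writes an arbitrary $f$ as the interpolation $\sum_{\o a}f(\o a)\cdot\b_s(\o x-\o a)$. Your argument instead goes through Fourier analysis over a splitting field $\mathbb{F}=\z_p(\zeta)$: the orthogonality $\b(y)=\tfrac1m\sum_c\zeta^{cy}$ and its inversion $\zeta^y=\sum_u\zeta^{-u}\b(y\oplus u)$ (both legitimate, since $p\nmid m$ makes $m$ invertible and $X^m-1$ separable over $\z_p$), followed by a character expansion of $f$, produce the desired shape with coefficients a priori only in $\mathbb{F}$. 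You then avoid the trap of Galois averaging — which would indeed fail whenever $p\mid[\mathbb{F}:\z_p]$ — by a clean rank argument: the relevant linear system $A\mu=f$ has $A$ and $f$ over $\z_p$, rank is stable under field extension, and consistency over $\mathbb{F}$ therefore forces consistency over $\z_p$. This is valid. The trade-off is that your construction is existence-plus-Gaussian-elimination, whereas the paper's recursion directly manufactures the coefficients; both meet the $2^{O(s)}$ bound with $m,p$ fixed, but yours handles general square-free $m$ in one pass without the prime-by-prime decomposition, at the cost of working in an auxiliary extension field.
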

\begin{proof}
We start with assuming that $m$ is a prime, say $\sdiv{m}=\set{q}$, to recursively define
the functions $\b_k^q : \z_q^k\map\z_p$ by putting
\begin{align*}
\b_1^q(x_1) &= \b(x_1)\\
\b_{k}^q(x_1,\ldots,x_{k}) &=
q^{-1}\cdot\left(
\sum_{i=0}^{q-1} \b_{k-1}^q(x_1,\ldots,x_{k-2},x_{k-1}\oplus i\cdot x_{k}) -
\sum_{i=1}^{q-1} \b_{k-1}^q(x_1,\ldots,x_{k-2},x_{k}\oplus i\cdot 1_q)\right),
\end{align*}
where $q^{-1}$ is the inverse of $q$ in the field $GF(p)$.
Then we observe that
$\b_k^q(x_1,\ldots,x_{k})=1$ if $x_1=\ldots=x_{k}=0$ and $\b_k^q(x_1,\ldots,x_{k})=0$ otherwise.

To see this we induct on $k$ and first assume that $x_k\neq 0$.
Then in each of the two big sums only one summand can be non-zero,
i.e. the one in which the last argument $x_{k-1}\oplus i\cdot x_{k}$ (or $x_{k}\oplus i\cdot 1_q$, resp.)
is zero.
But then each big sum is $\b_{k-1}^q(x_1,\ldots,x_{k-2},0)$, so that $\b_k^q(x_1,\ldots,x_{k})=0$.
On the other hand for $x_k=0$ all summands in the first sum are equal to $\b_{k-1}^q(x_1,\ldots,x_{k-1})$.
Simultaneously all summand of the second sum are zero, as
$\b_{k-1}^q(x_1,\ldots,x_{k-2},x_{k}\oplus i\cdot 1_q)=\b_{k-1}^q(x_1,\ldots,x_{k-2},i\cdot 1_q)$
and $i\cdot 1_q$ ranges here over all non-zero elements of $\z_q$.
Thus
\[
\b_{k}^q(x_1,\ldots,x_{k}) =
q^{-1}\cdot\left(
\sum_{i=0}^{q-1} \b_{k-1}^q(x_1,\ldots,x_{k-1}) - 0\right) =
\b_{k-1}^q(x_1,\ldots,x_{k-1}).
\]

Now, for $m=q_1\cdot\ldots\cdot q_r$ first we observe that the group $\z_{q_j}$
is isomorphic to the subgroup $\frac{m}{q_j}\cdot\z_m$ of $\z_m$,
e.g. under sending the unit $1_{q_j}$ to $\frac{m}{q_j}\cdot 1_m$.
Now for $x_1,\ldots,x_{k}\in\z_m$ we put
$\b_k^j(x_1,\ldots,x_{k})=\b_k^{q_j}(\frac{m}{q_j}\cdot x_1,\ldots,\frac{m}{q_j}\cdot x_{k})$
to get
$\b_k^j(x_1,\ldots,x_{k})=1$ if $x_1\congruent{q_j}\ldots\congruent{q_j} x_{k}\congruent{q_j}0$
and $\b_k^j(x_1,\ldots,x_{k})=0$ otherwise.

Finally we put
\(
\b_k(\o x) = \prod_{j=1}^r \b_k^j(\o x)
\)
to get the functions $\b_k : \z_m^k\map\set{0,1}\ci\z_p$ satisfying
$\b_k(x_1,\ldots,x_{k})=1$ iff $x_i\congruent{q_j}0$ for all $i,j$.

Note that the recursive definition of the $\b_k^j$'s puts them in a nice form
\[
\sum_{(\o\beta,u)\in\m \z_m^k\times\z_m}
\mu_{\o\beta,u}\cdot \b\left(\frac{m}{q_j}\cdot\left(\bigoplus_{i=1}^s \beta_i x_i\oplus u\right)\right),
\]
i.e. they are $\z_p$-linear combinations of the expressions of the form $\b(A\cdot\o x)$ with
$A\cdot\o x$ being $\z_m$-linear combinations
of the $\frac{m}{q_j}\cdot x_i$'s and $\frac{m}{q_j}\cdot 1_m$.
Now, distributing these $\z_p$-linear combinations in the product of the $\b_k^j$'s
we end up with a $\z_p$-linear combinations of the expressions of the form
$\b(\frac{m}{q_1}\cdot A_1\o x)\cdot\ldots\cdot\b(\frac{m}{q_r}\cdot A_r\o x)$.
But this last product is in fact equal to
$\b(\frac{m}{q_1} A_1\o x \oplus\ldots\oplus\frac{m}{q_r} A_r\o x)$,
as $\z_m$ is the direct sum of the $\z_{q_j}'s$,
i.e. the subgroups of $\z_m$ generated by the $\frac{m}{q_j}$'s.

Obviously this final $\z_p$-linear combination of the
$\b(\frac{m}{q_1} A_1\o x \oplus\ldots\oplus\frac{m}{q_r} A_r\o x)$'s
can be computed in $m^{O(k)}$ steps, as nice representations of the $\b_k^j$'s
are computable in $q^{O(k)}$ steps.

\medskip
To provide a nice representation of an arbitrary function $f: \z_m^s \map \z_p$
simply take the $\z_p$-linear combinations of the $\b_s$ with appropriate coefficients,
i.e. observe that
\[
f(x_1,\ldots,x_s) =
\sum_{(a_1,\ldots,a_s)\in\z_m^s} f(a_1,\ldots,a_s)\cdot \b_s(x_1-a_1,\ldots,x_s-a_s).
\]
Again this can be easily done in $m^{O(s)} = 2^{O(s)}$ steps.
\end{proof}

\begin{proof}[proof of Lemma \ref{lm:pseudo-and}]
Suppose the formula $\Phi$ has $\ell$ clauses and $n$ variables.
The mentioned result of \cite{BarringtonBR94}, reformulated in \cite[Fact 3.4]{idziakKK22LICS},
endows us with an $\ell$-ary polynomial $w(c_1,\ldots,c_\ell)$ of the field $GF(p)$ of degree $p^{\nu}-1$
such that for $x\in\set{0,1}^\ell$ we have
\[
w(\o x) =
\left\{
\begin{array}{ll}
0, &\mbox{if the number of zero entries of $\overline{x}$ is divisible by $p^\nu$,}\\
1, &\mbox{otherwise.}
\end{array}
\right.
\]
Note, that we are interested only in the behavior of $w(\o x)$ on values from the set $\set{0,1}$. It means that we can assume that the highest power of the variable that occurs in the polynomial is one as for $x\in\set{0,1}$ we have that $x=x^2$. Hence, it is enough to consider polynomials in sparse form consisting of monomials of the form $\prod_{x\in V} x$, where $V$ is a subset of variables of size at most $p^\nu-1$. Now, observe that considering subsets of variables of size at most $p^\nu-1$ ordered by its sizes we can step by step determine coefficients for all monomials.
Indeed, coefficient $\alpha_V$ cooresponding to monomial $\prod_{x\in V} x$  can be computed from all $\alpha_{V'}$ for $V' \subsetneq V$ and the evaluation of $w$ which puts $1$'s precisely to the variables in $V$ (and $0$'s to other variables).
Since we consider at most $O(n^{p^\nu})=O(2^{p^\nu\log{n}})$ monomials in  which is also the upper bound for the size of $w$, we can compute $w$ in time bounded by $2^{O(p^\nu\log{n})}$.

All we need to do to get $w^\Phi_{p}$
is to substitute the variable $c_i$ in $w$ by a polynomial
$1-\hat{l}^i_1\cdot\hat{l}^i_2\cdot\hat{l}^i_3$ of degree $3$
that codes a clause $C_i=l^i_1\vee l^i_2\vee l^i_3$ of $\Phi$,
i.e. $\hat{l}^i_j$ is set to $1-x$ if $l^i_j=x$ and to $x$ if $l^i_j=\neg x$.
\end{proof}

\begin{lm}
\label{lm:beta-int}
Let $\m A$ be a finite nilpotent Malcev algebra,
$\alpha$ its join irreducible congruence with $\alpha^-$ being its unique subcover.
Moreover let $\beta\in \con A$ be a subcover of $\alpha^-$ with $\charr(\beta,\alpha^-)\neq\charr(\alpha^-,\alpha)$.
Then for any choice of $(c,d)\in\alpha\setm\alpha^-$ and $(e,a)\in\alpha^-\setm\beta$
every function $f: \set{c,d}^s \map \set{e,a}$ can be $\beta$-interpolated
by an $s$-ary polynomial $\po p$ of $\m A$,
i.e. $f(\o x)\congruent{\beta}\po p(\o x)$, whenever $\o x \in\set{c,d}^s$.
Moreover such a polynomial $\po p$ can be obtained from the function $f$ in $2^{O(s)}$ steps.
\end{lm}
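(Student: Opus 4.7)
Let $p = \charr(\alpha^-, \alpha)$ and $q = \charr(\beta, \alpha^-)$; by hypothesis, $p \neq q$. The plan is to invoke \cref{lm-zpqe} at an abstract level to get a normal form for $f$, and then realize the resulting $\z_q$-linear combination inside $\m A$ using the module structures supplied by \cref{lm:simple-module-atom}.

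First, \cref{lm:simple-module-atom} applied to the abelian atom $\alpha/\alpha^-$ of $\m A/\alpha^-$ equips the $\alpha$-coset of $c$ (modulo $\alpha^-$) with a $\z_p$-vector space structure, in which addition is $x \oplus y = \po d(x, c, y)$ and $\z_p$-scalars act as polynomials of $\m A$. The same lemma applied to $\alpha^-/\beta$ in $\m A/\beta$ gives a $\z_q$-vector space structure on the $\alpha^-$-coset of $e$ (modulo $\beta$), with addition $x + y = \po d(x, e, y)$. Identify $c \leftrightarrow 0$ and $d \leftrightarrow 1$ in $\z_p$, and work inside the one-dimensional $\z_q$-subspace $\{k \cdot a : k \in \z_q\}$ of $e/\alpha^-$ modulo $\beta$, which contains $e$ (as $0 \cdot a$) and $a$ (as $1 \cdot a$). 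Viewing $f$ as $f': \{0,1\}^s \to \{0, 1\} \subseteq \z_q$ and extending arbitrarily to $\hat{f}: \z_p^s \to \z_q$, \cref{lm-zpqe} with $m = p$ (square-free since prime) and output prime $q$ (valid since $p \neq q$) yields
\[
\hat{f}(\o y) = \sum_{(\o\beta, u) \in \z_p^s \times \z_p} \mu_{\o\beta, u} \cdot \b\Bigl(\bigoplus_i \beta_i y_i \oplus u\Bigr),
\]
computable together with all coefficients $\mu_{\o\beta, u} \in \z_q$ in $2^{O(s)}$ steps.

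The translation to $\m A$ now proceeds term by term: each inner linear combination $\bigoplus_i \beta_i x_i \oplus u$ becomes a polynomial $\po q_{\o\beta, u}$ of $\m A$ built from the $\z_p$-module polynomial operations on $c/\alpha$, with values in that coset. The outer $\z_q$-combination $\sum \mu_{\o\beta, u} \cdot (\cdot)$ is assembled using the $\z_q$-module polynomial operations on the $\alpha^-$-coset of $e$ (modulo $\beta$). Since all $\z_q$-scalar multiplications and additions stay inside the one-dimensional subspace $\{k \cdot a\}$, the final output on $\o x \in \{c,d\}^s$ equals $\hat{f}(\o x) \cdot a$, which is $e$ when $\hat{f}(\o x) = 0$ and $a$ when $\hat{f}(\o x) = 1$; hence the resulting polynomial $\beta$-interpolates $f$ on $\{c,d\}^s$.

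The main obstacle is realizing the indicator $\b$ itself: one needs a unary polynomial $\po b$ of $\m A$ such that $\po b(y) \equiv_\beta a$ if $y \in c/\alpha$ satisfies $y \equiv_{\alpha^-} c$, and $\po b(y) \equiv_\beta e$ otherwise. This is where the hypothesis $p \neq q$ is essential. Since the pair $(c, d)$ generates $\alpha$ while $(e, a) \in \alpha^-$, a Malcev chain produces a polynomial of $\m A$ whose restriction to $c/\alpha$ lands in $e/\alpha^-$ and is non-constant modulo $\beta$. One then bootstraps to $\po b$ by combining, in the $\z_q$-module structure, such polynomials applied at $\z_p$-translates of $y$, following the same recursive construction as in the proof of \cref{lm-zpqe}; the invertibility of $p$ in $\z_q$ is what makes this bootstrapping succeed, and is precisely where the two-prime character of the hypothesis enters. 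Composing $\po b$ with each $\po q_{\o\beta, u}$ and summing with coefficients $\mu_{\o\beta, u}$ yields the desired $s$-ary polynomial $\po p$ of $\m A$, of size and construction time both $2^{O(s)}$.
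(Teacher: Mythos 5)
Your overall plan is the right one and matches the paper's: apply \cref{lm-zpqe} with modulus $\charr(\alpha^-,\alpha)$ and output prime $\charr(\beta,\alpha^-)$ to put $f$ into normal form as a $\z_p$-combination of indicator functions $\b$, then realize these inside $\m A$ via the module structures from \cref{lm:simple-module-atom}. The difficulty, which you correctly identify, is constructing the indicator polynomial $\po b$; but your sketch of that construction has a real gap.

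You assert that a Malcev chain gives a unary polynomial whose restriction to $c/\alpha$ lands in $e/\alpha^-$ and is non-constant modulo $\beta$. A Malcev chain only produces some $\po h$ with $\po h(c)=e$ and $\po h(d)=a$; this maps $c/\alpha$ into $e/\alpha$, but nothing forces the image to sit in the smaller coset $e/\alpha^-$, i.e.\ nothing forces $\po h(\alpha)\ci\alpha^-$. Worse, for the sum $\sum_{j}\po h(j\cdot x)$ imitating the recursion of \cref{lm-zpqe} to be a well-defined function of the $\alpha^-$-class of $x$ on the trace, one also needs $\po h$ to collapse $\alpha^-$ to $\beta$ there. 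Neither property is automatic. The paper obtains both from Tame Congruence Theory: it projects $(c,d)$ onto an $(\alpha^-,\alpha)$-minimal set $U$, chooses $\po h$ with range inside a $(\beta,\alpha^-)$-minimal set $V$ containing $e$ (via \cref{lm:minset}), and then invokes the cross-characteristic fact \cite[Lemma 4.30]{hm} that a minimal set of one characteristic cannot contain one of a different characteristic, which yields $\alpha^-|_U\ci\beta$ and $\po h(\alpha)\ci\alpha^-$. That is where the hypothesis $\charr(\beta,\alpha^-)\neq\charr(\alpha^-,\alpha)$ really does its work; the invertibility of one prime modulo the other, which you emphasize, only powers the arithmetic inside \cref{lm-zpqe} and would not rescue the construction without the TCT input. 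Only after this is it possible to normalize $\po h$ (replacing it by $\po h(x\oplus c_1)-\po h(c_1)$ if necessary) so that $a'=\sum_{j}\po h(c_j)\notin e/\beta$, and then to set $\po b(x)=a'-\sum_j\po h(j\cdot x)$.

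There is also a smaller omission: the indicator you build outputs some $a'\in e/\alpha^-\setminus e/\beta$ that need not be $\beta$-equivalent to the prescribed $a$, so you cannot simply declare that you are working inside the subgroup generated by $a$. The paper closes this gap with one last translation step: congruence permutability in the nilpotent Malcev algebra produces $a''$ with $a\equiv_\beta a''\equiv_{\Theta(a',e)}e$, hence a unary polynomial $\po g'$ with $\po g'(a')=a''$ and $\po g'(e)=e$, and the final polynomial is obtained by post-composing with $\po g'$.
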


\begin{proof}
To start with we fix a Malcev term $\po d(x,y,z)$ for $\m A$ and the characteristics
$p=\charr{(\beta,\alpha^-)}\neq\charr{(\alpha^-,\alpha)}=q$.

By \cite[Theorem 2.8.(4)]{hm} we can project the pair $(c,d)$ by a unary polynomial, say $\po g$,
of $\m A$ to a $(\alpha^-,\alpha)$-minimal set $U$
so that $(c',d')=(\po g(c),\po g(d))\in \alpha\setm\alpha^-$.
Note that the trace $N=c'/\alpha \cap U$ modulo $\alpha^-$ (i.e. $N/\alpha^-$)
is an elementary $q$-group with respect to $x\oplus y = \po d(x,c',y)$
and $c'/\alpha^-$ being its zero element.
This in particular tells us that the subgroup of $N/\alpha^-$ generated by $d'/\alpha^-$is isomorphic to $\z_q$.
We pick $c_0,c_1,\ldots,c_{q-1}\in A$ so that $c_0=c'$, $c_1=d'$ and
$c_0/\alpha^-|_U,c_1/\alpha^-|_U,\ldots,c_{q-1}/\alpha^-|_U$
lists all elements of this subgroup and put
$N'=\bigcup_{j\in\z_q} c_j/\alpha^-|_U$.

With the help of \cref{lm:minset} we pick a $(\beta,\alpham)$-minimal set $V$ containing $e$.
Again we know that the trace $M=e/\beta \cap V$ is, modulo $\beta$, an elementary $p$-group
with respect to $x+y = \po d(x,e,y)$.
Pick $b\in M\setm e/\beta$ and let $\po h$ be a unary polynomial of $\m A$ witnessing the fact that
$(e,b)\in \alpham \ci\alpha=\Theta(c',d')$, i.e. $\po h\vpair{c'}{d'} = \vpair{e}{b}$.
We may additionally assume that $\po h(A)\ci V$, or replace $\po h$ by $\po e_v\po h$, where $\po e_v$ is a unary idempotent polynomial with the range $V$.
Since $p\neq q$ we know that, due to \cite[Lemma 4.30]{hm},
the minimal set $U=\po e_U(A)$ can contain no $(\beta,\alpham)$-minimal set,
and therefore $\alpham|_U\ci\beta$.
By a similar reason the minimal set $V$, and therefore its subset $\po h(A)$,
cannot contain $(\alpham,\alpha)$-minimal set. Thus $\po h(\alpha)\ci\alpham$.
Summing up we know that
\begin{senumerate}
\item
\label{pol-h}
The algebra $\m A$ has a unary polynomial $\po h$ satisfying
\begin{itemize}
  \item $\po h(A)\ci V$,
  \item $\po h(c_0)\in e/\beta$,
  \item $(\po h(c_0),\po h(c_1))\in\alpham\setm\beta$,
  \item $\po h(\alpha)\ci \alpham$,
  \item $\po h(\alpham|_U)\ci \beta$.
\end{itemize}
\end{senumerate}
Now we note that the operations $\oplus$ and $+$ do not need to be associative or commutative in $A$, but they  do enjoy these properties, modulo $\alpham$ or $\beta$ respectively,
on the traces $N$ or $M$.

In the following we will need a polynomial $\po h$ that enjoys the properties listed in
(\ref{section:hard}.\ref{pol-h}) as well as
\begin{senumerate}
\item
\label{pol-h-sum}
$\sum_{j\in \z_q} \po h(c_j)\not\in e/\beta$.
\end{senumerate}
If our original $\po h$ does not satisfy (\ref{section:hard}.\ref{pol-h-sum}),
we simply replace it by
$\po h'(x)=\po h(x\oplus c_1)-\po h(c_1)$, so that
(\ref{section:hard}.\ref{pol-h}) still holds and
$\sum_{j\in \z_q}\po h'(c_j)=
\sum_{j\in \z_q}\left(\po h(c_j\oplus c_1)- \po h(c_1)\right) \congruent{\beta}
\left(\sum_{j\in \z_q}\po h(c_j)\right)-q\cdot\po h(c_1) \congruent{\beta}
-q\cdot\po h(c_1)\not\in e/\beta$.
To see the middle equality note that
$\set{(c_0\oplus c_1)/\alpham,\ldots,(c_{q-1}\oplus c_1)/\alpham}
=\set{c_0/\alpham,\ldots,c_{q-1}/\alpham}$
and consequently
$\sum_{j\in \z_q}\po h(c_j\oplus c_1)\congruent{\beta} \sum_{j\in \z_q}\po h(c_j)$,
by the last item in (\ref{section:hard}.\ref{pol-h}).

Since $a'=\sum_{j\in \z_q} \po h(c_j)\not\in e/\beta$
the subgroup of $(M/\beta,+)$ generated by $a'/\beta$ is isomorphic to $\z_p$.
Moreover with the help of the polynomial $\po h$
we define $\po b(x)=a'-\sum_{j\in \z_q}\po h(j\cdot x)$
to see that $\po b(c_0/\alpham)\ci a'/\beta$, while $\po b(c_j/\alpham)\ci e/\beta$ for $j\neq 0$.

Now, endowed with the polynomials $\po b, \oplus, +$ of $\m A$
and inspired by \cref{lm-zpqe} (with $m=q$)
we define an $s$-ary polynomial $\po p'$ of $\m A$ by putting
\[
\po p'(x_1,\ldots,x_s)=\sum_{(\o\beta,u)\in\m \z_m^s\times\z_m}
\mu_{\o\beta,u}\cdot \po b\left(\bigoplus_{i=1}^s \beta_i x_i\oplus u\right),
\]
with the coefficients $\mu_{\o\beta,u}\in \z_p$ chosen in a way that secures
\begin{equation}
\label{eq-spike}
\po p'(\po g(x_1),\ldots,\po g(x_s))\congruent{\beta}
\left\{
\begin{array}{ll}
a', &\mbox{if $f(\o x)=a$,}\\
e,  &\mbox{if $f(\o x)=e$,}
\end{array}
\right.
\end{equation}
whenever $\o x\in \set{c,d}^s$.
Obviously we cannot control the behavior of $\po p'$ over the entire algebra $\m A$,
but it suffices to have control over the trace $N$, or even over its subset $N'$.
Finally, congruence permutability applied to $(a,e)\in \alpham = \beta\join\Theta(a',e)$
gives $a''$ such that
$a\congruent{\beta}a''\congruent{\Theta(a',e)}e$.
In particular $\m A$ has a unary polynomial $\po g'$ with $\po g'\vpair{a'}{e} = \vpair{a''}{e}$.
Then the polynomial
$\po p(x_1,\ldots,x_s) = \po g'(\po p'(\po g(x_1),\ldots,\po g(x_s)))$
does the job, as $\po p(\o x)\congruent{\beta}f(\o x)$ for $\o x\in\set{c,d}^s$.
\end{proof}


\begin{proof}[Proof of \cref{lm:hard}]
Assume that $\m A$ is the smallest counterexample to our Lemma.
This means that two different primes $q_0,q_1$  occur in the interval $\intv{0}{\kappa}$.
Since this interval is a \pupi, 
this is witnessed by two subcovers, say $\gammaz,\gammaj$, of $\kappa$ with $\charr(\gammai,\kappa)=q_i$.
By minimality of our counterexample we know that $\gammaz\cap\gammaj=0$, as otherwise
-- in view of \cref{fact:quotient} --
the quotient algebra $\m A/(\gammaz\cap\gammaj)$ would be a smaller counterexample.
This, by modularity of $\con{A}$ gives that $\gammaz,\gammaj$ are atoms in $\con{A}$.
Again by minimality we argue that in fact  $\gammaz,\gammaj$ are the only atoms in $\con{A}$.
Indeed, if there is another atom $\theta$ in $\con{A}$
then in the smaller quotient $\m A/\theta$ of $\m A$ we recover our bad configuration,
namely the smallest co-supernilpotent congruence $\kappa_{\m A/\theta}=(\kappa\join\theta)/\theta$ with to its subcovers $(\gammai\join\theta)/\theta$ with $\charr((\gammai\join\theta)/\theta,(\kappa\join\theta)/\theta)=\charr(\gammai,\kappa)=q_i$.

Now, for $i=0,1$ we pick a maximal congruence $\fji$ that dominates $\gammai$ but not $\kappa$.
Obviously $\fji$ is meet-irreducible and we call its unique cover by $\fjpi$.
Since $\fji\not\geq\kappa\leq\fjpi$, we know that $\sr{\m A/\fjpi}=1$ while $\sr{\m A/\fji}=2$.
This means that the subdirectly irreducible algebra $\m A/\fji$ is not supernilpotent
so that in $\charrset{\m A/\fji}$ there is a prime different than $q_i$,
i.e. the characteristic of the monolith $\fjpi/\fji$ of $\m A/\fji$.
Again, $\sr{\m A/\fjpi}=1$ gives that the interval
$\intv{\fjpi}{1}$ is a \pupi and this different characteristics has to be witnessed by a
successor $\psii$ of $\fjpi$ with say $\charr(\fjpi,\psii)=p_i\neq q_i=\charr(\fji,\fjpi)$.

Now, for a minimal congruence $\alphai$ below $\psii$ but not below $\fjpi$,
we know that $\alphai$ is join-irreducible (with the unique subcover $\alphami$),
and that $\fjpi\prec\psii$ projects down to $\alphami\prec\alphai$.
In fact one can easily see that the entire interval $\intv{\fji}{\psii}$ project down to $\intv{\fji\cap\alphai}{\alphai}$ and we have:
\begin{senumerate}
\item
\label{fj1}
$\fji\cap\alphai \prec \alphami \prec \alphai$,
\item
\label{fj2}
$\charr(\fji\cap\alphai,\alphami)=q_i\neq p_i =\charr(\alphami,\alphai)$.
\end{senumerate}
In fact
\begin{senumerate}
\item
\label{fj3}
$\left(\fji\cap\alphai=0 \mbox{ and } \alphami=\gammao\right)$ \ \ or \ \
$\left(\fji\cap\alphai\geq\gammai \mbox{ and } \alphami\geq\kappa\right)$.
\end{senumerate}
Suppose first that $\fji\cap\alphai=0$.
Then $\alphami \succ \fji\cap\alphai=0$ is one of the atoms $\gammaz,\gammaj$.
But $\alphami=\gammai$ would give $q_{1-i}=\charr(0,\gammai)=\charr(\fji\cap\alphai,\alphami)=q_i$.
In the other case $\fji\cap\alphai$ is above one of the atoms $\gammaz,\gammaj$.
But $\fji\geq\fji\cap\alphai\geq\gammao$ together with $\fji\geq\gammai$ gives
$\fji\geq\gammai\join\gammao=\kappa$ contrary to our choice of $\fji$.
To see that in this case
$\kappa \leq \alphami$ suppose otherwise to get
$\kappa > \kappa\cap\alphami \geq \gamma_i$
and consequently $\gamma_i = \kappa\cap\alphami$.
Therefore $\intv{\gammai}{\alphai} \nearrow \intv{\kappa}{\kappa\join\alphai}$ so that
$p_i,q_i \in \charrset{\fji\cap\alphai,\alphai} \ci \charrset{\gammai,\alphai} = \charrset{\kappa,\kappa\join\alphai}$.
This however is not possible as $\alphai$ is join-irreducible in the lattice $\con{\m A}$
so that $\kappa\join\alphai$ is join-irreducible in the \pupi $\intv{\kappa}{1}$
and thus the interval $\intv{\kappa}{\alphai}$ has only one characteristic (see \cref{fact:ji-PUPI}).

\begin{senumerate}
\item
\label{fj4}
$\intv{0}{\gammao}\nearrow
\intv{\fji\cap\alphai}{\alphami}\nearrow
\intv{\fji}{\fjpi}$
\end{senumerate}
As $\intv{0}{\gammao}\nearrow \intv{\gammai}{\kappa}\nearrow
\intv{\fji}{\fjpi}$
then (\ref{section:hard}.\ref{fj4}) is shown for $\fji\cap\alphai=0$.
Otherwise (\ref{section:hard}.\ref{fj3}) gives $\fji\cap\alphai \geq \gammai$
and $\intv{\gammai}{\kappa}\nearrow\intv{\fji\cap\alphai}{\alphami}$, as required.

\medskip
Now note that (\ref{section:hard}.\ref{fj2}) gives
$p_i,q_i \in \charrset{\fji\cap\alphai,\alphai} \ci \charrset{\gammai,\alphai}$.
However we can show that
\begin{senumerate}
\item
\label{fj5}
$q_i\not\in\charrset{\gammai,\fji\cap\alphai}$
\end{senumerate}
Indeed, if $q_i\in\charrset{\gammai,\fji\cap\alphai}$
then in particular $\gammai < \fji\cap\alphai$
and by (\ref{section:hard}.\ref{fj3})
we have $\kappa \leq \alphami < \alphai$.
But then modularity gives
$(\fji\cap\alphai)\join\kappa=(\fji\join\kappa)\cap\alphai=\fjpi\cap\alphai=\alphami$,
while from $\intv{\gammai}{\kappa}\nearrow\intv{\fji}{\fjpi}$ we get
$(\fji\cap\alphai)\cap\kappa=\fji\cap\kappa=\gammai$.
These two equalities result in $\intv{\gammai}{\fji\cap\alphai} \nearrow \intv{\kappa}{\alphami}$,
which together with $p_i=\charr(\fjpi,\psii)=\charr(\alphami,\alphai)\in\charrset{\kappa,\alphai}$
and join-rreducibility of $\alphai$ in the \pupi $\intv{\kappa}{1}$ gives
$\charrset{\gammai,\fji\cap\alphai}=\charrset{\kappa,\alphami}=\set{p_i}\not\ni q_i$ as claimed.

\medskip
We fix an element $e\in A$ and use \cref{lm:minset} to get $(\fji\cap\alphami,\alphami)$-minimal set $\vi$ that contains $e$.
Note that the minimal sets $\vi$ have different characteristics, so that they cannot be equal.
Therefore by \cite[Lemma 4.30]{hm} we have
\begin{senumerate}
\item
\label{fj11}
$\vz\cap\vj=\set{e}$.
\end{senumerate}
Now let $N_i=\vi \cap e/\alphami$ is a $(\fji\cap\alphami,\alphami)$-trace in the minimal set $\vi$.
Obviously, due to (\ref{section:hard}.\ref{fj4}) the set $\vi$ is also $(0,\gammao)$-minimal,
but in general it may have smaller traces.
This however is not possible as, due to (\ref{section:hard}.\ref{fj5}), we have
\begin{senumerate}
\item
\label{fj12}
$\alphami|_{\vi} \ci \gammao$ and $\fji\cap\alphami|_{\vi} = 0$.
\end{senumerate}
Indeed, we fix a chain of congruences
$\gammao = \theta_0\prec\theta_1\prec\ldots\prec\theta_s=\alphami$
to show that $\theta_j|_{\vi}\ci\theta_{j-1}$.
First observe that $\intv{\theta_j\cap\fji\cap\alphai}{\theta_j}\searrow\intv{\theta_{j-1}\cap\fji\cap\alphai}{\theta_{j-1}}$
so that $\vi$ is $(\theta_j\cap\fji\cap\alphai,\theta_j)$-minimal.
If $\theta_j|_{\vi}\not\ci\theta_{j-1}$ then the set $\vi$,
as the range of some unary idempotent polynomial of $\m A$,
would contain $(\theta_{j-1},\theta_j)$-minimal set of characteristic different from $q_i$,
the characteristic of $\vi$.
Again this is not possible in view of \cite[Lemma 4.30]{hm},
and we are done with the first part of (\ref{section:hard}.\ref{fj12}).
To see the second one simply note that $\fji\cap\alphami|_{\vi}\ci\fji\cap\gammao=0$.

Now we fix $(\cci,\ddi)\in \alphai-\alphami$ and $\aai\in \vi\setm\set{\ee}$.
Note that, due to (\ref{section:hard}.\ref{fj12}) we have even $\aai\in \vi\setm\ee/\betai$,
where $\beta_i=\fji\cap\alphami$.

With the help of \cref{lm:beta-int} we are going to show the following claim:
\begin{senumerate}
\item
\label{gfq-pol}
For a polynomial $w(\o x)$ over the field $GF(q_i)$ of degree $s$
one can construct (in $n^{O(s)}$ steps) a polynomial $\po p_w(\o x)$ of $\m A$
such that for $\o x\in\set{\cci,\ddi}^n$ we have
\begin{equation*}
\po p_w(x_1,\ldots,x_n)=
\left\{
\begin{array}{ll}
\ee,   &\mbox{if $w(\pi_i(x_1),\ldots,\pi_i(x_n))=0$,}\\
\aai,  &\mbox{if $w(\pi_i(x_1),\ldots,\pi_i(x_n))=1$,}
\end{array}
\right.
\end{equation*}
where $\pi_i: A\map GF(q_i)$ satisfies $\pi_i\vpair{\cci}{\ddi} = \vpair{0}{1}$.
\end{senumerate}
We first prove (\ref{section:hard}.\ref{gfq-pol})
for monomials $u(x_1,\ldots,x_s)$ over $GF(q_i)$ of degree $s$.
\cref{lm:beta-int} allows us to $(\fji\cap\alphami)$-interpolate the monomial $u$ by a polynomial,
say $\po p_u(x_1,\ldots,x_s)$ of $\m A$ so that for $\o x\in\set{\cci,\ddi}^s$ we have
\begin{equation*}
\po p_u(x_1,\ldots,x_s) \congruent{\fji\cap\alphami}
\left\{
\begin{array}{ll}
\ee,   &\mbox{if $u(\pi_i(x_1),\ldots,\pi_i(x_s))=0$,}\\
\aai,  &\mbox{if $u(\pi_i(x_1),\ldots,\pi_i(x_s))=1$.}
\end{array}
\right.
\end{equation*}
Since $\set{\ee,\aai}\ci\vi$ we may additionally assume
that the range of $\po p_u$ is contained in $\vi$,
or simply replace $\po p_u$ by $\po e_{\vi}\po p_u$,
where $\po e_{\vi}$ is an idempotent polynomial of $\m A$ with the range $\vi$.
But then the second part of (\ref{section:hard}.\ref{fj12}) allows us to replace
$\congruent{\fji\cap\alphami}$ by the equality in the above display.

Now with the help of the binary polynomial $x+y=\po d(x,e,y)$ we sum up all the polynomials
$\po p_u$, with $u$ ranging over the monomials  $w$ to get $\po p_w\in \pol{A}$.
Note that each $\po p_u$ can be obtained in $2^{O(s)}$ steps, so that $\po p_w$ requires
$n^{O(s)}$ steps.
Note also that for $(x_1,\ldots,x_s)\in\set{\cci,\ddi}^s$
the values $\po p_u(x_1,\ldots,x_s)$ are in $\set{\ee,\aai}$,
so that their sum $\po p_w(\o x)$ lies in the cyclic subgroup
$\langle \aai \rangle$ of $(\vi\cap \ee/\alphami; + ; \ee)$ generated by $\aai$.
Summing the polynomials $\po p_u$ in the group $\langle \aai \rangle$ isomorphic to $\z_{q_i}$
corresponds to summarizing monomials of $w$ in the underlying additive group $\z_{q_i}$ of $GF(q_i)$.
Thus the values $\po p_w(\o x)$ correspond to the values $w(\pi_i(x_1),\ldots,\pi_i(x_n))$
in the way described in (\ref{section:hard}.\ref{gfq-pol}).

\medskip
To conclude our proof we associate with each 3-CNF formula
$\Phi$ (with $\ell$ clauses and $n$ variables $x_1,\ldots,x_n$)
a program $\progg{\po p^\Phi}{n}{\iota}{\ee}$ over $\m A$ such that $\o b\in\bool^n$ satisfies $\Phi$ iff
$\prog{\po p^\Phi}{\iota}(\o b)=\ee$.

The polynomial $\po p^\Phi(x_1^0,\ldots,x_n^0,x_1^1,\ldots,x_n^1)$ is going to be $2n$-ary.
The instructions of our program are given by
\(
\iota(x_j^i)=(\b_j, c_i, d_i).
\)
To construct the polynomial $\po p^\Phi$ we first pick two integers $\nu_0,\nu_1$
satisfying
\(
q_i^{\nu_i-1}\leq\sqrt{\ell}<q_i^{\nu_i}.
\)
Then, with $i=0,1$, we use \cref{lm:pseudo-and} to get the polynomials
$w^\Phi_i(\o x)$ over $GF(q_i)$ of degree $O(q_i^{\nu_i})$
so that for $\o b\in\bool^n$ we have
\[
w^\Phi_{i}(\o b) =
\left\{
\begin{array}{ll}
0, &\mbox{if the number of unsatisfied (by $\o b$) clauses in $\Phi$ is divisible by $p_i^{\nu_i}$}\\
1, &\mbox{otherwise.}
\end{array}
\right.
\]
Now we use (\ref{section:hard}.\ref{gfq-pol}) to translate the $w^\Phi_{i}$'s
into the polynomials $\po p^\Phi_{i}$ of $\m A$
that satisfy for $\o x\in\set{\cci,\ddi}^n$
\[
\po p^\Phi_{i}(\o x) =
\left\{
\begin{array}{ll}
0, &\mbox{if the number of unsatisfied (by $\pi_i(\o x)$) clauses in $\Phi$ is divisible by $p_i^{\nu_i}$}\\
a_i, &\mbox{otherwise.}
\end{array}
\right.
\]
Finally we let $\po p^\Phi$ to be the $2n$-ary polynomial of $\m A$ defined by
\(
\po p^\Phi(\o x^0,\o x^1)=\po d(\po p^\Phi_{0}(\o x^0),\po p^\Phi_{1}(\o x^1),\ee),
\)
with $\o x^0,\o x^1$ being disjoint sets of variables.
Obviously for $\po p^\Phi_{0}(\o x^0)=\ee$ and $\po p^\Phi_{1}(\o x^1)=\ee$
we have $\po p^\Phi(\o x^0,\o x^1)=\ee$.
But also $\po p^\Phi(\o x^0,\o x^1)=\ee$ implies $\po p^\Phi_{0}(\o x^0)=\po p^\Phi_{1}(\o x^1)$,
by the fact that Malcev polynomial in nilpotent algebra is a permutation
with respect to the first variable (see \cite[Lemma 7.3]{fm}) and
$\po d(\po p^\Phi_{1}(\o x^0),\po p^\Phi_{1}(\o x^1),\ee)=\ee$.
But then
\(
V_0\ni \po p^\Phi_{0}(\o x^0)=\po p^\Phi_{1}(\o x^1)\ni V_1
\)
together with (\ref{section:hard}.\ref{fj11}) gives
$\po p^\Phi_{0}(\o x^0)=\ee=\po p^\Phi_{1}(\o x^1)$.
Thus
\(
\ee = \prog{f}{\iota}(\o b) = \po p^\Phi(\pi_0^{-1}(\o b),\pi_1^{-1}(\o b))
\)
is equivalent to
$\po p^\Phi_{0}(\pi_0^{-1}(\o b))=\ee=\po p^\Phi_{1}(\pi_1^{-1}(\o b))$
or, in other words tells us that the number of unsatisfied (by $\o b$) clauses in $\Phi$
is divisible by both $p_0^{\nu_0}$ and $p_1^{\nu_1}$.
But this together with $\sqrt{\ell}<q_i^{\nu_i}$
tells us that $\o b$ satisfies the formula $\Phi$.

To complete our proof we bound the time needed to construct the polynomial $\po p^\Phi$.
First note that by \cref{lm:pseudo-and} the polynomials $w^\Phi_i$ of $GF(q_i)$
can be obtained in $2^{O(q_i^{\nu_i}(\log n+\log p))}=2^{O(\sqrt{\ell}\log n)}$ steps.
Then (\ref{section:hard}.\ref{gfq-pol}) helps us to translate the $w^\Phi_i$'s
into the polynomials $\po p_i^\Phi$ of $\m A$ in $n^{O(\sqrt(l))}=2^{O(\sqrt{l}\log n)}$ steps.
Thus also the final polynomial $\po p^\Phi$ is computable in $2^{O(\sqrt{l}\log n)}$ steps.
But this bound, together with \rethh,
obviously blocks the existence of a polynomial time probabilistic algorithm solving \progcsat{\m A}. 
\end{proof}


\section{ProgramCSat -- easiness}
\label{section:easy}


We start this chapter by presenting number of usefull facts about low-depth circuits  which use only arithmetic operations and conjunction. Similar results can be found for instance in \cite{Grolmusz01, GrolmuszT00}.

\begin{lm}
\label{lm:and-sum}
Every function of the form $\set{0,1}^n \map \z_p^k$
can be computed by an $\ccand_n\circ\sumpk{p}{k}$-circuit of size at most $2^n$.
\end{lm}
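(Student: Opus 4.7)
The plan is to use the standard indicator-function expansion of $f$. For every $\o a \in \bool^n$ let
\[
\chi_{\o a}(\o b) = \prod_{i:\, a_i = 1} b_i \cdot \prod_{i:\, a_i = 0} (1-b_i),
\]
so that $\chi_{\o a}(\o b) = 1$ if $\o b = \o a$ and $\chi_{\o a}(\o b) = 0$ otherwise. Then the obvious identity
\[
f(\o b) = \sum_{\o a \in \bool^n} f(\o a) \cdot \chi_{\o a}(\o b)
\]
holds pointwise, with exactly one summand nonzero for each $\o b$.

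Each $\chi_{\o a}$ is a conjunction of $n$ Boolean literals ($b_i$ or $1-b_i$, depending on $a_i$), so it is computed by a single $\ccand_n$ gate wired to the appropriate positive or negated inputs. This yields a first level of at most $2^n$ such AND gates, one per $\o a \in \bool^n$.

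Next I would feed these $2^n$ outputs into a single top-level $\sumpk{p}{k}$ gate. For every $\o a$, I pick an endomorphism $\alpha_{\o a}$ of $\z_p^k$ that sends the vector $(1,\ldots,1)$ to $f(\o a)$; such an $\alpha_{\o a}$ exists because $(1,\ldots,1) \neq 0$ can be completed to a basis of $\z_p^k$ over $\z_p$, and a linear map is determined by its values on a basis. Since the $\sumpk{p}{k}$ gate interprets a Boolean $1$ as $(1,\ldots,1)$ and a Boolean $0$ as the zero vector, we get $\alpha_{\o a}(\chi_{\o a}(\o b)) = f(\o a)\cdot \chi_{\o a}(\o b)$, and summing over $\o a$ recovers $f(\o b)$ exactly.

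Finally, to hit the stated bound $2^n$ on the total number of gates (rather than $2^n+1$), I would absorb the $\o a = \o 0$ term into the constant term $d$ of the sum gate by using the partition-of-unity identity $\sum_{\o a}\chi_{\o a}(\o b) \equiv 1$, rewriting
\[
f(\o b) = f(\o 0) + \sum_{\o a \neq \o 0} \bigl(f(\o a) - f(\o 0)\bigr)\cdot \chi_{\o a}(\o b),
\]
and setting $d := f(\o 0)$, with coefficients $\alpha_{\o a}$ chosen to send $(1,\ldots,1)$ to $f(\o a) - f(\o 0)$. This uses $2^n - 1$ AND gates together with one $\sumpk{p}{k}$ output gate, for a total of $2^n$ gates, as required. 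The construction is elementary; the only delicate points are the existence of the endomorphism $\alpha_{\o a}$ and the constant-term bookkeeping that shaves off one gate to reach the advertised bound.
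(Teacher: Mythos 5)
Your proof is correct but takes a genuinely different route from the paper's. The paper treats $f$ as a map $GF(p^k)^n \to GF(p^k)$ (identifying $\z_p^k$ with $GF(p^k)$ and $\{0,1\}$ with $\{0,1_{GF(p^k)}\}$), interpolates it by a multilinear polynomial over $GF(p^k)$ (multilinearity coming from $x^j = x$ on $\{0,1\}$), realizes each of the at most $2^n - 1$ nonconstant monomials $\prod_{i \in S} b_i$ by a single $\ccand_n$ gate, and absorbs the $GF(p^k)$-coefficients and constant term into the $\sumpk{p}{k}$ gate, since scalar multiplication in $GF(p^k)$ is a $\z_p$-linear endomorphism. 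You instead use the Lagrange indicator expansion $f = \sum_{\o a} f(\o a)\chi_{\o a}$ together with the partition-of-unity trick to merge one term into the gate constant $d$. Both constructions land on $2^n - 1$ AND gates plus one $\sumpk{p}{k}$ gate, and your existence argument for $\alpha_{\o a}$ (completing $(1,\ldots,1)$ to a basis) is fine.

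The one substantive difference worth noting: the paper's $\ccand_n$ gates read only \emph{positive} conjunctions $\prod_{i\in S} b_i$, whereas your indicators $\chi_{\o a}$ need negated literals $1-b_i$ at the $\ccand$ inputs. The paper never states explicitly whether the $\ccand$ layer may read complemented inputs; this is standard in the BST/CDH circuit model, and in the downstream uses of this lemma (inside the proof of the claim (\ref{section:easy}.\ref{mod-beta})) the $\ccand$ gates are fed by $\ccmod_p$ sub-circuits whose outputs can be complemented for free by flipping accepting sets, so nothing actually breaks. Still, the paper's monomial-based construction is slightly more robust in that it sidesteps the question entirely, which is why they prefer it over the (equally short) Lagrange expansion.
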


\begin{proof}
Our function can be extended to $GF(p^k)\map GF(p^k)$
and then represented by a polynomial of $GF(p^k)$ of degree $n\cdot(p^k-1)$.
However, due to the fact that for $x\in\set{0,1}$ and $j\geq1$ we have $x^j=x$, each nonconstant monomial can be replaced by the one of degree between $1$ and $n$.
Such monomials, with unit coefficients, can obviously be represented by $\ccand_n$-gates.
The other coefficients as well as the sums of monomials with arbitrary coefficients are covered by the $\sumpk{p}{k}$-gate, since multiplication by element of $GF(p^k)$ is a linear map when we treat $GF(p^k)$ as a $k$ dimensional vector space over $GF(p)$.
\end{proof}

\begin{lm}
\label{lm:normal-form}
Let $m$ be a square-free positive integer, $p\nmid m$ be a prime.
Then every function computable by a $\ccmod_m\circ\ccand_d$-circuit
can be also computed by a $\ccmod_m\circ\sumpk{p}{1}$-circuit of size $O(m^{d+1})$.
\end{lm}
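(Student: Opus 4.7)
The plan is to simulate the given $\ccmod_m\circ\ccand_d$-circuit by replacing each middle-layer $\ccand_d$-gate individually, leaving the output $\ccmod_m$-gate (with its accepting set) unchanged. It thus suffices to show that every $k$-ary conjunction $g(x_1,\ldots,x_k)$ with $k\leq d$ can be written as a sum taken modulo $m$ of a short collection of boolean outputs of $\sumpk{p}{1}^c$-gates; multiplicities $\mu\in\set{0,1,\ldots,m-1}$ then correspond to $\mu$ parallel wires into the output $\ccmod_m$-gate.

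Since $m$ is square-free, $m=q_1\cdots q_r$ with distinct primes $q_j\neq p$, so by the Chinese Remainder Theorem it suffices to produce, for each prime divisor $q$ of $m$, a representation
\[
g(x_1,\ldots,x_k)\;\equiv\;\sum_{(\o\beta,u)\in \z_p^k\times \z_p}\mu^{(q)}_{\o\beta,u}\cdot \b\!\left(\sum_{i=1}^{k}\beta_i x_i + u\right)\pmod{q},
\]
where $\b:\z_p\map\set{0,1}$ is the indicator of $0$, and then CRT-lift the coefficients $\mu^{(q)}\in\z_q$ to a joint system $\mu\in\z_m$ that works simultaneously for every $q\mid m$.

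The required representation is a dual of \cref{lm-zpqe} with the roles of the two primes swapped: for any distinct primes $p$ and $q$, every function $f:\z_p^s\map \z_q$ is a $\z_q$-linear combination of the functions $\b\bigl(\sum_i\beta_ix_i + u\bigr)$ indexed by $(\o\beta,u)\in \z_p^s\times \z_p$. The proof mirrors that of \cref{lm-zpqe}: one recursively constructs auxiliary functions $\b_k:\z_p^k\map \z_q$ detecting the all-zero tuple, using the invertibility of $p$ modulo $q$ in place of that of $q$ modulo $p$; Lagrange-type interpolation then expresses an arbitrary $f$ as a $\z_q$-combination of shifted copies of $\b_k$.

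Applying this dual to the boolean $g$, viewed as a map $\z_p^k\map \z_q$ via the embedding $\set{0,1}\subset \z_p$, and CRT-combining across all $q\mid m$, yields the desired $\z_m$-linear combination. Each of the at most $p^{k+1}\leq p^{d+1}$ index pairs $(\o\beta,u)$ contributes one $\sumpk{p}{1}^c$-gate feeding the output $\ccmod_m$-gate with multiplicity at most $m-1$, so replacing each original $\ccand_d$-gate in this way produces a $\ccmod_m\circ\sumpk{p}{1}$-circuit of the claimed size $O(m^{d+1})$ (absorbing the $p$- and $d$-dependent constant $p^{d+1}$ into the hidden factor). I expect the main obstacle to be the verification of the dual of \cref{lm-zpqe}: the recursive definition of $\b_k$ on $\z_p^k$ and the CRT combination of coefficients must be checked carefully, and the final size bound requires justifying the absorption of the $p^{d+1}$ factor, or else carrying out a sharper enumeration that yields the bound in $m$ directly.
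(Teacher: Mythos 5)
You have misread the paper's circuit-layer convention, and as a result the whole argument aims at the wrong statement. The paper is explicit (see the ``Circuits'' paragraph in the Notation section): the gate types are listed ``starting with the input layer on the left, finishing with the output layer on the right,'' and this is consistent with every other use in the paper (e.g.\ in \cref{lm:and-sum}, the $\ccand_n\circ\sumpk{p}{k}$-circuit has $\ccand_n$-gates computing the monomials at the \emph{input} layer and one $\sumpk{p}{k}$-gate summing them at the \emph{output}). So a $\ccmod_m\circ\ccand_d$-circuit has $\ccmod_m$-gates at the input layer feeding a single $\ccand_d$-gate at the output, and the target $\ccmod_m\circ\sumpk{p}{1}$-circuit keeps the $\ccmod_m$-gates at the input layer and replaces the top $\ccand_d$ by a $\sumpk{p}{1}$-gate. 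Your proposal does the exact opposite: you keep a ``top'' $\ccmod_m$-gate fixed and replace inner $\ccand_d$-gates by $\sumpk{p}{1}^c$-gates, which in the paper's notation produces a $\sumpk{p}{1}\circ\ccmod_m$-circuit — a genuinely different class. (One symptom you could have noticed: a $\sumpk{p}{1}$-gate outputs values in $\z_p$, not booleans, so it cannot legally feed a $\ccmod_m$-type boolean gate; only the $\sumpk{p}{1}^c$ variant could, which is why you were forced to switch gate types.)

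Because of this, the ``dual of \cref{lm-zpqe}'' you invoke — representing functions $\z_p^s\map\z_q$ as $\z_q$-combinations of $\z_p$-affine indicators — is simply not what is needed. The paper applies \cref{lm-zpqe} \emph{directly}: decompose $f(\o x)=\ccand_d\bigl(\chi_{S_1}(g_1(\o x)),\ldots,\chi_{S_d}(g_d(\o x))\bigr)$ with $g_i$ $\z_m$-linear; view the $d$-ary map $(y_1,\ldots,y_d)\mapsto\ccand_d(\chi_{S_1}(y_1),\ldots,\chi_{S_d}(y_d))$ as a function $\z_m^d\map\z_p$ and expand it via \cref{lm-zpqe} as $\sum_{(\o\beta,u)}\mu_{\o\beta,u}\cdot\b\bigl(\bigoplus_i\beta_iy_i\oplus u\bigr)$ with at most $m^{d+1}$ terms; then substitute $y_j\coloneqq g_j(\o x)$, so each inner $\bigoplus$ is still a $\z_m$-affine form in $\o x$ and hence a single $\ccmod_m$-gate, while the outer $\z_p$-sum is the output $\sumpk{p}{1}$-gate. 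This immediately yields the $O(m^{d+1})$ bound. (Your own uneasiness about ``absorbing $p^{d+1}$'' into the constant of $O(m^{d+1})$ was a further red flag: that factor is not a constant and is not dominated by $m^{d+1}$, so the size bound would not have gone through anyway.)
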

\begin{proof}
The boolean function $f$ computable by a $\ccmod_m\circ\ccand_d$-circuit can be decomposed into
$f(\o x) = \ccand_d(\chi_{S_1}g_1(\o x),\ldots,\chi_{S_d}g_d(\o x))$
for some $\z_m$-linear functions $g_1,\ldots,g_m$ and characteristic functions $\chi_{S_i}$
of some subsets $S_1,\ldots,S_d \ci \z_m$.
With the help of \cref{lm-zpqe} the $d$-ary function
\[
\z_m^d \ni (y_1,\ldots,y_d) \mapsto
\ccand_d(\chi_{S_1}(y_1),\ldots,\chi_{S_d}(y_d))\in \set{0,1}\ci\z_p
\]
can be presented in the form
\(
\sum_{(\o\beta,u)\in\m \z_m^d\times\z_m}
\mu_{\o\beta,u}\cdot \b\left(\bigoplus_{i=1}^d \beta_i y_i\oplus u\right).
\)
Substituting $y_j$ by $g_j(\o x)$ we end up with a function of the same form
in which the inner sum $\bigoplus$ increases in arity,
but the outer sum has the same number of summands bounded by $m^{d+1}$.
Obviously the expressions of the form $\b\left(\bigoplus \ldots\right)$
can be computed by $\ccmod_m$-gates,
while the outer sum is $\sumpk{p}{1}$-computable.
\end{proof}

\begin{lm}
\label{lm:unmod}
Every function computable by a $\ccmod_m\circ\ccmod_p$-circuit of size $\csize$
can be also computed by a $\ccmod_m\circ\sumpk{p}{1}$-circuit of size $O(m^p\csize^p)$.
\end{lm}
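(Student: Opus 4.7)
The strategy is to open up the top $\ccmod_p$-gate using Fermat's little theorem -- rewriting it as a polynomial of degree at most $p-1$ in the Boolean outputs of the bottom $\ccmod_m$-gates -- and then to re-absorb each low-degree monomial into the $\ccmod_m$-layer via \cref{lm-zpqe}, collecting the resulting $\z_p$-coefficients into one $\sumpk{p}{1}$-gate at the output.

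In detail, I would first write the given circuit in the form $\chi_T(\sum_{j=1}^k y_j \bmod p)$, where $T\subseteq\z_p$ is the acceptance set of the top gate, $k\leq\csize$, and each $y_j\in\{0,1\}$ is the output of a $\ccmod_m$-gate $g_j$ computing $y_j=\chi_{S_j}(L_j(x)\bmod m)$ for a $\z_m$-linear form $L_j$ and acceptance set $S_j\subseteq\z_m$. Fermat in $\z_p$ yields that $1-(z-c)^{p-1}$ is the indicator of $z=c$, so
\[
\chi_T\!\Bigl(\sum_{j}y_j\Bigr) \;=\; |T|\;-\;\sum_{c\in T}\Bigl(\sum_{j}y_j - c\Bigr)^{p-1}.
\]
Expanding each $(p-1)$-st power via the multinomial theorem and using $y_j^r=y_j$ for $r\geq 1$ (since $y_j\in\{0,1\}$), every surviving monomial becomes a $\z_p$-scalar multiple of a product $y_{j_1}\cdots y_{j_s}$ of at most $s\leq p-1$ distinct gate outputs. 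This presents the top function as a $\z_p$-linear combination of $O(\csize^{p-1})$ such products.

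The next step is to realize each such product by a $\sumpk{p}{1}$-combination of $\ccmod_m$-gates. Observe that $y_{j_1}\cdots y_{j_s}=f(L_{j_1}(x),\ldots,L_{j_s}(x))$ for the function $f\colon\z_m^s\to\{0,1\}\subseteq\z_p$ defined as $\prod_{r}\chi_{S_{j_r}}$. Applying \cref{lm-zpqe} to $f$ supplies the representation
\[
f(z_1,\ldots,z_s) \;=\; \sum_{(\o\beta,u)\in\z_m^s\times\z_m} \mu_{\o\beta,u}\,\b\!\left(\bigoplus_{r=1}^s\beta_r z_r\oplus u\right),
\]
and substituting $z_r:=L_{j_r}(x)$ turns each summand into $\b$ applied to a $\z_m$-linear form in $x$ -- precisely a $\ccmod_m$-gate with accepting set $\{0\}$. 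Thus each such product is realized by a $\sumpk{p}{1}$-combination of at most $m^{s+1}\leq m^p$ fresh $\ccmod_m$-gates on the original inputs.

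Finally, I would substitute these sub-circuits back into the Fermat expansion and merge all the outer $\z_p$-linear combinations into a single $\sumpk{p}{1}$-gate at the output, producing a $\ccmod_m\circ\sumpk{p}{1}$-circuit computing the same function and containing $O(\csize^{p-1}\cdot m^p)=O(m^p\csize^p)$ bottom $\ccmod_m$-gates, as required. No individual step is truly hard; the one creative move is noticing that -- because the $y_j$ are Boolean -- the AND of several $\ccmod_m$-outputs arising from the Fermat expansion can itself be simulated, via \cref{lm-zpqe}, by a short $\sumpk{p}{1}$-combination of $\ccmod_m$-gates on the original inputs, under the standing hypotheses ($m$ square-free and $p\nmid m$) inherited from \cref{lm-zpqe} and \cref{lm:normal-form}.
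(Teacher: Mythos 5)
Your proof is correct and follows essentially the same route as the paper: express $\chi_T$ as a degree-$(p{-}1)$ polynomial over $GF(p)$, multilinearize using $y^2=y$ on the Boolean outputs of the $\ccmod_m$-gates, and then absorb each resulting $\ccand$ of $\ccmod_m$-outputs into a $\sumpk{p}{1}$-combination of fresh $\ccmod_m$-gates via \cref{lm-zpqe}. The only cosmetic difference is that the paper packages that last absorption step as a separate invocation of \cref{lm:normal-form} (whose proof is itself just an application of \cref{lm-zpqe}), whereas you inline it directly.
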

\begin{proof}
This time a $\ccmod_m\circ\ccmod_p$-computable function $f(\o x)$ can be decomposed into\break
\(
\chi_T\left(\sum_{i=1}^{\csize-1} \alpha_i\cdot\chi_{S_i}g_i(\o x)\right)
\)
where $T\ci\z_p$, $S_i\ci\z_m$ and the $g_i$'s are $\z_m$-linear functions.
Note that $\chi_T(z)$ can be represented by a unary polynomial, of degree $p-1$ over $GF(p)$.
Thus
\(
\chi_T\left(\sum_{i=1}^{\csize-1} \alpha_i y_i\right)
\)
is also a polynomial over $GF(p)$ of $\csize-1$ variables with degree $p-1$.
Thus this polynomial, when restricted to $\set{0,1}$, is computable by
$\ccand_{p-1}\circ\sumpk{p}{1}$-circuit of size $\csize^{p-1}$,
i.e. the maximal possible number of monomials here.
Since the $\chi_{S_i}g_i(\o x)$'s are $\ccmod_m$-computable
we end up with $\ccmod_m\circ\ccand_{p-1}\circ\sumpk{p}{1}$-circuit.
However \cref{lm:normal-form} help us to replace $\ccmod_m\circ\ccand_{p-1}$-parts
by $\ccmod_m\circ\sumpk{p}{1}$-circuits of size $O(m^p)$
so that the size of the final circuit computing $f$ can be bounded by $O(m^p\csize^p)$.
\end{proof}

\begin{lm}
\label{lm:apply_func}
Let $m$ be a square-free positive integer, $p\nmid m$ be a prime 
and $g$ be a $k$-ary boolean function.
If the functions $f_1, \ldots, f_k$ are computable by
$\ccmod_m\circ\ccmod_p$-circuits
of size $O(\lambda)$
then also $g(f_1,\ldots,f_k)$ is computable by such a circuit,
but of size $O(\lambda^{kp})$.
\end{lm}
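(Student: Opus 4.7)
The plan is to expand $g(f_1, \ldots, f_k)$ into an explicit $GF(p)$-polynomial in the internal $\ccmod_m$-outputs of the $f_i$-circuits, and then collapse everything into a single $\ccmod_m \circ \ccmod_p$-circuit via \cref{lm:normal-form}. First, since $g \colon \set{0,1}^k \to \set{0,1}$ and $y^2 = y$ on $\set{0,1}$, I would write $g(\o y) = \sum_{S \subseteq \set{1,\ldots,k}} c_S \prod_{i \in S} y_i$ as a multilinear polynomial over $GF(p)$ with at most $2^k$ monomials. In parallel, each $f_i(\o x) = \chi_{T_i}(L_i(\o x))$, where $L_i$ is the $\z_p$-sum inside the output $\ccmod_p$-gate of the $f_i$-circuit evaluated on the boolean outputs $N_{i,1}, \ldots, N_{i,s_i}$ of the $s_i = O(\lambda)$ many $\ccmod_m$-gates. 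Fermat's little theorem gives $\chi_{T_i}(z) = \sum_{t \in T_i}\bigl(1-(z-t)^{p-1}\bigr)$, a $GF(p)$-polynomial of degree at most $p-1$ in $z$, so after substituting $z = L_i$ and applying $N^2 = N$ each $f_i$ becomes a multilinear polynomial of degree at most $p-1$ in the $N_{i,j}$'s with $O(\lambda^{p-1})$ monomials.

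Substituting into the expansion of $g$ now yields $g(f_1, \ldots, f_k)$ as a multilinear $GF(p)$-polynomial in the union of all $N_{i,j}$'s (at most $k\lambda$ variables), of degree at most $k(p-1)$ and with $O(\lambda^{k(p-1)})$ monomials, where the implicit constants depend only on $k$ and $p$. Every such monomial is an AND of arity at most $k(p-1)$ of $\ccmod_m$-outputs, so it is a boolean function computable by a $\ccmod_m \circ \ccand_{k(p-1)}$-sub-circuit. I would then apply \cref{lm:normal-form} to each such sub-circuit, replacing the AND-of-$\ccmod_m$'s by an equivalent $\ccmod_m \circ \sumpk{p}{1}$-sub-circuit of size $O(m^{k(p-1)+1})$.

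Finally, the outer combination of these per-monomial outputs with coefficients $\mu_M$ is itself a single $\z_p$-affine map, and the composition of two $\z_p$-affine maps is again affine, so the per-monomial $\sumpk{p}{1}$-gates collapse with the outer summation into one top-level $\sumpk{p}{1}$-gate acting on the union of all freshly introduced $\ccmod_m$-outputs. Rewriting the $\z_p$-coefficients as wire multiplicities and folding the constant term into a singleton accepting set turns this top-level gate into a $\ccmod_p$-gate, producing a bona fide $\ccmod_m \circ \ccmod_p$-circuit. Multiplying the number of monomials by the size of each replacement gives $O\bigl(\lambda^{k(p-1)} \cdot m^{k(p-1)+1}\bigr) = O(\lambda^{kp})$ gates total, as claimed. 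The main obstacle I anticipate is the careful bookkeeping needed to verify that the two affine layers genuinely combine into one while still computing the correct boolean value of $g(f_1, \ldots, f_k)$ (rather than some stray element of $\z_p$), and that the resulting size really fits into $O(\lambda^{kp})$ once the constants depending on $m$, $p$, and $k$ are absorbed.
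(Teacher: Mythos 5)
Your argument is correct and follows essentially the same strategy as the paper's proof: represent $g$ by a low-degree $GF(p)$-polynomial, rewrite each $f_i$ so that above the $\ccmod_m$-layer everything is a $GF(p)$-polynomial in boolean $\ccmod_m$-outputs, substitute, and normalize the resulting bounded-fan-in $\ccand$-of-$\ccmod_m$ layer with \cref{lm:normal-form} before collapsing the $\z_p$-affine levels into one $\ccmod_p$-gate. The only organizational difference is that you inline the $\chi_{T_i}(z)=\sum_{t\in T_i}\bigl(1-(z-t)^{p-1}\bigr)$ expansion of the top $\ccmod_p$-gate directly (yielding an $\ccand$-layer of arity $k(p-1)$), whereas the paper first invokes \cref{lm:unmod} to put each $f_i$ in $\ccmod_m\circ\sumpk{p}{1}$ form and then keeps the $\ccand$-layer at arity $k$; both routes land comfortably inside the stated $O(\lambda^{kp})$.
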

\begin{proof}
As in the proof of \cref{lm:and-sum} the function $g$ can be represented as a restriction of a polynomial $g'(z_1,\ldots,z_k)$ of degree $k$ over $GF(p)$ to the set $\set{0,1}^k$.
On the other hand, by \cref{lm:unmod}, each of the $f_j$'s is computable by a $\ccmod_m\circ\sumpk{p}{1}$-circuit of size $O(m^p\csize^p)$.
In particular $f_j(\o x)$ can be decomposed into
\(
\sum_{i\in I_j} \alpha_i\cdot\chi_{S_i^j}g_i^j(\o x)
\)
where $S_i^j\ci\z_m$ and the $g_i^j$'s being $\z_m$-linear.
Now plugging the linear polynomials
\(
f'_j(\ldots,y_i^j,\ldots)=\sum_{i\in I_j} \alpha_i\cdot y_i^j
\)
over $GF(p)$, with at most $O(m^p\csize^p)$ variables $y_i^j$, into $g'(z_1,\ldots,z_k)$
we get a polynomial of degree $k$ which, as before,
on the arguments from $\set{0,1}$ is $\ccand_k\circ\sumpk{p}{1}$-computable.
Note that the polynomial $g'(f'_1,\ldots,f'_k)$ may have up to $O(km^p\csize^p)$ variables, as the $f_j$'s do not need to share variables.
Thus the size of $g'(f'_1,\ldots,f'_k)$ and therefore of the corresponding  $\ccand_k\circ\sumpk{p}{1}$-circuit is bounded by the number of monomials of degree $k$ over $O(km^p\csize^p)$ variables, i.e. roughly by $O((km^p\csize^p)^k)$.
To compute $g(f_1,\ldots,f_k)$ we simply replace the $y_i^j$'s in the above circuit computing
$g'(f'_1,\ldots,f'_k)$ by $\ccmod_m$-circuits computing $\chi_{S_i^j}g_i^j(\o x)$
to get $\ccmod_m\circ\ccand_k\circ\sumpk{p}{1}$-circuit of size $O((km^p\csize^p)^k)$.
However, in view of \cref{lm:normal-form}, this can be replaced by $\ccmod_m\circ\sumpk{p}{1}$-circuit, by the expense of increasing the size $O(m^{k+1})$ times.
Finally we note that since $g$ returns only $0$ or $1$ the $\sumpk{p}{1}$-gate can be replaced by
$\ccmod_p$ type gate. In our applications we treat $m,k$ and $p$ as constants, so the size of the resulting circuit is $O(\lambda^{kp})$.
\end{proof}

\begin{lm}
\label{lm:5to3}
Let $m$ be a square-free positive integer, $p\nmid m$ be a prime and $\ccc\in\z_p^\nu$.
Then every $\ccand_d\circ\ccmod_m\circ\ccmod_p
\circ\ccand_{d'}\circ\sumpk{p}{\nu}^{\ccc}$-circuit of size $\lambda$
can be replaced a
$\ccand_d\circ\ccmod_m\circ\ccmod_p$-circuit
of size $O(\lambda^{\nu d'p^3})$.
\end{lm}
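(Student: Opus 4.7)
The plan is to eliminate the top two layers $\ccand_{d'}\circ\sumpk{p}{\nu}^{\ccc}$ of the source by passing through a polynomial representation over $GF(p)$ and then repacking the result into the shape $\ccand_d\circ\ccmod_m\circ\ccmod_p$ using \cref{lm:normal-form,lm:unmod,lm:apply_func}.

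First I would compress the top two layers into a single polynomial over $GF(p)$ in the Boolean outputs $\o w=(w_1,\ldots,w_t)$ of the layer-$3$ $\ccmod_p$ gates. By Fermat's little theorem, $\chi_{\{\ccc\}}(L(\o z))=\prod_{j=1}^{\nu}\bigl(1-(L(\o z)_j-\ccc_j)^{p-1}\bigr)$ is a polynomial of degree $\nu(p-1)$ in its argument, and each $\ccand_{d'}$-gate under the output contributes at most $d'$ to the degree in $\o w$. Hence $\sumpk{p}{\nu}^{\ccc}\circ\ccand_{d'}$ computes a polynomial $R(\o w)$ over $GF(p)$ of degree at most $\nu d'(p-1)<\nu d'p$.

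Next I would push the representation one level further. Each $w_i=\chi_{S_i}(L_i(\o u))$ is itself, again by Fermat, a polynomial of degree $p-1$ in the layer-$2$ $\ccmod_m$-outputs $\o u=(u_1,\ldots,u_s)$. Substituting into $R$ gives a Boolean polynomial $Q(\o u)$ over $GF(p)$ of degree at most $\nu d'(p-1)^2<\nu d'p^2$, which (using $u_i^k=u_i$) has at most $O(\lambda^{\nu d'p^2})$ monomials in $\o u$, each an AND of at most $\nu d'p^2$ of the $u_i$'s. The source output is then $\chi_{\{1\}}(Q(\o u)\bmod p)$, i.e.\ a single top $\ccmod_p$ gate whose inputs are these monomials, fed with multiplicities given by the $GF(p)$-coefficients of $Q$.

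The final task is to realise $\chi_{\{1\}}(Q(\o u)\bmod p)$ in the target shape $\ccand_d\circ\ccmod_m\circ\ccmod_p$. The intermediate configuration after the two preceding steps is $\ccand_d\circ\ccmod_m\circ\ccand_{\nu d'p^2}\circ\ccmod_p$: the first two layers already match the target. I would then apply \cref{lm:normal-form} to rewrite the $\ccmod_m\circ\ccand_{\nu d'p^2}$ block as $\ccmod_m\circ\sumpk{p}{1}$ and \cref{lm:unmod} to fold the resulting $\ccmod_m\circ\sumpk{p}{1}\circ\ccmod_p$ triple back into a single $\ccmod_m\circ\ccmod_p$ pair, exactly as in the proof of \cref{lm:apply_func}. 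Each normalization step contributes at most a factor of $p$ in the exponent, so together with the degree-$\nu d'p^2$ polynomial representation the total blow-up is $O(\lambda^{\nu d'p^3})$, as required. The main obstacle is this last repacking: one must verify that the ANDs of $\ccmod_m$-outputs of arbitrary width up to $\nu d'p^2$ can indeed be absorbed without increasing depth, which relies crucially on $\gcd(m,p)=1$ so that \cref{lm:unmod} and \cref{lm:normal-form} can be applied alternately to trade the extra $\ccand$ and $\ccmod_p$ layers for a single $\ccmod_p$ output.
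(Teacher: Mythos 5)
Your route differs from the paper's and is, if anything, more direct. The paper first uses \cref{lm:apply_func} to eliminate the $\ccand_{d'}$ layer, yielding $\ccand_d\circ\ccmod_m\circ\ccmod_p\circ\sumpk{p}{\nu}^{\ccc}$; it then replaces the $\sumpk{p}{\nu}^{\ccc}$-test by a degree-$\nu(p-1)$ polynomial $g'$ in the $\ccmod_p$-outputs, producing an auxiliary $\ccand_{\nu p}\circ\sumpk{p}{1}$ layer, and finishes with a second pass through \cref{lm:apply_func} together with \cref{lm:unmod} and an affine compression. You instead push the polynomial representation all the way through the $\ccmod_p$ layer at once: $\ccmod_p\circ\ccand_{d'}\circ\sumpk{p}{\nu}^{\ccc}$ becomes a single $GF(p)$-polynomial $Q(\o u)$ of degree $\nu d'(p-1)^2$ in the $\ccmod_m$-outputs $\o u$, and the only lemma you then need is one application of \cref{lm:normal-form}. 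This avoids the double pass through \cref{lm:apply_func} and in fact achieves the tighter exponent $\nu d' p^2$ rather than the stated $\nu d' p^3$.

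One inaccuracy: the appeal to \cref{lm:unmod} in the last step is misplaced. That lemma transforms $\ccmod_m\circ\ccmod_p$ into $\ccmod_m\circ\sumpk{p}{1}$ — the opposite direction of what you want. The step you actually need is the elementary observation that the $\sumpk{p}{1}$ gates produced by \cref{lm:normal-form} take Boolean values (they replace Boolean-valued $\ccand$ gates), so the outer $\ccmod_p$ gate computes a $\z_p$-affine combination of $\z_p$-affine combinations of the new $\ccmod_m$ outputs; composing these gives a single $\z_p$-affine form with an accepting set, i.e.\ one $\ccmod_p$ gate. This is the same closing observation as in the proof of \cref{lm:apply_func}, but it is not \cref{lm:unmod}, and citing it as such would leave a reader confused about why the argument closes.
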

\begin{proof}
We start with noting that for a constant $c\in \z_p^\nu$
an equation of the form $t=c$ in the module $\z_p^\nu$
can be replaced by a system of $\nu$ equations $t^{[j]}=c^{[j]}$
(with $t^{[j]}$ being the projection of $t$ onto the $j$-th coordinate)
and then by a single equation $g(t^{[1]},\ldots,t^{[\nu]})=1$, where
\(
g(y_1,\ldots,y_\nu)= \prod_{j=1}^\nu (1-(y_j-c^{[j]})^{p-1})
\)
is a polynomial of degree $\nu\cdot (p-1)$ over $GF(p)$.

Now we observe that our starting function can be actually computed by a
$\ccand_d\circ\ccmod_m\circ\ccmod_p\circ\sumpk{p}{\nu}^{\ccc}$-circuit, by simply
using \cref{lm:apply_func} to eliminate the $\ccand_{d'}$-level.
Thus this function can be represented (with a sum in the group $\z_p^\nu$) as
$\chi_\set{c}(f(\o x))$ for
\(
f(\o x) = \sum_{i=1}^{s} \alpha_i\cdot f_i(\o x)
\)
with $\alpha_i$ being the $k\times k$ matrices over the field $GF(p)$
and $f_1,\ldots,f_s$ being $\ccand_d\circ\ccmod_m\circ\ccmod_p$-computable.
Formally the values $f_i(\o x)$ are boolean,
but we interpret them by the constant $\nu$-tuples $(f_i(\o x),\dots,f_i(\o x))\in\z_p^\nu$.

Observe also that $f(\o x)^{[j]}=\sum_{i=1}^{s} \alpha_{ij}\cdot f_i(\o x)$ for $\alpha_{ij}\in\z_p$ is a sum of the $j$-th row of the matrix $\alpha_{ij}$,
where this time the boolean values taken by $f_i$ are interpreted by $0,1\in\z_p$.
The polynomial $g'$ obtained from $g$ by
\(
g'(z_1,\ldots,z_s)=g(\sum_{i=1}^{s} \alpha_{i1}\cdot z_i,\ldots,\sum_{i=1}^{s}\alpha_{i\nu}\cdot z_i)
\)
has also degree bounded by $\nu p$ so that, on the set $\set{0,1}$,
it is $\ccand_{\nu p}\circ\sumpk{p}{1}$-computable.
Thus checking if $f(\o x)=\ccc$ reduces to checking $g'(f_1(\o x),\ldots,f_s(\o x))=1$
which in turn can be done with the help of $\ccand_d\circ\ccmod_m\circ\ccmod_p
\circ\ccand_{\nu p}\circ\sumpk{p}{1}$-circuit.
Now \cref{lm:apply_func} and \cref{lm:unmod} allows us to eliminate the $\ccand_{\nu p}$-level,
then replace the two levels $\ccmod_m\circ\ccmod_p$ by $\ccmod_m\circ\sumpk{p}{1}$,
and finally compress two $\sumpk{p}{1}$-levels to a single one and replace it by $\ccmod_p$.

A careful analysis of this process gives the required bound for the final
$\ccand_d\circ\ccmod_m\circ\ccmod_p$-circuit.
\end{proof}


Now we present a characterization of Boolean functions that can be represented in supernilpotent algebras.

\begin{lm}
\label{lm:supernil-circuit}
Let $\m A$ be a finite supernilpotent Malcev algebra.
Then the functions computable by an $n$-ary boolean program
$\progg{\po p}{n}{\iota}{S}$ over the algebra $\m A$
can be also computed by
$\ccand_d\circ\ccmod_{\pdiv A}\circ\ccor_{\card S}$-circuits of size $O(n^d)$ with
$d\leq\card{A}^{1+\log\maxar A} = (2 \cdot \maxar A)^{\log\card{A}}$.
\end{lm}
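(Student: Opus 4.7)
The plan is to combine the direct product decomposition of supernilpotent algebras with an induction on size in each prime-power factor. By supernilpotency, $\m A \cong \m A_1 \times \cdots \times \m A_s$ where $|\m A_i| = p_i^{k_i}$ for the distinct primes dividing $|A|$, so a boolean program over $\m A$ projects coordinate-wise to boolean programs over each $\m A_i$. Acceptance into $S$ becomes a top-level $\ccor_{|S|}$ over the possible target tuples $c = (c_1,\ldots,c_s) \in S$; for a single such $c$, the condition ``component $i$ equals $c_i$ for all $i$'' is a conjunction of per-factor tests, which I would fold into a single $\ccmod_{\pdiv A}$-gate via the Chinese Remainder Theorem. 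Explicitly, scaling each $\z_{p_i}$-linear input by the CRT idempotent picking out the $i$-th coordinate embeds the per-factor $\ccmod_{p_i}$-sums into one $\ccmod_{\pdiv A}$-sum, and the combined accepting set is the CRT image of the product of per-factor accepting sets. This produces the target $\ccand_d\circ\ccmod_{\pdiv A}\circ\ccor_{|S|}$ shape.

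The work thus reduces to the per-factor statement: for a nilpotent Malcev algebra $\m B$ of size $p^k$, each $c \in B$, and each boolean program $(\po p,n,\iota)$ over $\m B$, the indicator of $\prog{\po p}{\iota}(\o b) = c$ is computable by a $\ccand_{d'}\circ\ccmod_p$-circuit of size $O(n^{d'})$ with $d' \leq (2\maxar B)^{\log|B|}$. I would prove this by induction on $|B|$. Pick an atom $\alpha \in \con{\m B}$, which by \cref{lm:simple-module-atom} is abelian with cosets forming a $GF(p)$-vector space under $x+y=\po d(x,\zero,y)$. The inductive hypothesis applied to $\m B/\alpha$ supplies the coset-level check $\prog{\po p/\alpha}{\iota/\alpha}(\o b) = c/\alpha$. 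To sharpen the coset to the exact element I would subtract the quotient value using the Malcev term and express the remainder as a $GF(p)$-valued function of the boolean inputs; each application of a basic operation of arity at most $\maxar B$ contributes an additive perturbation depending on at most $\maxar B$ inductive sub-values. \cref{lm:and-sum} then absorbs the $GF(p)$-linear combination of such perturbations into a single $\ccmod_p$-layer, and the ``which subterms fire'' information is a $\ccand_{\maxar B}$ over recursive outputs, giving the recursion $d'(|B|) \leq \maxar B \cdot d'(|B|/p)$, which unfolds across at most $\log|B|$ levels to the claimed bound.

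The delicate step is organizing the coset corrections so that, after the induction, they all land in one $\ccmod_p$-layer and contribute only a multiplicative $\maxar B$ factor to the $\ccand$-depth. At each induction step both the principal coset value and the perturbation are computed by circuits; the perturbation subcircuit must be merged into the combined $\ccmod_p$-layer of the larger problem, possibly using \cref{lm:normal-form} to flatten any nested modular computations. Keeping the overall size at $O(n^{d'})$, rather than suffering a blowup when summing over all subterms of $\po p$, is the main combinatorial obstacle, but the polynomial-of-degree-$d'$ form of the final circuit makes this bound natural once one observes that each boolean variable enters only via at most $d'$-wise conjunctions.
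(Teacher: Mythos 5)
The first half of your argument --- decomposing the supernilpotent $\m A$ into prime-power factors, handling each factor separately, and recombining via a CRT idempotent into a single $\ccmod_{\pdiv A}$-gate with a top-level $\ccor_{\card S}$ --- coincides with the paper's proof. The per-factor step, however, diverges, and this is where the gap is. The paper does not perform an induction on the congruence lattice at that point; instead, after identifying a supernilpotent algebra of order $p^\nu$ with $\z_p^\nu$, it cites a known degree bound (\cite[Lemma 3.2]{KawalekK}) producing a single $GF(p)$-polynomial of degree at most $\card{A}^{1+\log_p \maxar A}$ that witnesses $\po p(\o a)=c$, and the $\ccand_d\circ\ccmod_p$-circuit then falls out by reading off monomials.

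Your proposed recursion $d'(\card B) \leq \maxar B\cdot d'(\card B/p)$ is too optimistic and would not establish the stated bound. When you pass from $\m B/\alpha$ to $\m B$, the module correction $\h f$ is an \emph{arbitrary} function $(D')^{\ar f}\map M$, not an additive perturbation; to feed it the inductive outputs you have to replace each $D'$-valued sub-result by $\card{D'}$ boolean indicators (the binary expansion $\fcirc{\h f}$), so this level contributes $\ccand$-arity $\maxar B\cdot\card{D'}$, not $\maxar B$. On top of that you pay degree $p-1$ for every characteristic polynomial over $GF(p)$ and a factor of the module dimension $\nu$ when you finally compare to $c^M$, so the true recursion is closer to $d'(\card B)\leq O(p^2\nu\cdot\maxar B\cdot\card{B'})\cdot d'(\card{B'})$. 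Unrolling gives an exponent far above $\card{A}^{1+\log\maxar{A}}$, so the claimed degree bound would not be reached. Moreover the flattening of the resulting four-layer $\ccand\circ\ccmod_p\circ\ccand\circ\sumpk{p}{\nu}$ circuit into $\ccand\circ\ccmod_p$ cannot lean on \cref{lm:normal-form}, \cref{lm:unmod}, \cref{lm:apply_func}, or \cref{lm:5to3}: those lemmas all require $p\nmid m$ with a genuine $\ccmod_m$ layer to absorb intermediate structure, whereas in the prime-power factor you have only $\ccmod_p$; the collapse must instead go through polynomial-degree arithmetic over $GF(p)$, which is precisely where the extra multiplicative factors enter. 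Your plan is in spirit the same engine as claim (8.5) in the proof of \cref{thm:2supernil-circuit-early} (that one has an honest $\ccmod_m$ layer coprime to $p$), but applied here it would need substantial reworking and would yield a weaker degree bound.
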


\begin{proof}
First we reduce our setting to $\card S =1$ as, after building appropriate $\ccand_d\circ\ccmod_{\pdiv A}$-circuits separately for each $\ccc \in S$,
we eventually join them by the final gate $\ccor_{\card S}$.

Next we note that the algebra $\m A$, being supernilpotent,
decomposes into a product $\m A_1 \times\ldots\times \m A_\s$ of supernilpotent algebras $\m A_i$,
each of which is of prime power order, say $\card{A_i}=p_i^{\nu_i}$,
where $\sdiv A=\set{p_1,\ldots,p_\s}$.

In order to construct $\ccand_d\circ\ccmod_{\pdiv A}$-circuit computing the boolean function
$\progb{\po p}{\iota}{\ccc}$ we first produce an $n$-ary polynomial $w(\b_1,\ldots,\b_n)$
of degree $d\leq\card{A}^{1+\log\maxar A}$ over the ring $\z_{\pdiv A}$
such that for all $\o b\in \bool$ we have
\[
\progb{\po p}{\iota}{\ccc}(\o b)=\true \mbox{\quad iff \quad}
w(\o b)=\ppdiv',
\]
where $\ppdiv'=\frac{\pdiv A}{p_1}+\ldots+\frac{\pdiv A}{p_\s}$.
Given the polynomial $w(\b_1,\ldots,\b_n)$ as above
we translate it into $\ccand_d\circ\ccmod_{\pdiv A}$-circuit
by imitating the monomials of $w$ (of degree at most $d$) by $\ccand_d$-gates
and then summing them up by an appropriate
$\ccmod_{\pdiv A}$-gate with accepting set $\set{\ppdiv'}$.
Obviously the bound for the degree of the polynomial $w$ determines the upper bound for the size of just constructed circuit, as there are at most $O(n^d)$ monomials of degree $d$.

The polynomial $w$ is going to be obtained separately for each factor in our decomposition
$\m A=\m A_1 \times\ldots\times \m A_\s$.
In what follows an element $a$ of $A$ is to be presented as a tuple $(a(1),\ldots,a(\s))$.
Moreover our decomposition of $\m A$ leads to the decomposition of the program
$\progg{\po p}{n}{\iota}{\ccc}$ over $\m A$ into $\s$ programs
$\progg{\po p_j}{n}{\iota_j}{\ccc_j}$, over $\m A_j$ respectively,
where
\begin{itemize}
  \item $\po p_j$ is the polynomial of $\m A_j$ corresponding to $\po p$,
  \item $\iota_j(x_i)=(\b^x,a^x_0(j),a^x_1(j))$ whenever $\iota(x)=(\b^x,a^x_0, a^x_1)$,
  \item $\ccc_j=\ccc(j)$.
\end{itemize}
Note here that for this decomposition we have
\begin{align*}
\progb{\po p}{\iota}{\ccc}(\o b)=\true
&\mbox{\quad iff \quad}
\po p(a^{x_1}_{b^{x_1}},\ldots,a^{x_k}_{b^{x_k}})=\prog{\po p}{\iota}(\o b)=c
\\
&\mbox{\quad iff \quad}
\amper_{j=1}^{\s} \ \po p(a^{x_1}_{b^{x_1}}(j),\ldots,a^{x_k}_{b^{x_k}}(j))
=\prog{\po p_j}{\iota_j}(\o b)=c(j)
\\
&\mbox{\quad iff \quad}
\amper_{j=1}^{\s} \ \progb{\po p_j}{\iota_j}{\ccc_j}(\o b)=\true
\end{align*}
The plan is to construct, for each $j=1,\ldots,\s$ an appropriate polynomial $w_j(\b_1,\ldots,\b_n)$
(of degree bounded by $\card{A}^{1+\log\maxar A}$)
over the prime field $GF(p_j)$
satisfying
\[
\progb{\po p_j}{\iota_j}{\ccc_j}(\o b)=\true \mbox{\quad iff \quad} w_j(\o b)=\true
\]
and then sum them up to get the polynomial
$w=\frac{\pdiv A}{p_1}\cdot w_1+\ldots+\frac{\pdiv A}{p_\s}\cdot w_\s$
over the ring $\z_{\pdiv{A}}$
that would clearly satisfy
\[
w(\o b) = \ppdiv
\mbox{\quad iff \quad}
\amper_{j=1}^{\s} \ w_j(\o b)=\true
\mbox{\quad iff \quad}
\progb{\po p}{\iota}{\ccc}(\o b)=\true,
\]
as promised.

Before we construct the $w_j$'s note that despite the fact that their variables $\b_1,\ldots,\b_n$ are denoted here to be boolean, they in fact may range over entire $GF(p_j)$
(and finally over entire $\z_{\pdiv{A}}$) but we do not care what they return on nonboolean arguments from the set $\bool$.

\medskip

Now we are left with a supernilpotent algebra $\m A$ of size $p^\nu$ with $p$ being a prime.
In such algebra all prime quotients of congruences have characteristic $p$ (since $\m A$ has classes of congruences of equal size).
Thus, by \cref{lm:simple-module-atom} we know that for each $\zero\in A$
and each prime quotient $\alpha\prec\beta$
the binary operation $\po d(x,\zero,y)$ defines (modulo $\alpha$) the structure of an elementary $p$-group on the set $\set{a/\alpha : a\in\zero/\beta}$, i.e. a group isomorphic to some power of the group $\z_p$.
Thus, going up along the chain
$0=\beta_0 \prec \beta_1 \prec \ldots \beta_{t-1} \prec \beta_t=1$
we can recursively conclude that for each  $j$ the coset $\zero/\beta_j$ can be endowed with a group structure isomorphic to some power of $\z_p$.
Indeed, if $\zero/\beta_j=a_1/\beta_{j-1}\cup\ldots a_{p^{\nu_j}}/\beta_{j-1}$
and each coset $a_i/\beta_{j-1}$ can be endowed with a group structure isomorphic to
$\z_p^{\nu_1+\cdots+\nu_{j-1}}$ then the entire coset can be trated as having the (external) structure of the group product $\z_p^{\nu_1+\cdots+\nu_{j-1}}\times\z_p^{\nu_j}$.
This global group operation (which we call external)
is not necessarily determined by a polynomial of $\m A$
but it does not change the congruences of $\m A$.
The reader may want to consult \cite{aichinger-spectrum, KawalekK}
where the structure of supernilpotent algebras of prime power order has been studied in more details.

In particular \cite{KawalekK} ensures us that the decomposition of the set $A$ into the product $Z_p^\nu$ translates a little bit further.
Before stating this correspondence we present each element $a\in A$ as a tuple
$(a^{(1)},\ldots a^{(\nu)}) \in Z_p^\nu$.
Next we treat $Z_p$ not only as the underlying set of the group $(\z_p;+)$
but also as the underlying set of the prime field $GF(p)$.
Now, after this identification of $A$ with $Z_p^\nu$,
\cite[Lemma 3.2]{KawalekK} allows to simulate $k$-ary polynomials of the algebra $\m A$
by the $k\nu$-ary polynomials of the field $GF(p)$.
In particular for the $k$-ary polynomial $\po p(x_1,\ldots,x_k)$
from our program $\progg{\po p}{n}{\iota}{\set{c}}$ there is a polynomial
$w'(x_1^{(1)},\ldots,x_1^{(\nu)},\ldots,x_k^{(1)},\ldots,x_k^{(\nu)})$
over the field $GF(p)$ with degree bounded by $\card{A}^{1+\log_p\maxar A}$,
such that for all $(a_1,\ldots,a_k)\in A^k$ we have
\[
\po p(a_1,\ldots,a_k)=c
\mbox{\quad iff \quad }
w'(a_1^{(1)},\ldots,a_1^{(\nu)},\ldots,a_k^{(1)},\ldots,a_k^{(\nu)})=1.
\]
Now to construct an $\ccand_d\circ\ccmod_{\pdiv A}$-circuit computing
the function
$\progb{\po p}{\iota}{\ccc}(b_1,\ldots,b_n):\bool^n \map \bool$
we first modify the $k\nu$-ary polynomial
$w'$ to $n$-ary polynomial $w(\b_1,\ldots,\b_n)$,
again over the field $GF(p)$ and again of degree $d \leq \card{A}^{1+\log_p\maxar A}$.
We simply replace the variables $x_i^{(1)},\ldots,x_i^{(\nu)}$,
by affine transformations of the corresponding boolean variable $\b^{x_i}$
(determined by the program instruction $\iota(x_i)=(\b^{x_i},a_0^{x_i}, a_1^{x_i})$
for the variable $x_i$ of $\po p$) as follows:

\begin{align*}
x_i^{(1)} &\mbox{\quad by}
&\b^{x_i} &\cdot \left((a_1^{x_i})^{(1)}-(a_0^{x_i})^{(1)}\right)+(a_0^{x_i})^{(1)},
\\
&&\vdots
\\
x_i^{(\nu)} &\mbox{\quad by}
&\b^{x_i} &\cdot \left((a_1^{x_i})^{(\nu)}-(a_0^{x_i})^{(\nu)}\right)+(a_0^{x_i})^{(\nu)}.
\end{align*}
It should be easy to observe that then for $\o b=(b_1,\ldots,b_n)\in \bool^n$ we have
\[
\progb{\po p}{\iota}{\ccc}(\o b)=\true
\mbox{\quad  (i.e. $\prog{\po p}{\iota}(\o b)=\ccc$) \quad iff \quad}
w(\o b)=\true
\]
so that we are done.
\end{proof}

\begin{thm}[\cref{thm:2supernil-circuit-early} restated]
\label{thm:2supernil-circuit}
Let $\m A$ be a finite nilpotent Malcev algebra with $\charrset{0,\kappa}=\set{p}$.
Then the function computable by a boolean program
$\progg{\po p}{n}{\iota}{S}$ of size $\ell$ over the algebra $\m A$
can be also computed by an
$\ccand_d\circ\ccmod_{\pdiv A/p}\circ\ccmod_p$-circuits of size $O(\ell^c)$
with $d\leq\card{A}^{1+\log\maxar A}$
and the degree $c$  bounded by $O(\card{A}\cdot\maxar{A})^{O(\log\card{A})}$.
\end{thm}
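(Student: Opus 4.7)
The plan is to exploit the two-level supernilpotent structure of $\m A$: the hypothesis makes $\kappa$ supernilpotent with only characteristic $p$, while $\m A/\kappa$ is supernilpotent by minimality of $\kappa$. I would decompose the target boolean predicate along $\kappa$ into a quotient-level condition and a within-$\kappa$-class condition, apply \cref{lm:supernil-circuit} to each piece, and fuse the resulting circuits via the compression machinery of \cref{lm:unmod}, \cref{lm:apply_func}, and \cref{lm:5to3}.

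First I reduce to the case $\card S=1$ by writing $\progb{\po p}{\iota}{S}(\o b)=\bigvee_{c\in S}\progb{\po p}{\iota}{c}(\o b)$; the outer OR is absorbable into the top $\ccand_d$ layer at polynomial cost, so I fix $c\in A$ and target the indicator $\progb{\po p}{\iota}{c}$. Using the Malcev term $\po d$, choose any $r\in[c]_\kappa$ and any $\zero\in A$, and define $\po q(\o x)=\po d(\po p(\o x),r,\zero)$. By \cite[Lemma~7.3]{fm}, the map $a\mapsto\po d(a,r,\zero)$ is a bijection on each $\kappa$-class, sending $[c]_\kappa$ onto $\zero/\kappa$ with $c\mapsto\po d(c,r,\zero)$. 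Hence $\prog{\po p}{\iota}(\o b)=c$ is equivalent to the conjunction of two conditions: (Q) $\prog{\po p/\kappa}{\iota/\kappa}(\o b)=[c]_\kappa$, a boolean predicate computed by a program of size $\ell$ over the supernilpotent quotient $\m A/\kappa$; and (K) $\prog{\po q}{\iota}(\o b)=\po d(c,r,\zero)$, whose inner polynomial takes values in the supernilpotent coset algebra $\m A|_{\zero/\kappa}$ of $p$-power order.

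\cref{lm:supernil-circuit} applied to (Q) yields an $\ccand_{d_1}\circ\ccmod_{\pdiv(A/\kappa)}$-circuit of size $O(\ell^{d_1})$, and applied to (K) yields an $\ccand_{d_2}\circ\ccmod_p$-circuit of size $O(\ell^{d_2})$, both with $d_1,d_2$ bounded by $\card A^{1+\log\maxar A}$.

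The main obstacle is fusing the AND of these two circuits into the single three-level shape $\ccand_d\circ\ccmod_{\pdiv A/p}\circ\ccmod_p$. Since the prime $p$ may also divide $\card{A/\kappa}$, the middle modulus $\pdiv(A/\kappa)$ in the (Q)-circuit may share the factor $p$ with the bottom modulus of the (K)-circuit. I would resolve this by CRT-splitting each $\ccmod_{\pdiv(A/\kappa)}$-gate into a $\ccmod_{\pdiv A/p}$-part and a residual $\ccmod_p$-part, then pushing the residual $\ccmod_p$-pieces down via \cref{lm:unmod} and \cref{lm:normal-form} to merge them with the (K)-level. The conjoined circuit then has the shape $\ccand\circ\ccmod_{\pdiv A/p}\circ\ccand\circ\ccmod_p$, and \cref{lm:5to3} compresses it back to $\ccand_d\circ\ccmod_{\pdiv A/p}\circ\ccmod_p$. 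Tracking the size blowups through \cref{lm:apply_func}, \cref{lm:unmod}, and \cref{lm:5to3} produces the claimed $O(\ell^c)$ bound, with an exponent $c$ of magnitude $(\card A\cdot\maxar A)^{O(\log\card A)}$.
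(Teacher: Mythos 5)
Your decomposition of $\prog{\po p}{\iota}(\o b)=c$ into a quotient condition (Q) over $\m A/\kappa$ and a coset condition (K) via the Malcev term is logically valid, but there is a genuine gap in how you propose to handle (K). You claim \cref{lm:supernil-circuit} applies to the program $\progg{\po q}{n}{\iota}{\po d(c,r,\zero)}$ because the inner function takes values in the supernilpotent coset $\zero/\kappa$. But \cref{lm:supernil-circuit} is stated for boolean programs \emph{over a supernilpotent algebra}, and this hypothesis is doing real work. The polynomial $\po q=\po d(\po p,r,\zero)$ is a polynomial of the rank-$2$ algebra $\m A$, not of the induced coset algebra $\m A|_{\zero/\kappa}$: the instructions $\iota(x)=(b^x,a^x_0,a^x_1)$ load elements from all of $A$, not from $\zero/\kappa$, and $\po q$ is not closed on $\zero/\kappa$ in the sense required for membership in the induced algebra's polynomial clone. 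The coset-level value $\prog{\po q}{\iota}(\o b)$ is not simply a "supernilpotent computation"; it depends on the quotient-level values of all intermediate subterms of $\po p$ as they interact with the module structure on the $\kappa$-cosets. This entanglement is precisely what the paper's proof spends its effort on.

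The paper's actual route is therefore quite different from yours. Rather than a single split across $\kappa$, it works with $\sigma_p$ (the largest supernilpotent congruence with $p$-power cosets, which dominates $\kappa$ and satisfies $p\notin\charrset{\sigma_p,1}$, sidestepping your CRT complication), applies \cref{lm:supernil-circuit} once at the top level, and then descends through a chain of atoms $\sigma_p=\gamma_h\succ\dots\succ\gamma_0=0$. At each step it uses a module decomposition $\m D\cong M\times D'$ (with $M$ a $\beta$-coset treated as a simple module) to write the $M$-component of $\po p^{\m D}(\o d)$ as a linear part plus a sum $\sum_f\sum_{j\in N_f}\beta^{(j)}_f\cdot\h f(\dots)$ over all occurrences of basic operations, where the arguments to each $\h f$ are the $D'$-values of the corresponding subcircuit nodes. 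Those $D'$-values are exactly what the inductive hypothesis supplies as $\ccand_d\circ\ccmod_m\circ\ccmod_p$-circuits. Your proposal has no analogue of this recursive bookkeeping, and without it the (K)-circuit cannot be produced. Your CRT-fusion plan for the (Q) side is also only sketched (the shape $\ccand\circ\ccmod_{\pdiv A/p}\circ\ccand\circ\ccmod_p$ does not match the hypothesis of \cref{lm:5to3}, which expects a $\sumpk{p}{\nu}^{c}$ top gate), but that is a secondary issue; the missing treatment of (K) is the fatal one.
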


\begin{proof}
As in the proof of \cref{lm:supernil-circuit} we additionally assume that $\card S =1$.
After producing $\ccand_d\circ\ccmod_{\pdiv A/p}\circ\ccmod_p$-circuits
separately for each $\ccc \in S$ we join them by $\ccor_{\card S}$-gate
to get $\ccand_d\circ\ccmod_{\pdiv A/p}\circ\ccmod_p\circ\ccor_{\card S}$-circuit.
But now \cref{lm:apply_func} allows us to completely eliminate $\ccor_{\card S}$-gate
by increasing the size of the resulting $\ccand_d\circ\ccmod_{\pdiv A/p}\circ\ccmod_p$-circuit
from $\csize$ to $\csize^{p\cdot|A|}$.
Thus, from now on we assume that $S=\set{\ccc}$.

The first of the following two claims shows that $\m A/\sigma_p$ is supernipotent,
while the second one that $\pdiv{A/\sigma_p}=\pdiv{A}\!/p$.
\begin{senumerate}
\item
\label{sig1}
$\kappa\leq\sigma_p\leq\sigma$,
\item
\label{sig2}
$p\not\in\charrset{\sigma_p,1}$.
\end{senumerate}
The second inequality in (\ref{section:easy}.\ref{sig1}) is obvious.
For the first one note that our assumption $\charrset{0,\kappa}=\set{p}$ tells that there is no other prime than $p$ below $\kappa$. Since $\sigma_p$ is the largest congruence with this property
it must be over $\kappa$.
To see (\ref{section:easy}.\ref{sig2}) first note that the congruence lattice of the supernilpotent algebra $\m A/\sigma_p$ is a \pupi,
so that each characteristic occuring in $\m A/\sigma_p$
has to occur in $\con{\m A}$ between $\sigma_p$ and one of its successors.
But $\sigma_p$ has no successors of characteristic $p$.

\medskip

Now \cref{lm:supernil-circuit} supplies us with a
$\ccand_d\circ\ccmod_{\pdiv{\m A/\sigma_p}}$-circuit
computing the boolean function
of the quotient program $\progg{\po p/\sigma_p}{n}{\iota/\sigma_p}{\ccc/\sigma_p}$.

The essential part of the proof is to go down along the chain of congruences
$\sigma_p=\gamma_h\succ\gamma_{h-1}\succ\ldots\succ\gamma_1\succ\gamma_0=0$
to consecutively produce $\ccand_d\circ\ccmod_{\pdiv{\m A/\sigma_p}}\circ\ccmod_p$-circuit
computing the boolean function
of the quotient program $\progg{\po p/\gamma_j}{n}{\iota/\gamma_j}{\ccc/\gamma_j}$.
This in turn will be possible after showing the following claim:
\begin{senumerate}
\item
\label{mod-beta}
Suppose $m$ is a square-free positive integer and $p\nmid m$ is a prime.
Let $\beta$ be an abelian congruence of characteristic $p$
in a finite nilpotent Malcev algebra $\m D$.
Moreover let $\progg{\po p}{n}{\iota}{\cc}$ be a boolean program of length $\ell$ over $\m D$.
If all programs of the form $\progg{\po p'/\beta}{n}{\iota/\beta}{\cc'/\beta}$
over the quotient algebra $\m D/\beta$,
(with $\po p'$ ranging over subterms of $\po p$ and $\cc'$ ranging over elements of $D$)
are computable by $\ccand_d\circ\ccmod_m\circ\ccmod_p$-circuits of size at most $\csize$
then $\progg{\po p}{n}{\iota}{\cc}$ is computable by a $\ccand_d\circ\ccmod_m\circ\ccmod_p$-circuit
of size $O((\csize\cdot\ell)^{2p^4\card{D}\maxar{D}})$
\end{senumerate}

With claim (\ref{section:easy}.\ref{mod-beta}) we are ready to bound the size of the final circuit.
We start with $\ccand_d\circ\ccmod_{\pdiv{\m A/\sigma_p}}$-circuit
computing the quotient program $\progg{\po p/\sigma_p}{n}{\iota/\sigma_p}{\cc}$ and
supplied by \cref{lm:supernil-circuit}.
Its size is bounded by $\csize_0\leq O(\ell^d)$ with $d\leq(2 \cdot \maxar A)^{\log\card{A}}$.
Now we go down along the chain of congruences
$\sigma_p=\gamma_h\succ\gamma_{h-1}\succ\ldots\succ\gamma_1\succ\gamma_0=0$
and use (\ref{section:easy}.\ref{mod-beta}) to get the circuit computing
$\progg{\po p}{n}{\iota}{\cc}$.
This requires $h\leq\card{A}$ iterations so that we end up with a circuit of size
$O((\ldots((\csize_0\cdot\ell)^\aexp\cdot\ell^\aexp)\ldots\cdot\ell)^\aexp)
\leq(\csize_0\cdot\ell)^{\aexp^{h+1}})$
where $\aexp\leq 2p^4\card{A}\maxar{A}\leq\card{A}^6\maxar{A}$.
One can combine all these inequalities to get the bound  $O(\ell^c)$
with $c\leq 2\card{A}\maxar{A}^{13\log\card{A}}$.
To go from circuit for $\progg{\po p}{n}{\iota}{\cc}$
to the one for $\progg{\po p}{n}{\iota}{S}$, with arbitrary $S\ci A$,
\cref{lm:apply_func} increases the size to $O((\ell^c)^{\card{A}^2})$,
but the exponent of $\ell$ is still bounded by
$O(\card{A}\cdot\maxar{A})^{O(\log\card{A})}$, as required.

\medskip

Now we return to the proof of claim (\ref{section:easy}.\ref{mod-beta})
and show how to construct the circuit for $\progg{\po p}{n}{\iota}{\cc}$
from the circuits over $\m D/\beta$ for the subterms of $\po p$.
To start this construction we note that (due to \cite[Exercise 7.3]{fm})
the atomic congruence $\beta$ lies below the center of $\m D$ so that,
following the proof of \cite[Theorem 7.1]{fm}, we can represent algebra $\m D$
in the following way:
\begin{itemize}
  \item the universe of $\m D$ is a direct product $M\times D'$, where
  \begin{itemize}
    \item $D'=D/\beta$, and
    \item $M$ is a $\beta$-coset of $\m D$
        (and therefore, by \cref{lm:simple-module-atom}, it is polynomially equivalent
        to a simple module over a subring $R$ of the ring $\matr{p}{\nu}$ of all endomorphisms of the abelian group $\z_p^\nu$),
  \end{itemize}
  \item each basic operation $f$ of $\m D$ is determined by
    a sequence of scalars $\alpha^f_1,\ldots,\alpha^f_{\ar f} \in R$
    and a function $\h f: (D')^{\maxar f} \map M$ in the following way
\begin{equation}\label{rec:basic}
    f^{\m D}(d_1,\ldots,d_{\ar f}) =
    \left(
    \sum_{i=1}^{\ar f} \alpha^f_i\cdot d^M_i +\h f(d_1^{D'},\ldots,d_{\ar f}^{D'}),
    f^{\m D'}(d_1^{D'},\ldots,d_{\ar f}^{D'})
    \right)
\end{equation}
    whenever $d_i=(d_i^M,d_i^{D'})\in M\times D'$.
\end{itemize}
Actually for a more complex polynomial $\po p(\o x) = f(\po p_1(\o x),\ldots,\po p_{\ar f}(\o x))$,
where $f$ is a basic operation and $\po p_1,\ldots,\po p_{\ar f}$ are polynomials the above representation still holds, i.e.,
\[
\po p^{\m D}(\o d) =
\left(
    \sum_{i=1}^{\ar{\po p}} \alpha^{\po p}_i\cdot d^M_i +
    \h {\po p}(d_1^{D'},\ldots,d_{\ar{\po p}}^{D'}),
    \po p^{\m D'}(d_1^{D'},\ldots,d_{\ar{\po p}}^{D'})
\right).
\]
However to understand the function $\h{\po p} : (D')^{\ar{\po p}} \map M$,
given for $\o{d'} \in (D')^{\ar{\po p}}$ by
$\h{\po p}(\o{d'}) = \sum_{i=1}^{\ar f} \alpha^f_i\cdot\h{\po p_i}(\o{d'})+
\h{f}(\po p_1^{\m D'}(\o{d'}),\dots,\po p_{\ar{f}}^{\m D'}(\o{d'}))$
we need to recursively understand the $\h{\po p_i}$'s.
This in turn requires to go deeper into the structure of how the polynomial $\po p$
[or corresponding circuit] is build from the basic operations [or nodes, respectively].
Thus for a basic operation $f$ [or a node in a circuit labeled by $f$]
we form the set $N_f$ of all occurrences of this symbol [node].
Each such occurrence of $f$, say $f^{(j)}$ with $j\in N_f$ has its inputs determined by the polynomials/nodes $\po p_1^{(j)},\ldots,\po p_{\ar{f}}^{(j)}$.
It may happen that two occurrences, say $f^{(j)}$ and $f^{(j')}$ have the same inputs
$(\po p_1^{(j)},\ldots,\po p_{\ar{f}}^{(j)})=(\po p_1^{(j')},\ldots,\po p_{\ar{f}}^{(j')})$.
We identify them by the equivalence relation $j\sim j'$.
Note here that the number
$\sum_{f \mbox{\tiny \ occuring in\ } \po p} \card{N_f}$
determines the size of the polynomial $\po p$,
while $\sum_{f \mbox{\tiny \ occuring in\ } \po p} \card{N_f}/\!\!\sim$ is (the lower bound for) the size of the corresponding algebraic circuit.
Now, going recursively down with the formula (\ref{rec:basic})
along the construction of the polynomial/circuit from the basic operations we end up with
\[
\h{\po p}(\o{d'}) =
\sum_{f} \sum_{j\in N_f} \beta^{(j)}_f \cdot \h{f}(\po{p}_1^{(j)}(\o{d'}),\ldots,\po{p}_{\ar{f}}^{(j)}(\o{d'})),
\]
where the $\beta^{(j)}_f$'s are algebraic combinations of appropriate coefficients of the form
$\alpha_i^g$'s for the basic operations $g$'s.
Since the values $\h{f}(\po{p}_1^{(j)}(\o{d'}),\ldots,\po{p}_{\ar{f}}^{(j)}(\o{d'}))$
are the same for the $j$'s from one $\sim$-class we can sum corresponding
$\beta^{(j)}_f$'s to shorten our expression appropriately
(this may be important for the size of the circuit we are going to construct).

\medskip
After this preparatory algebraic work we return to our construction of the boolean
$\ccand_d\circ\ccmod_{m}\circ\ccmod_p$-circuit
computing the boolean function of the program
$\progg{\po p}{n}{\iota}{\ccc}$.
Note first that for $\o b\in \bool^n$ we have
$\progb{\po p}{\iota}{\ccc}(\o b)=\true$   
iff $\prog{\po p}{\iota}(\o b)$ coincides with $\ccc$ on both coordinates: on $M$ and on $D'$,
or in other words iff $\progb{\po p/\beta}{\iota/\beta}{\ccc/\beta}(\o b)=\true$
and $\progb{p^M}{\iota}{\ccc^M}(\o b)=\true$,
for the quotient program $\progg{\po p/\beta}{n}{\iota/\beta}{\ccc/\beta}$
and the $p^M$-program $\progg{p^M}{n}{\iota}{\ccc^M}$ with the function
$p^M:D^k\map M$ determined by $p^M(d_1,\ldots,d_k)=(\po p(d_1,\ldots,d_k))^M$, respectively.
Thus the circuit computing $\progb{\po p}{\iota}{\ccc}$ will be obtained first by gluing  (by the binary $\ccand$-gate) the $\ccand_d\circ\ccmod_{m}\circ\ccmod_p$-circuits
for $\progb{\po p/\beta}{\iota/\beta}{\ccc/\beta}$ and $\progb{p^M}{\iota}{\ccc^M}$ and then eliminating this binary $\ccand$-gate with the help of \cref{lm:apply_func}.

The circuit computing the function $\progb{\po p/\beta}{\iota/\beta}{\ccc/\beta}$ is supplied by our recursion hypothesis for the program $\progg{\po p/\beta}{n}{\iota/\beta}{\ccc/\beta}$.
However the circuit for $\progb{p^M}{\iota}{\ccc^M}$ requires more work.
Our previous considerations combined with the instructions $\iota(x_i)=(\b^{x_i},a^{x_i}_0,a^{x_i}_1)$ allow us to compute the projection of
$\prog{p^M}{\iota}(\o b)$ onto $M$ to be
\[
\prog{p^M}{\iota}(\o b) =
\sum_{i=1}^{\ar{\po p}} \alpha^{\po p}_i\cdot (a^{x_i}_{\b^{x_i}})^M +
\\
\sum_{f} \sum_{j\in N_f} \beta^{(j)}_f \cdot
 \h{f}\left(\po{p}_1^{(j)}(\ldots,(a^{x_i}_{\b^{x_i}})^{D'},\ldots),\ldots,
\po{p}_{\ar{f}}^{(j)}(\ldots,(a^{x_i}_{\b^{x_i}})^{D'},\ldots)\right)
\]
or in other words
\[
\prog{p^M}{\iota}(\o b) =
\sum_{i=1}^{\ar{\po p}} \alpha^{\po p}_i\cdot (\prog{x_i}{\iota}(\o b))^M +
\\
\sum_{f} \sum_{j\in N_f} \beta^{(j)}_f \cdot
\h{f}\left(\left(\prog{\po p_1^{(j)}}{\iota}(\o b)\right)^{D'},\ldots,
\left(\prog{\po p_{\ar{f}}^{(j)}}{\iota}(\o b)\right)^{D'}\right),
\]
where $\prog{x_i}{\iota}$ is the inner function determined by applying instruction $\iota$ to the polynomial\break $\po q_i(x_1,\ldots,x_k)=x_i$.
By \cref{lm:and-sum}, the summands $\alpha^{\po p}_i\cdot (\prog{x_i}{\iota}(\o b))^M$,
as (essentially) the functions of the form $\bool\map M$, can be computed
by $\sumpk{p}{k}$-circuits of size $O(1)$,
so that also their sum can be computed
by $\ar{\po p}$-ary $\sumpk{p}{k}$-circuits of size $O(1)$.
To represent the other summands in the above display by
$\ccand_d\circ\ccmod_{m}\circ\ccmod_p
\circ\ccand_{\card{D'}}\circ\sumpk{p}{k}$-circuits
we first replace
\[
\h{f}\left(\left(\prog{\po p_1^{(j)}}{\iota}(\o b)\right)^{D'},\ldots,
\left(\prog{\po p_{\ar{f}}^{(j)}}{\iota}(\o b)\right)^{D'}\right)
\]
by

\begin{multline*}
\fcirc{\h{f}} \left( \progb{\po p_1^{(j)}}{\iota/\beta}{d_1/\beta}(\o b), \ldots, \progb{\po p_1^{(j)}}{\iota/\beta}{d_s/\beta}(\o b), \ldots, \right. \\
\quad \left. \progb{\po p_{\ar{f}}^{(j)}}{\iota/\beta}{d_1/\beta}(\o b), \ldots, \progb{\po p_{\ar{f}}^{(j)}}{\iota/\beta}{d_s/\beta}(\o b), \ldots \right)
\end{multline*}

\noindent where $d_1,\ldots,d_s$ is the transversal of the quotient $D/\beta$.
By induction hypothesis each of the functions
$\progb{\po p_i^{(j)}}{\iota/\beta}{d_l/\beta}:\bool^n\map D'$
is computable by $\ccand_d\circ\ccmod_{m}\circ\ccmod_p$-circuit
of size $O(\csize)$.
On the other hand, again by \cref{lm:and-sum},
the function $\fcirc{\h{f}}:\bool^{\card{D'}\cdot\ar{f}}\map M$ is computable by
$\ccand_{\card{D'}\cdot\ar{f}}\circ\sumpk{p}{k}$-circuit of size $O(2^{\maxar{D}\cdot s})$,
or simply $O(1)$ as it does not depend on $\csize$ or $\ell$.
Summing all those $\ell=\sum_{f \mbox{\tiny \ occuring in\ } \po p} \card{N_f}$
summands (with appropriate scalars) we end up
with $\ccand_d\circ\ccmod_{m}\circ\ccmod_p
\circ\ccand_{\card{D'}\cdot\ar{f}}\circ\sumpk{p}{k}$-circuit of size $O(\csize\cdot\ell)$
computing $\prog{p^M}{\iota}$.
Now \cref{lm:5to3} allows us to replace the 5-level circuit,
in which $\sumpk{p}{k}$ is replaced by $\sumpk{p}{k}^{c^M}$,
computing $\progb{p^M}{\iota}{c^M}$ by a
$\ccand_d\circ\ccmod_{m}\circ\ccmod_p$-circuit
of size $O((\csize\cdot\ell)^{\nu p^3\card{D'}\maxar{D}})$.
Finally \cref{lm:apply_func} allows us to glue, by binary $\ccand$,
this circuit with the $\ccand_d\circ\ccmod_{m}\circ\ccmod_p$-circuit
for $\progb{\po p/\beta}{\iota/\beta}{\ccc/\beta}$
(of size $O(\csize)$) to end up with a circuit
of size $O((\csize\cdot\ell)^{2\nu p^4\card{D'}\maxar{D}})$
computing $\progb{\po p}{\iota}{\ccc}$ in $\m D$.

Note that $\card{D}=\card{D'}\cdot p^\nu$ so that $\nu\cdot\card{D'}\leq\card{D}$
and consequently the size of the final circuit can be bounded by
$O((\csize\cdot\ell)^{2p^4\card{D}\maxar{D}})$,
as claimed in (\ref{section:easy}.\ref{mod-beta}).
\end{proof}

Note here that the degree of the polynomial bounding the size of the circuit,
i.e.  $O(\card{A}\cdot\maxar{A})^{O(\log\card{A})}$ have two sources.
The $\log\card{A}$ comes from the number of iterative use of claim (\ref{section:easy}.\ref{mod-beta}), i.e. from the hight of the congruence lattice of $\m A$.
The $(\card{A}\cdot\maxar{A})$-part is a consequence of arity of a binary expansion
$\fcirc{f} : \bool^{\card{A}\cdot\ar{f}} \map A$ coding basic operations $f$.
The careful reader can easily note some room for improvement here to
$O(\log\card{A}\cdot\maxar{A})^{O(\log\card{A})}$ by coding the elements of $A$ with $\log\card{A}$ bits and therefore shrinking the arity of $\fcirc{f}$ to $\log\card{A}\cdot\ar{f}$.
\section{Final Remarks}

Constant Degree Hypothesis plays crucial role in our proofs of \rptime algorithms existence. One can ask if this assumption is really needed. In fact, there are some  unconditional results. For example very recent paper \cite{KawalekKK19} shows that \ceqv{\m A} is in $\ptime$ whenever $\m A$  from CM is $2$-nilpotent, i.e. it has abelian congruence with abelian quotient. Also supernilpotent algebras admit (unconditional) polynomial-time algorithm for \csat{}/\ceqv{} \cite{aichmud-2010, komp2017, IdziakK22}, and if we allow random bits the time complexity drops down to linear \cite{KawalekK}. Unfortunately, it is not hard to construct for a given $d$, $m$, $p$ an algebra $\m A$ with supernilpotent rank equal $2$ such that functions computable by $\ccand_d \circ \ccmod_m \circ \ccmod_p$-circuit are exactly functions computable by, not too long, programs over $\m A$.  This, together with our results, shows that (under ETH) showing unconditional algorithms solving \progcsat{} for algebras from congruence modular variety with supernilpotent rank equal $2$ is equivalent to proving CDH. Hence, the natural question is if CDH holds.
\begin{prob}
Prove or disprove the Constant Degree Hyphothesis.
 \end{prob}

The natural next step in our investigations is to go outside congruence modular realm. The first problem in such a case is that Tame Congruence Theory and Commutator Theory (our heavily used tools) for arbitrary algebras do not work as well as for algebras from congruence modular varieties.  Moreover \cite[Example 2.8]{IdziakK22} shows that there is an algebra $\m A$ (not contained in congruence modular variety) and its congruence $\sigma$ such that $\csat{\m A}$ is in $\ptime$, while $\csat{\m A/\sigma}$ is \npc. This suggests that we cannot expect a nice characterization of polynomial-time cases for \csat{} outside CM. On the other hand, we are not aware of any such examples for \progcsat{} and \ceqv{}. In fact we saw that the hardness of $\progcsat{\m A/\sigma}$ implies the hardness for $\progcsat{\m A}$. This gives hope for characterization of tractable cases of \progcsat{} for general finite algebras.

\begin{prob}
 Characterize finite algebras with tractable \progcsat{}/\ceqv{}.
 \end{prob}

\bibliographystyle{alpha}
  \bibliography{equations}
\end{document}